\numberwithin{equation}{section}
\newtheorem{proposition}{Proposition}[section] 
\newtheorem{theorem}{Theorem}[section]
\newtheorem{lemma}{Lemma}[section] 
\newtheorem{assumption}{Assumption}[section] 
\newtheorem{definition}{Definition}[section] 
\newtheorem{corollary}{Corollary}[section] 
\newtheorem{remark}{Remark}[section]
\newcommand{\C}{\mathbb{C}}
\newcommand{\N}{\mathbb{N}}
\newcommand{\R}{\mathbb{R}}
\newcommand{\Z}{\mathbb{Z}}
\newcommand{\Dcon}{\stackrel{\mathcal{D}}{\rightarrow}}
\newcommand{\Pcon}{\stackrel{\mathcal{P}}{\rightarrow}}
\newcommand{\ubb}{\boldsymbol{u}}
\newcommand{\vbb}{\boldsymbol{v}}
\newcommand{\rb}{\boldsymbol{r}}
\newcommand{\ob}{\boldsymbol{\omega}}
\newcommand{\btheta}{\boldsymbol{\theta}}
\newcommand{\bbeta}{\boldsymbol{\beta}}
\newcommand{\balpha}{\boldsymbol{\alpha}}
\newcommand{\bOmega}{\boldsymbol{\Omega}}
\newcommand{\bxi}{\boldsymbol{\xi}}
\newcommand{\kb}{\boldsymbol{k}}
\newcommand{\xb}{\boldsymbol{x}}
\newcommand{\yb}{\boldsymbol{y}}
\newcommand{\zb}{\boldsymbol{z}}
\newcommand{\tb}{\boldsymbol{t}}
\newcommand{\bb}{\boldsymbol{b}}
\newcommand{\aB}{\boldsymbol{A}}
\newcommand{\cov}{\mathrm{cov}}
\newcommand{\var}{\mathrm{var}}
\newcommand{\cum}{\mathrm{cum}}
\newcommand{\Ex}{\mathbb{E}}
\newcommand{\diag}{\mathrm{diag}}
\title{Pseudo-spectra of multivariate inhomogeneous spatial point processes}
\author[1]{Qi-Wen Ding\thanks{The first two authors are ordered alphabetically.}
}
\author[1]{Junho Yang
}
\author[2]{Joonho Shin\thanks{\textit{Emails}: \texttt{\{qwding,junhoyang\}@stat.sinica.edu.tw, joonho.shin@sungshin.ac.kr}}
}
\affil[1]{Institute of Statistical Science, Academia Sinica}
\affil[2]{School of Mathematics, Statistics and Data Science, Sungshin Women's University}
\date{\today}
\begin{document}

\maketitle

\begin{abstract}

In this article, we propose a spectral method for a class of multivariate inhomogeneous spatial point processes, namely the second-order intensity reweighted stationary processes. A key ingredient of our approach is utilizing the asymptotic behavior of the periodogram. For second-order stationary point processes, the periodogram is an asymptotically unbiased estimator of the spectrum. By calculating the moment, we show that under inhomogeneity, the expectation of the periodogram converges to a matrix-valued function, which we refer to as the pseudo-spectrum. The pseudo-spectrum shares similar properties with the spectrum of stationary processes and admits interpretation in terms of local parameters. We derive a consistent nonparametric estimator of the pseudo-spectrum via kernel smoothing and propose two bandwidth selection methods. The performance and utility of the proposed methods are demonstrated through simulation studies and an application to rainforest point pattern data.

\vspace{0.5em}

\noindent{\it Keywords and phrases: BCI data, cross-validation, intensity reweighted processes, local spectrum, optimal bandwidth selection} 

\end{abstract}

\section{Introduction}

As an alternative to time and spatial domain approaches, frequency domain methods have been extensively studied for time series and random fields. For point pattern data, the spectral analysis of spatial point processes dates back to the seminal work of Bartlett in the 1960s (\cite{p:bar-64}), who defined the spectrum of two-dimensional point processes and proposed using the periodogram as its estimator. Nevertheless, analogous frequency domain tools for spatial point processes remain largely underdeveloped, due to the technical difficulties in handling point data that are scattered irregularly in space. In recent years, with advancements in techniques for analyzing irregularly spaced data (e.g., \cite{p:mat-09}), some theoretical progress on discrete Fourier transforms (DFTs) and periodograms for spatial point patterns have been established. For example, \cite{p:raj-23} calculated the moments of the DFTs and periodograms at fixed frequencies and \cite{p:yan-24} extended these results by considering frequencies that vary with $n$ and proved new $\alpha$-mixing CLTs for the integrated periodogram.

The aforementioned approaches assume that the data-generating point process is second-order stationary (SOS), where the spectrum is well-defined via the inverse Fourier transform of the complete covariance function with the reduced form. However, in applications, it is more realistic to assume that the underlying process is inhomogeneous over the domain. This violates the SOS condition, in turn, the spectrum is not well-defined. The same issue also arises in other types of spatial data. In geostatistics, a common remedy is to remove the spatial trends using appropriate de-trending techniques such as universal kriging (cf. \cite{b:cre-15}), then apply SOS models to the residuals. Such methods, however, are not applicable to point pattern data, necessitating the development of unique frequency domain tools for inhomogeneous spatial point processes.

Therefore, to keep pace with the increasing demand for analyzing inhomogeneous point pattern data, we aim to extend the concept of the spectrum for SOS point processes to a widely used class of multivariate inhomogeneous point processes, namely the second-order intensity reweighted stationary processes (SOIRS; \cite{p:bad-00}). Specifically, by utilizing the asymptotic behavior of the periodogram, we define the so-called pseudo-spectrum for SOIRS processes (Section \ref{sec:spect}). The pseudo-spectrum possesses properties similar to those of the spectrum for SOS processes, such as being conjugate symmetric and positive definite. More intriguingly, the pseudo-spectrum can be represented as a weighted integral of the spectra of certain SOS processes with a localization factor, enabling the extraction of frequency domain features for inhomogeneous point processes (Section\ref{sec:local}).

Moving on, we investigate the estimation of the pseudo-spectrum. In Section \ref{sec:feasibleDFT}, we propose a nonparametric estimator of the pseudo-spectrum via kernel smoothing of the periodogram. In Section \ref{sec:asymp}, we provide the asymptotic properties of our estimator. It is worth emphasizing that the asymptotic properties are proven even in the case where the DFT is demeaned under an incorrectly specified first-order intensity, which is the first result showing the validity of the second-order estimator for inhomogeneous point processes under a misspecified model. In Section \ref{sec:bandwidth}, we propose two kernel bandwidth selection methods, namely the optimal MSE criterion and the frequency domain cross-validation criterion. We illustrate our proposed methods through simulations (Section \ref{sec:sim}) and a real-data application of five-variate rainforest point pattern data (Section \ref{sec:data}).

Note that our frequency domain estimation procedures are developed under a triangular array formulation of spatial point processes (see Definition \ref{def:infill} below), which is in the same spirit as the widely adopted asymptotic setting for nonstationary time series (cf. \cite{p:dah-97}). This framework has also been successfully extended to construct nonstationary (one-dimensional) Hawkes processes (\cite{p:rou-16, p:mam-23}) and nonstationary random fields (\cite{p:kur-22, p:lah-24}), the latter of which can be viewed as marked inhomogeneous Poisson point processes. It should be emphasized that our new asymptotic framework differs from the two classical regimes for spatial point processes: increasing domain asymptotics (\cite{p:gau-07}) and infill asymptotics (\cite{p:waa-08}). Among the three, our framework appears to be the only promising approach that guarantees the well-definedness of the frequency domain estimand with solid theoretical support.  

Lastly, technical details, proofs of the results, and supplements for the simulation and data analysis can be found in the Appendix.

\section{Pseudo-spectrum of inhomogeneous point process} \label{sec:spect}

\subsection{Preliminaries} \label{sec:intensity12}

Throughout this article, for a vector $\vbb = (v_1, \dots, v_d)^\top \in \C^{d}$, let $|\vbb| = \sum_{j=1}^{d} |v_j|$, $\|\vbb\| = \{\sum_{j=1}^{d} |v_j|^2\}^{1/2}$, and $\|\vbb\|_{\infty} = \max_{1\leq j \leq d} |v_j|$. For a matrix $A \in \C^{m \times n}$, $|A| = \sum_{i,j} |A^{(i,j)}|$ denotes the entrywise sum. For matrices $A = (A^{(i,j)})$ and $B = (B^{(i,j)})$ in $\C^{m \times n}$ (which include vectors by setting $n=1$), we define $A \odot B = (A^{(i,j)} B^{(i,j)}) \in \C^{m \times n}$ and $A \oslash B = (A^{(i,j)} / B^{(i,j)}) \in \C^{m \times n}$ to be the entrywise product and division respectively, provided $B^{(i,j)} \neq 0$. For $p \in [1,\infty)$, $L_{p}^{m \times n}(\R^d)$ denotes the set of all $m \times n$ matrix-valued measurable function $G$ on $\R^d$ such that $\int_{\R^{d}} |G(\xb)|^p d\xb <\infty$. For $G$ belonging to either $L_1^{m \times n}(\R^d)$ or $L_2^{m \times n}(\R^d)$, the Fourier transform and inverse Fourier transform are respectively defined as $\mathcal{F}(G)(\cdot) = \int_{\R^d} G(\xb) \exp(i \xb^\top \cdot) d\xb$ 
and $\mathcal{F}^{-1}(G)(\cdot) = (2\pi)^{-d} \int_{\R^d} G(\xb) \exp(-i \xb^\top \cdot) d\xb$. Lastly, $\Pcon$ denotes convergence in probability.

Now, let $\underline{X} = (X_1, \dots, X_m)$ be an $m$-variate simple spatial point process defined on $\R^d$. 
For $A \in \mathcal{B}(\R^{d})$, the Borel set of $\R^{d}$, let $\underline{N}(A) = (N_1(A), \dots, N_m(A))^{\top}$ be the count random vector that counts the number of events of $\underline{X}$ in $A$. 
We assume the first- and second-order cumulant moment measures of $\underline{X}$ exist and are absolutely continuous with respect to their corresponding Lebesgue measures. In other words, there exist vector-valued first-order intensity function and the matrix-valued covariance intensity function of $\underline{X}$, denoted as $\underline{\lambda}_1(\cdot) = (\lambda_1^{(1)}(\cdot),\cdots, \lambda_1^{(m)}(\cdot))^\top:\R^d \rightarrow \R^m$ and $\Gamma_2(\cdot, \cdot) = (\gamma_2^{(i,j)}(\cdot, \cdot))_{1\leq i,j \leq m}: \R^{d}\times \R^d \rightarrow \R^{m\times m}$, that satisfy
\begin{equation} \label{eq:lambda-vector}
\Ex[\underline{N}(A)] = \int_{A} \underline{\lambda}_1(\xb) d\xb 
\quad \text{and} \quad
\cov(\underline{N}(A), \underline{N}(B)) =  \iint_{A\times B} \Gamma_{2} (\xb,\yb) d\xb d\yb
\end{equation}
for any disjoint $A, B \in \mathcal{B}(\R^d)$. 
By using this notation, we define the ``complete'' covariance intensity function of $\underline{X}$ in the sense of \cite{p:bar-64} by
\begin{equation} \label{eq:complete-cov-mat}
\widetilde{C}(\xb,\yb) = (\widetilde{C}^{(i,j)} (\xb, \yb))_{1\leq i,j \leq m}
=\diag(\underline{\lambda}_1(\xb)) \delta(\xb-\yb) +
\Gamma_{2}(\xb,\yb),
\end{equation} 
where $\diag$ denotes the diagonal matrix and $\delta(\cdot)$ denotes the Dirac delta function. Then, $\widetilde{C}(\xb,\yb)$ satisfies $\cov(\underline{N}(A), \underline{N}(B)) =  \iint_{A\times B} \widetilde{C}(\xb,\yb) d\xb d\yb$ for any $A, B \in \mathcal{B}(\R^d)$ that do not necessarily have to be disjoint.

\subsection{A working example: second-order stationary case}
Now, we first assume that $\underline{X}$ is SOS and review the frequency domain representation of (\ref{eq:complete-cov-mat}). Under SOS framework, there exists $\underline{\lambda}_1= (\lambda_1^{(1)}, \dots, \lambda_1^{(m)})^\top$ and $\Gamma_{2,\text{red}}(\cdot) = (\gamma_{2,\text{red}}^{(i,j)}(\cdot))_{1\leq i,j \leq m}$ such that
\begin{equation*} 
\underline{\lambda}_1(\xb)  = \underline{\lambda}_{1}
\quad \text{and} \quad
\Gamma_{2}(\xb,\yb)  = \Gamma_{2,\text{red}}(\xb-\yb).
\end{equation*} 
We refer to $\Gamma_{2,\text{red}}(\cdot)$ as the reduced covariance intensity function of $\underline{X}$. Then, the complete covariance intensity in (\ref{eq:complete-cov-mat}) also has the following reduced form:
\begin{equation} \label{eq:cov-mat}
\widetilde{C}(\xb, \yb) = \diag(\underline{\lambda}_1) \delta(\xb-\yb)
+ \Gamma_{2,\text{red}}(\xb-\yb)=: C(\xb-\yb).
\end{equation} We refer $C(\cdot)$ to as the reduced complete covariance intensity function.
Provided that $\Gamma_{2,\text{red}} \in L_1^{m \times m}(\R^d)$, the matrix-valued spectrum of $\underline{X}$ is defined as the inverse Fourier transform of $C$:
\begin{equation} \label{eq:spectrum-mat}
F(\ob) = \mathcal{F}^{-1}(C)(\ob) =  
 (2\pi)^{-d} \diag(\underline{\lambda}_1)
+ \mathcal{F}^{-1}(\Gamma_{2,\text{red}})(\ob), \quad \ob \in \R^d.
\end{equation} 
Then, $F(\cdot)$ satisfies the usual conditions for the spectrum of multivariate time series, such as $F(\cdot)$ being conjugate symmetric and positive definite (see \cite{b:dal-03}, Chapter 8.6).

To estimate the spectrum of $\underline{X}$, we consider the DFT of the observed point pattern. Let $\{D_n\}$ be a sequence of common windows of $X_1, \dots, X_m$ of the form:
\begin{equation} \label{eq:Dn}
D_n = [-A_1/2,A_1/2] \times \cdots \times  [-A_d/2,A_d/2], \quad n \in \N,
\end{equation}
where, for each $j$, $\{A_{j} = A_j(n)\}_{n \in \N}$ denotes an increasing sequence of positive numbers. When computing the DFT, data tapering is often applied to mitigate the bias of the periodogram (cf. \cite{p:dah-87}). Let $h(\xb)$, $\xb \in \R^d$, be a non-negative data taper with support $[-1/2, 1/2]^d$, and let $H_{h,k} = \int_{[-1/2,1/2]^d} h(\xb)^k \, d\xb > 0$ for all $k \in \N$. Using this notation, the (tapered) DFT of $\underline{X}$ is defined as 
$\underline{\mathcal{J}}_{h,n}(\ob) = (\mathcal{J}_{h,n}^{(1)}(\ob), \dots, \mathcal{J}_{h,n}^{(m)}(\ob))^\top$, where
\begin{equation} \label{eq:mathcalJn}
\mathcal{J}_{h,n}^{(j)}(\ob) 
= (2\pi)^{-d/2} H_{h,2}^{-1/2} |D_n|^{-1/2} 
\sum_{\xb \in X_j \cap D_n}
h(\xb \oslash \aB) \exp(-i \xb^\top \ob).
\end{equation} 
Here, $\aB = (A_1, \dots, A_d)^\top$ denotes the vector of side lengths. Lastly, the periodogram is defined as
\begin{equation} \label{eq:In}
I_{h,n}(\ob) = \underline{J}_{h,n}(\ob)\,\underline{J}_{h,n}(\ob)^*,
\end{equation}
where $A^*$ denotes the Hermitian operator and
$\underline{J}_{h,n}(\cdot) = \underline{\mathcal{J}}_{h,n}(\cdot) - \Ex[\underline{\mathcal{J}}_{h,n}(\cdot)]$
is the centered DFT (we will elaborate on the expression of $\Ex[\underline{\mathcal{J}}_{h,n}(\ob)]$ in Section \ref{sec:feasibleDFT}). 

For the statistical properties, \cite{p:yan-24} (hereafter YG24) showed that under the increasing domain framework, the periodogram is an asymptotically unbiased estimator of the spectrum:
\begin{equation} \label{eq:I-SOS}
\lim_{n \to \infty} \Ex[I_{h,n}(\ob)] = F(\ob), \quad \ob \in \R^d.
\end{equation}


\subsection{The pseudo-spectrum under a new asymptotic framework} \label{sec:pseudo}

In this section, we extend the frequency domain representation of SOS processes to inhomogeneous processes, where the first-order intensity $\underline{\lambda}_1(\cdot)$ may vary with location. In this scenario, the complete covariance intensity does not admit the reduced form described in (\ref{eq:cov-mat}); in turn, the spectrum is not well-defined. If, however, there is no structure imposed on the intensity functions, it remains unclear what the periodogram estimates in the inhomogeneous case. In this context, we confine our attention to a class of SOIRS processes proposed by \cite{p:bad-00}, which is a widely used inhomogeneous point process model in the literature. Specifically, we say $\underline{X}$ is an $m$-variate SOIRS process if there exists $L_2(\cdot) = (\ell_2^{(i,j)}(\cdot))_{1 \leq i,j \leq m}$ such that
\begin{equation} \label{eq:IRS-cov}
 \Gamma_{2}(\xb,\yb) = \diag(\underline{\lambda}_{1}(\xb)) \, L_2(\xb-\yb) \, \diag(\underline{\lambda}_{1}(\yb)). 
\end{equation} 
An equivalent definition is that there exists an $m$-variate SOS process $\underline{\widetilde{X}}$ whose first-order intensity and reduced covariance intensity are given by $(1, \dots, 1)^\top$ and $L_2(\cdot)$. Here, we refer to $\underline{\widetilde{X}}$ as the intensity reweighted process of $\underline{X}$. See Appendix \ref{sec:construction} for the detailed construction.

Now, we propose a counterpart concept of the spectrum. A key idea is to leverage the limiting behavior of the periodogram. To do so, we introduce a new asymptotic framework, which is built upon a class of triangular arrays of random fields by \cite{p:hal-94} (see also \cite{p:lah-03}). This triangular array formulation underpins the concept of locally stationary time series introduced by Dahlhaus (\cite{p:dah-97}), a widely used class of nonstationary time series that can be locally approximated by stationary ones. See also \cite{p:rou-16, p:mam-23} for an extension of the concept of local stationarity to a nonstationary Hawkes point process.

Below, we formally define a locally stationary SOIRS process.

\begin{definition}[Locally stationary SOIRS process] \label{def:infill}
Let $\{\underline{X}_{D_n}\}$ ($n\in \N$) be a sequence of $m$-variate spatial point processes, where for $n\in \N$, $\underline{X}_{D_n}$ denotes a spatial point process observed in $D_n \subset \R^{d}$ of the form (\ref{eq:Dn}). Let $\underline{\lambda}_{1,n}(\cdot)$ and $\Gamma_{2,n}(\cdot,\cdot)$ be the first-order intensity and covariance intensity function of $\underline{X}_{D_n}$. We say $\{\underline{X}_{D_n}\}_{}$ is locally stationary SOIRS process if it satisfies the following two structural assumptions:
\begin{itemize}
\item[(i)] There exists a reduced covariance intensity function $L_{2}(\cdot) = (\ell_2^{(i,j)}(\cdot))_{1 \leq i,j \leq m}$ that does not depend on $n \in \N$ such that
\begin{equation*}
\Gamma_{2,n}(\xb,\yb) = \diag(\underline{\lambda}_{1,n}(\xb)) L_2(\xb-\yb)  \diag(\underline{\lambda}_{1,n}(\yb)), \quad n \in \N,~~ \xb, \yb \in D_n.
\end{equation*}

\item[(ii)] There exists a non-negative function $\underline{\lambda}(\xb) = (\lambda^{(1)}(\xb), \dots, \lambda^{(m)}(\xb))^\top$, $\xb \in \R^d$, with support $[-1/2,1/2]^d$ such that
\begin{equation}  \label{eq:lambda-infill}
\underline{\lambda}_{1,n}(\xb) = \underline{\lambda}(\xb \oslash \aB), \quad n \in \N, \quad \xb \in D_n.
\end{equation}
\end{itemize}
\end{definition}
The first condition above yields that, as the domain increases, we gain information about the common reweighted (reduced) covariance intensity $L_2(\cdot)$.  
The second condition states that the first-order intensity function has a regular structure which ensures that we can also learn the structure of the first-order intensity as $n \rightarrow \infty$. 
As a special case, when $\underline{\lambda}(\xb)$ is constant over $\xb \in [-1/2,1/2]^d$, Definition \ref{def:infill} incorporates the usual increasing domain framework for SOS point processes. 
In general, both conditions above are essential for constructing the frequency domain estimand and the periodogram-based asymptotic theory for inhomogeneous processes.

Now, let 
$H_{h^2 \underline{\lambda}}  = (H_{h^2\lambda^{(1)},1}, \dots, H_{h^2\lambda^{(m)},1})^\top$ and 
$H_{h^2 \underline{\lambda}\cdot\underline{\lambda}^\top} = (H_{h^2 \lambda^{(i)} \lambda^{(j)},1})_{1 \leq i,j \leq m}$.
Then, in Theorem \ref{thm:asymDFT-IRS} in the Appendix, we show that under the above asymptotic framework, we have
\begin{equation*}
\lim_{n \rightarrow \infty} \Ex[I_{h,n}(\ob)] = 
(2\pi)^{-d} H_{h,2}^{-1} \diag (H_{h^2 \underline{\lambda}})
+ H_{h,2}^{-1} \left( H_{h^2 \underline{\lambda}\cdot\underline{\lambda}^\top} \odot
\mathcal{F}^{-1}(L_2)(\ob) \right).
\end{equation*}  
This indicates that even though the spectrum is not well-defined for an SOIRS process, the expectation of the periodogram still converges to a function with a closed-form expression. Therefore, using the above result in conjunction with (\ref{eq:I-SOS}), we now define the pseudo-spectrum.

\begin{definition}\label{defin:pseudo}
Let $\{\underline{X}_{D_n}\}$ be an $m$-variate locally stationary SOIRS process. Suppose further that $L_2(\cdot)$ in Definition \ref{def:infill}(i) belongs to $L^{m \times m}_1(\R^d)$. Then, the pseudo-spectrum of $\{\underline{X}_{D_n}\}$ corresponding to the data taper $h$ is defined as
\begin{equation} \label{eq:f-IRS-mat}
F_{h}(\ob) =  (2\pi)^{-d} H_{h,2}^{-1} \diag (H_{h^2 \underline{\lambda}})
+ H_{h,2}^{-1} \left( H_{h^2 \underline{\lambda}\cdot\underline{\lambda}^\top} \odot
\mathcal{F}^{-1}(L_2)(\ob) \right).
\end{equation} 
\end{definition}
We refer the reader to Proposition \ref{prop:Fh} in the Appendix for mathematical properties of $F_h$.

\begin{remark}[Comparison between the spatial domain and frequency domain estimands] \label{rmk:Comparision}
The $K$-function of the intensity reweighted process $\underline{\widetilde{X}}$ (often called the inhomogeneous $K$-function) is a widely used spatial domain estimand for SOIRS processes. A nonparametric estimator of the inhomogeneous $K$-function is given in Equation (5) of \cite{p:bad-00}, and under the increasing domain framework, the consistency of this estimator can be established. However, if we rely only on the increasing domain framework, a similar estimation approach does not work in the frequency domain. See Appendix \ref{sec:K-function} for detailed explanations.

The main distinction between the spatial domain and frequency domain estimands is that the former depends solely on the reweighted covariance intensity $L_2(\cdot)$, whereas the latter is a nonseparable function of both the first-order intensity $\underline{\lambda}_1(\cdot)$ and $L_2(\cdot)$. Therefore, to obtain a consistent estimator in the frequency domain, it is essential to learn the structure of both $\underline{\lambda}_{1}(\cdot)$ and $L_2(\cdot)$ as $n \rightarrow \infty$, a requirement not satisfied by either increasing domain or infill asymptotics.
\end{remark}

\section{Interpretation of the pseudo-spectrum} \label{sec:local}

\subsection{The local representation} \label{sec:Tri}
Note that our new asymptotic framework in Definition \ref{def:infill} is neither an infill asymptotics nor an increasing domain asymptotics. Specifically, as $n \rightarrow \infty$, the infill asymptotic framework assumes that the process is observed more densely on a fixed finite area, whereas under the increasing domain asymptotics, $\{D_n\}$ denotes observational windows that grow with $n$. However, it is still unwieldy to interpret the role of $\{D_n\}$ in our asymptotic regime and to understand why it is referred to as the ``locally stationary'' asymptotics.

To develop these ideas, we first focus on the univariate complete covariance intensity $\widetilde{C}_{n}(\xb,\yb)$ of $X_{D_n}$. 
By using the reparametrization $(\ubb,\rb)= (\xb \oslash \aB,\xb-\yb)$, we obtain
\begin{eqnarray*}
\widetilde{C}_n(\xb,\yb) &=& \lambda_{1,n}(\xb) \delta (\rb) + \gamma_{2,n}(\xb,\xb-\rb) \\ 
&=& \lambda(\ubb) \delta (\rb) + \lambda(\ubb) \lambda(\ubb-(\rb \oslash \aB)) \ell_2(\rb) =: \widetilde{C}_n^{\ubb} (\rb).
\end{eqnarray*} 
Suppose $\lambda(\cdot)$ is smooth on $[-1/2,1/2]^d$. Then, for a fixed $\ubb \in [-1/2,1/2]^d$, $\widetilde{C}_n^{\ubb}(\rb)$ can be closely approximated by $\lambda(\ubb) \delta(\rb) + \{\lambda(\ubb)\}^2 \ell_2(\rb)$ for spatial lag $\rb$ satisfying $\|\rb \oslash \aB\| = o(1)$.

A similar argument extends to the multivariate case. That is, for sufficiently large $n \in \N$,
\begin{equation} \label{eq:local-approx}
(\widetilde{C}_n^{\ubb})^{(i,j)}(\cdot) \approx 
\lambda^{(i)}(\ubb) \delta(\cdot) \delta_{i,j} + \lambda^{(i)}(\ubb) \lambda^{(j)}(\ubb)\ell_2^{(i,j)}(\cdot), 
\quad i,j \in \{1, \dots, m\},
\end{equation}
where $\delta_{i,j} = 1$ if $i=j$ and zero otherwise. Intriguingly, in Theorem \ref{thm:local-S} below, we show that the right-hand side of \eqref{eq:local-approx} is an $(i,j)$th component of a valid reduced complete covariance intensity function of an SOS process indexed by a localization factor $\ubb \in [-1/2,1/2]^d$. This motivates the introduction of a ``local'' version of the complete covariance intensity function and spectrum.

\begin{definition}
Let $\{\underline{X}_{D_n}\}$ be an $m$-variate locally stationary SOIRS process. Then, the local complete covariance intensity function at $\ubb \in [-1/2,1/2]^d$ is defined as
\begin{equation} \label{eq:cov-local}
C^{\ubb}(\xb) = \diag (\underline{\lambda}(\ubb)) \delta(\xb) 
+ \left( \underline{\lambda}(\ubb)\underline{\lambda}(\ubb)^\top \right) \odot L_2(\xb).
\end{equation}
Suppose further that $L_2 \in L^{m \times m}_1(\R^d)$. Then, the local spectrum is defined as
\begin{equation}\label{eq:F-local}
F^{\ubb}(\ob) = \mathcal{F}^{-1}(C^{\ubb})(\ob)
= (2\pi)^{-d} \diag (\underline{\lambda}(\ubb))  
+ \left( \underline{\lambda}(\ubb)\underline{\lambda}(\ubb)^\top \right) \odot \mathcal{F}^{-1}(L_2)(\ob). 
\end{equation}
\end{definition}
By this definition, a sequence of SOIRS processes in Definition \ref{def:infill} is locally stationary in the sense that the complete covariance function $\widetilde{C}_n(\xb, \yb)$ of $\underline{X}_{D_n}$ can be locally approximated by the  reduced complete covariance function $C^{\xb \oslash \aB} (\xb - \yb)$ of SOS process. In addition, if $\lambda^{(j)}(\cdot)$ is Lipschitz continuous on $[-1/2,1/2]^d$, then
\begin{equation*} 
\big| \widetilde{C}_n(\xb, \yb) - C^{\xb \oslash \aB} (\xb - \yb) \big| 
\leq C \sum_{j=1}^{m} \left| \lambda^{(j)} (\yb \oslash \aB) - \lambda^{(j)} (\xb \oslash \aB) \right| 
= O\!\left(\| (\xb - \yb)\oslash \aB \| \wedge 1 \right)
\end{equation*}
uniformly in $\xb, \yb \in D_n$, where $a \wedge b = \min(a, b)$. This shows that the (local) approximation error bound depends how fast $D_n$ expands. The same error rate arises in the autocovariance function of locally stationary random fields by \cite{p:lah-24}. Therefore, in our framework, the sequence $\{D_n\}$ not only servers as the observation windows, as in the traditional increasing domain setting, but also determines how closely the complete covariance function can be approximated by the corresponding complete covariance function of the local SOS process.

\subsection{Frequency domain features for inhomogeneous processes} \label{sec:feature}

In YG24, Section 2.3, we showed that the spectrum of an SOS process contains some global features of the point process, such as the overall clustering/repulsive behavior and the amount of clustering/repulsiveness. However, it remains unclear how the pseudo-spectrum characterizes the behavior of the corresponding SOIRS process. To address this, the theorem below represents $F_{h}(\cdot)$ in terms of the weighted integral of the local spectra.

\begin{theorem}  \label{thm:local-S}
Let $\{\underline{X}_{D_n}\}$ be an $m$-variate locally stationary SOIRS process. Suppose further that $L_2(\cdot) \in L^{m \times m}_1(\mathbb{R}^d)$.  
Then, for every $\ubb \in [-1/2, 1/2]^d$, there exists an $m$-variate SOS process $\underline{X}^{\ubb}$ such that the spectrum of $\underline{X}^{\ubb}$ is given by $F^{\ubb}(\cdot)$. Moreover, $F_h(\cdot)$ and $\{F^{\ubb}(\cdot)\}$ are related by the following identity:
\begin{equation*}
F_h(\cdot) = \frac{1}{H_{h,2}} \int_{[-1/2,1/2]^d} h(\ubb)^2 F^{\ubb}(\cdot) \, d\ubb.
\end{equation*}
\end{theorem}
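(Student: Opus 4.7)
The strategy is to split the statement into two essentially independent tasks: (i) construct an SOS process whose spectrum is $F^{\ubb}$, and (ii) verify the integral identity by a direct calculation that reconciles Definition \ref{defin:pseudo} with \eqref{eq:F-local} integrated term by term. The heart of the argument lies in (i); once it is in hand, (ii) collapses to interchanging integration with the linear entrywise operations $\diag$ and $\odot$.

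For (i), the plan is to invoke the SOIRS construction detailed in Appendix \ref{sec:construction}. Fix $\ubb \in [-1/2, 1/2]^d$, and feed into the construction the \emph{constant} first-order intensity $\underline{\lambda}_1(\cdot) \equiv \underline{\lambda}(\ubb)$ together with the reweighted reduced covariance $L_2(\cdot)$ inherited from Definition \ref{def:infill}(i). This produces an $m$-variate SOIRS process $\underline{X}^{\ubb}$ whose covariance intensity, by \eqref{eq:IRS-cov}, equals
\begin{equation*}
\Gamma_2^{\ubb}(\xb,\yb) = \diag(\underline{\lambda}(\ubb))\, L_2(\xb-\yb)\, \diag(\underline{\lambda}(\ubb)) = (\underline{\lambda}(\ubb)\underline{\lambda}(\ubb)^\top) \odot L_2(\xb-\yb).
\end{equation*}
Because both the first-order intensity and $\Gamma_2^{\ubb}$ depend on $(\xb,\yb)$ only through $\xb-\yb$, the process $\underline{X}^{\ubb}$ is SOS. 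Its reduced complete covariance intensity in the sense of \eqref{eq:cov-mat} coincides with $C^{\ubb}$ from \eqref{eq:cov-local}, and taking the inverse Fourier transform via \eqref{eq:spectrum-mat} identifies the spectrum of $\underline{X}^{\ubb}$ as $\mathcal{F}^{-1}(C^{\ubb}) = F^{\ubb}$. The hypothesis $L_2 \in L_1^{m \times m}(\R^d)$ guarantees that this inverse Fourier transform is well-defined.

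For (ii), I would substitute \eqref{eq:F-local} into $\int h(\ubb)^2 F^{\ubb}(\ob)\, d\ubb$, pull the integral through $\diag$ and through $\odot \mathcal{F}^{-1}(L_2)(\ob)$ (both entrywise and linear in the matrix slot), and divide by $H_{h,2}$ to obtain
\begin{equation*}
\frac{1}{H_{h,2}} \int_{[-1/2,1/2]^d} h(\ubb)^2 F^{\ubb}(\ob)\, d\ubb = \frac{(2\pi)^{-d}}{H_{h,2}} \diag\!\left(\int h(\ubb)^2 \underline{\lambda}(\ubb)\, d\ubb\right) + \frac{1}{H_{h,2}} \left(\int h(\ubb)^2 \underline{\lambda}(\ubb)\underline{\lambda}(\ubb)^\top\, d\ubb\right) \odot \mathcal{F}^{-1}(L_2)(\ob).
\end{equation*}
The two remaining integrals are precisely $H_{h^2 \underline{\lambda}}$ and $H_{h^2 \underline{\lambda}\cdot\underline{\lambda}^\top}$ by their entrywise definitions, and the right-hand side then matches $F_h(\ob)$ in \eqref{eq:f-IRS-mat} term for term.

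The only nontrivial obstacle is existence in (i): manufacturing a genuine SOS point process with a prescribed first-order intensity and prescribed reduced covariance is not automatic, as arbitrary symmetric kernels need not correspond to a valid point process law. However, this is exactly the content of the Appendix \ref{sec:construction} construction applied with the constant intensity $\underline{\lambda}(\ubb)$, so no new work is required beyond invoking it; the remainder of the argument is a bookkeeping calculation matching the formula for $F_h$ against the integral of $h^2 F^{\ubb}$.
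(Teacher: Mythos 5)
Your proposal is correct and follows essentially the same route as the paper: the existence of $\underline{X}^{\ubb}$ is obtained from the Appendix \ref{sec:construction} machinery (the paper makes this explicit by taking the weighted process, in the sense of Definition \ref{def:weight}, of the intensity reweighted process $\underline{\widetilde{X}}_{D_n}$ with the constant weight vector $\underline{\lambda}(\ubb)$), and the integral identity is verified by the same term-by-term substitution of \eqref{eq:F-local} into \eqref{eq:f-IRS-mat}. No gaps.
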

\begin{proof} 
See Appendix \ref{sec:localS}.
\end{proof}

The above theorem yields that we can also extract some features of the inhomogeneous process from the pseudo-spectrum. Specifically, let $\text{Fea}(\ubb)$ be the frequency domain feature of $\underline{X}^{\ubb}$ (as in Theorem \ref{thm:local-S}) that can be extracted from its spectrum $F^{\ubb}$. Since $F^{\ubb}(\cdot)$ approximates the spectral behavior of the point process in the neighborhood of $\xb = \ubb \odot \aB$, Theorem \ref{thm:local-S} yields that the pseudo-spectrum $F_{h}$ has the same feature $\{\underline{X}_{D_n}\}$ that can be quantified as an averaged effect of all features $\text{Fea}(\ubb)$ on $\ubb \in [-1/2,1/2]^d$, with weights proportional to $h^2(\ubb)$. For example, by applying Theorem \ref{thm:local-S} with $h \equiv 1$ in $[-1/2,1/2]^d$ and using (\ref{eq:f-IRS-mat}), we obtain
\begin{equation*}
F_0^{(j,j)}(\ob) - (2\pi)^{-d} H_{\lambda^{(j)},1} = 
\int_{[-1/2,1/2]^d} \left\{ (F^{\ubb})^{(j,j)}(\ob) - (2\pi)^{-d} \lambda^{(j)}(\ubb) \right\} d\ubb.
\end{equation*}
Here, $F_0$ denotes the pseudo-spectrum for a unit taper function.
Following the interpretation in YG24, $(F^{\ubb})^{(j,j)}(\ob) - (2\pi)^{-d} \lambda^{(j)}(\ubb) > 0$ (resp., $< 0$) around the frequencies $\ob$ near the origin indicates that $X_j$ exhibits clustering (resp., repulsive) behavior in the local area near $\xb = \ubb \odot \aB$. Therefore, $F_0^{(j,j)}(\ob) - (2\pi)^{-d} H_{\lambda^{(j)},1} > 0$ (resp., $< 0$) at low frequencies implies that $X_j$ is ``overall'' clustered (resp., repelled) across the entire domain. We refer the reader to Section \ref{sec:data} for the additional frequency domain features through the real data example.


\section{Nonparametric estimates of the pseudo-spectrum} \label{sec:kernel}

In this section, we propose a nonparametric estimator of $F_{h}$ and investigate its large sample properties. Here, we assume the data taper $h$ is known, so we do not study the choice of the data taper function. However, the asymptotic results provided in this section hold for a very general class of data tapers and our simulations show that the choice of the data taper function does not significantly affect the performance of the estimator.

\subsection{Estimation of the bias under the possible misspecified first-order intensity model} \label{sec:feasibleDFT}
For the non-negative function $h$ with support $[-1/2,1/2]^d$ and for $k \in \N$, let
\begin{equation}  \label{eq:Hkn}
H_{h,k}^{(n)}(\ob) = \int_{D_n} h(\xb \oslash \aB)^k \exp(-i \xb^\top \ob) \, d\xb.
\end{equation}

Then, by using the celebrated Campbell's formula (e.g., \cite{p:zhu-25}, Equation (B.6)), the expectation of the DFT for a locally stationary SOIRS process is given by
\begin{equation} \label{eq:J-expectation}
\Ex[\underline{\mathcal{J}}_{h,n}(\ob)] = (2\pi)^{-d/2} H_{h,2}^{-1/2} |D_n|^{-1/2} H_{h \underline{\lambda}}^{(n)}(\ob),
\end{equation}
where $H_{h \underline{\lambda}}^{(n)}(\ob) = ( H_{h \lambda^{(1)},1}^{(n)}(\ob), \dots, H_{h \lambda^{(m)},1}^{(n)}(\ob))^\top$.
Therefore, the centered DFT and the theoretical periodogram are respectively defined as
\begin{equation}
\begin{aligned}
\underline{J}_{h,n}(\ob) &= \underline{\mathcal{J}}_{h,n}(\ob) - (2\pi)^{-d/2} H_{h,2}^{-1/2} |D_n|^{-1/2} H_{h \underline{\lambda}}^{(n)}(\ob) \quad \text{and} \\
I_{h,n}(\ob) &= \big(I_{h,n}^{(i,j)}(\ob)\big)_{1 \leq i,j \leq m}
= \underline{J}_{h,n}(\ob) \, \underline{J}_{h,n}(\ob)^* .
\end{aligned}
\label{eq:In-IRS}
\end{equation}

Since $H_{h\underline{\lambda}}^{(n)}(\cdot)$ depends on the unknown first-order intensities, the above quantities need to be estimated. To do so, we consider a parametric first-order intensity model of the form $\underline{\lambda}(\cdot;\bbeta) = (\lambda^{(1)}(\cdot;\bbeta), \dots, \lambda^{(m)}(\cdot;\bbeta))^\top$, $\bbeta \in \Theta \subset \mathbb{R}^p$.
Here, we do not necessarily assume that there exists $\bbeta_0 \in \Theta$ such that the true first-order intensity is $\underline{\lambda}(\cdot; \bbeta_0)$. Instead, we assume that we can compute $\widehat{\bbeta}_n$ based on the observed point pattern $\underline{X}_{D_n}$, and that $\widehat{\bbeta}_n$ converges in probability to some pseudo-true value $\bbeta_0 \in \Theta$. Under this assumption, our estimator of $H_{h\underline{\lambda}}^{(n)}(\ob)$ is 
$\widehat{H}_{h\underline{\lambda}}^{(n)}(\ob) = (H_{h \widehat{\lambda}^{(1)},1}^{(n)}(\ob), \dots, H_{h \widehat{\lambda}^{(m)},1}^{(n)}(\ob))^{\top}$,
where $\widehat{\lambda}^{(j)}(\cdot) = \lambda^{(j)}(\cdot; \widehat{\bbeta}_n)$. Finally, the feasible criteria of $\underline{J}_{h,n}(\ob)$ and $I_{h,n}(\ob)$, denoted as $\underline{\widehat{J}}_{h,n}(\ob)$ and $\underline{\widehat{I}}_{h,n}(\ob)$, are defined similarly to (\ref{eq:In-IRS}), but with $\widehat{H}_{h\underline{\lambda}}^{(n)}(\ob)$ replacing $H_{h\underline{\lambda}}^{(n)}(\cdot)$ in the equations.

Below, we discuss possible scenarios for the parametric models and their estimates.

\noindent \textit{Correctly specified model}. We assume there exists $\bbeta_0 \in \Theta$ such that the true first-order intensity function is $\underline{\lambda}(\xb;\bbeta_0)$. An estimator of $\bbeta_0$, denoted as $\widehat{\bbeta}_n$, and its large sample properties were studied in \cite{p:gau-07}.

\vspace{0.3em}

\noindent \textit{Homogeneous intensity model}. We consider the constant intensity model $\underline{\lambda}(\xb;\bbeta) = \bbeta$ for $\bbeta \in \Theta \subset [0,\infty)^{m}$. The two estimation schemes that fall within this framework are (1) $\widehat{\bbeta}_n \equiv \bbeta_0 \in \Theta$ ($n\in\N$) for some pre-determined constant vector $\bbeta_0 \in \Theta$, and (2) $\widehat{\bbeta}_n = |D_n|^{-1} \underline{N}(D_n)$. The former includes the case where $\bbeta_0 = (0,\dots, 0)^\top$, so the feasible periodogram in this case becomes $\underline{\mathcal{J}}_{h,n}(\ob) \underline{\mathcal{J}}_{h,n}(\ob)^{*}$. Moreover, under mild moment conditions, the latter estimator satisfies $\widehat{\bbeta}_n \Pcon (H_{\lambda^{(1)},1}, \dots, H_{\lambda^{(m)},1})^{\top} =: \bbeta_0$.

\vspace{0.3em}

\noindent \textit{General misspecified model}. Under a general misspecified model, \cite{p:cho-21} proposed a composite Bayesian information criterion and fit a regression model to the first-order intensity function. Under mild conditions, they showed that the estimator $\widehat{\bbeta}_n$ consistently estimates some $\bbeta_0$, where $\bbeta_0$ has an interpretation in terms of the information criterion.

Under appropriate assumptions to be stated in the next section, the feasible periodogram under correctly specified intensity satisfies the following assertions: for any sequence of frequencies $\{\ob_n\}$ with $\lim_{n\rightarrow \infty} \ob_n = \ob$ and $\lim_{n\rightarrow \infty} \|\aB \odot \ob_n\| = \infty$ (we say, $\{\ob_n\}$ is asymptotically distant from the origin), we have
\begin{equation} \label{eq:Per-thm}
\text{(a)} \lim_{n\rightarrow \infty} \Ex[\underline{\widehat{I}}_{h,n}(\ob_n)] = F_{h}(\ob) \quad \text{and} \quad
\text{(b)} \lim_{n\rightarrow \infty} \var(\underline{\widehat{I}}_{h,n}^{(i,j)}(\ob_n)) = |F_{h}^{(i,j)}(\ob)|^2.
\end{equation} 
See Theorem \ref{thm:asymDFT-IRS} in the Appendix for the precise statements and proofs. These results indicate that the periodogram is asymptotically unbiased but not consistent. Therefore, to consistently estimate the pseudo-spectrum, we introduce a kernel smoothing of the periodogram, which is a common approach in time series (cf. \cite{b:bro-dav-06}). 

To this end, let $K: \R^{d} \rightarrow \R$ be a non-negative symmetric kernel function that satisfies: (i) $K(\xb)$ has support on $\xb \in [-1,1]^d$ and is continuous on $\R^d$, (ii) $\int_{\R^d} K(\xb) \, d\xb = 1$, and (iii) $\int_{\R^d} |K(\xb)|^2 \, d\xb < \infty$. Let $\bb = (b_1, \dots, b_d)^\top \in (0,\infty)^d$ be the bandwidth, and let $K_{\bb}(\cdot) = (b_1 \cdots b_d)^{-1} K(\cdot \oslash \bb)$. Then, our feasible nonparametric estimator of the pseudo-spectrum is defined as
\begin{equation} \label{eq:KSDE}
\widehat{F}_{n,\bb}(\cdot) = \int_{\R^d} K_{\bb}(\cdot - \xb) \widehat{I}_{h,n}(\xb) \, d\xb.
\end{equation}

\subsection{The large sample properties} \label{sec:asymp}

To derive the large sample properties of $\widehat{F}_{n,\bb}(\ob)$, we require the following sets of assumptions.

The first assumption is on the asymptotic setting.
\begin{assumption} \label{assum:A}
$\{\underline{X}_{D_n}\}$ is an $m$-variate locally stationary SOIRS process. Moreover, for $j \in \{1, \dots, d\}$, it holds that $\lim_{n \rightarrow \infty}A_{j}(n) = \infty$ and $\inf_{\xb \in [-1/2,1/2]^d} \lambda^{(j)}(\xb) > 0$.
\end{assumption}

The next assumption is on the integrability of the joint cumulant intensity functions of $\underline{X}_{D_n}$. For $\boldsymbol{\alpha} = (\alpha_1, \dots, \alpha_m)^\top \in \{0,1, \dots\}^m$, let $\gamma^{\balpha}_n: D_n^{|\balpha|} \rightarrow \R$ be the $\balpha$th-order joint cumulant intensity function of $\underline{X}_{D_n}$ (cf. \cite{p:zhu-25}, pages 1892--1893).
\begin{assumption} \label{assum:B}
Let $k \in \{2, 3, \dots \}$ be fixed. Then, $\gamma^{\balpha}_n$ is well-defined for any $\balpha \in \{0,1,\dots\}^m$ with $|\balpha| \leq k$. Moreover, for any $2 \leq |\balpha| \leq k$, it holds that
\begin{equation*}
\sup_{n \in \N} \sup_{\xb_{1} \in D_n}
\int_{D_n^{|\boldsymbol{\alpha}|-1}} \left| \gamma^{\balpha}_n(\xb_1, \xb_2, \dots, \xb_{|\balpha|})\right| d\xb_2 \cdots d\xb_{|\balpha|} < \infty.
\end{equation*} 
\end{assumption}
Assumptions \ref{assum:A} and \ref{assum:B} for $k=2$ imply $L_2(\cdot) \in L_1^{m \times m}(\R^d)$ ensuring the well-definedness of the pseudo-spectrum.

The next assumption is on the data taper.
\begin{assumption} \label{assum:C}
The data taper $h(\cdot)$ is non-negative, has support in $[-1/2,1/2]^d$, and is continuous in $[-1/2,1/2]^d$. 
\end{assumption}

Next, we assume conditions on the parametric form of $\underline{\lambda}(\cdot) = \underline{\lambda}(\cdot; \bbeta)$, $\bbeta \in \Theta$.
\begin{assumption} \label{assum:beta}
The parameter space $\Theta \subset \mathbb{R}^p$ is compact. For any $j\in \{1, \dots, m\}$, $\lambda^{(j)}(\xb;\bbeta)$ is twice partially differentiable with respect to $\bbeta \in \Theta$ and the partial derivatives are bounded above.
\end{assumption}

\begin{assumption} \label{assum:beta2}
Let $\widehat{\bbeta}_n$ be the estimator of the pseudo-true value $\bbeta_0$ such that for any $w \in [0,1]$, $w \widehat{\bbeta}_n + (1-w) \bbeta_0 \in \Theta$. Moreover, the difference $\widehat{\bbeta}_n - \bbeta_0$ satisfy one of the following two conditions:
\begin{itemize}
\item[(i)] $|D_n|^{1/2}\big| \widehat{\bbeta}_n - \bbeta_0 \big| = O_p(1)$, $n \rightarrow \infty$.
\item[(ii)] There exists $r \in (1,\infty)$ such that $ \Ex \big \{ |D_n|^{1/2}\big| \widehat{\bbeta}_n - \bbeta_0 \big| \big\}^r = O(1)$, $n \rightarrow \infty$.
\end{itemize}
\end{assumption}
Under mild assumptions the composite likelihood estimator in \cite{p:gau-07} (for correctly specified model) and information-based estimator in \cite{p:cho-21} (for misspecified model) satisfy Assumption \ref{assum:beta2}(i) and (ii).

The last assumption is on the bandwidth $\bb$ of $\widehat{F}_{n,\bb}(\ob)$.

\begin{assumption} \label{assum:D}
The bandwidth $\bb = \bb(n) = (b_1(n), \dots, b_d(n))^\top$ depends only on $n \in \N$ and satisfies
\begin{equation*}
\lim_{n\rightarrow \infty}\big\{  \|\bb(n)\|_{\infty} + \| 1/ (\aB(n) \odot \bb(n)) \|_{\infty} \big\} = 0.
\end{equation*}
\end{assumption}


The theorem below states that, provided the first-order intensity is correctly specified, $\widehat{F}_{n,\bb}(\ob)$ converges to the pseudo-spectrum in two different modes of convergence.

\begin{theorem} \label{thm:KSDE1}
Suppose that Assumptions \ref{assum:A}, \ref{assum:B} (for $k = 4$), \ref{assum:C}, \ref{assum:beta}, \ref{assum:beta2}(i), and \ref{assum:D} hold. Moreover, we assume that the true first-order intensity function is $\underline{\lambda}(\cdot;\bbeta_0)$, where $\bbeta_0$ is as in Assumption \ref{assum:beta2}. Then, for any $\ob \in \R^d$,
\begin{equation} \label{eq:Fnb-conv1}
\widehat{F}_{n,\bb}(\ob) \Pcon F_h(\ob).
\end{equation}
If we further assume Assumption \ref{assum:beta2}(ii) with $r > 4$, then 
\begin{equation} \label{eq:Fnb-conv2}
\lim_{n \to \infty} \Ex \big| \widehat{F}_{n,\bb}(\ob) - F_h(\ob) \big|^2 = 0.
\end{equation}
\end{theorem}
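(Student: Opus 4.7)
The plan is to decompose
$$\widehat{F}_{n,\bb}(\ob) - F_h(\ob) = \bigl[\widehat{F}_{n,\bb}(\ob) - F_{n,\bb}(\ob)\bigr] + \bigl[F_{n,\bb}(\ob) - \Ex F_{n,\bb}(\ob)\bigr] + \bigl[\Ex F_{n,\bb}(\ob) - F_h(\ob)\bigr],$$
where $F_{n,\bb}(\ob) = \int_{\R^d} K_{\bb}(\ob - \xb) I_{h,n}(\xb)\, d\xb$ is the infeasible kernel estimator formed from the true-parameter periodogram $I_{h,n}$ rather than $\widehat I_{h,n}$. For (\ref{eq:Fnb-conv1}) I will show each bracket is $o_p(1)$; for (\ref{eq:Fnb-conv2}) I will upgrade these to $L_2$ convergence through uniform integrability.

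The bias and variance of the infeasible estimator are handled by standard spectral-estimation arguments. For the bias, Theorem \ref{thm:asymDFT-IRS} gives $\Ex I_{h,n}(\xb_n) \to F_h(\ob)$ along any sequence $\xb_n \to \ob$ that is asymptotically distant from the origin. Assumption \ref{assum:D} forces $\bb \to 0$ together with $\aB \odot \bb \to \infty$ componentwise, so every frequency in the shrinking window $\ob + \mathrm{supp}\, K(\cdot \oslash \bb)$ meets this condition (the negligible-measure set around the origin when $\ob = 0$ is disposed of by the boundedness of the integrand), and continuity of $F_h$ (Proposition \ref{prop:Fh}) then yields $\Ex F_{n,\bb}(\ob) \to F_h(\ob)$. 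For the variance I will write $\var F_{n,\bb}(\ob)$ as a double kernel integral of $\cov(I_{h,n}(\xb), I_{h,n}(\yb))$, split this covariance into a product of two expected periodograms (Fejér-type concentration on the diagonal) and a fourth-order cumulant contribution (controlled by Assumption \ref{assum:B} with $k=4$), and obtain the familiar $O((\prod_j A_j b_j)^{-1})$ rate, which is $o(1)$ by Assumption \ref{assum:D}.

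The main obstacle is the feasibility error. A Taylor expansion using Assumption \ref{assum:beta} gives
$$\widehat H_{h\underline\lambda}^{(n)}(\ob) - H_{h\underline\lambda}^{(n)}(\ob) = \nabla_{\bbeta} H_{h\underline\lambda(\cdot;\bbeta^*)}^{(n)}(\ob)^\top (\widehat\bbeta_n - \bbeta_0)$$
for some intermediate $\bbeta^*$, and the crude bound $\|\nabla_{\bbeta} H^{(n)}\|_\infty = O(|D_n|)$ combined with Assumption \ref{assum:beta2}(i) only yields an $O_p(1)$ error in the centered DFT, which is not enough. The resolution is the change of variables $\ubb = \xb \oslash \aB$, which rewrites each component as $\partial_{\beta_j} H_{h\lambda^{(j)}(\cdot;\bbeta)}^{(n)}(\ob) = |D_n|\, \mathcal{F}\!\bigl(h\, \partial_{\beta_j}\lambda^{(j)}(\cdot;\bbeta)\bigr)(\aB \odot \ob)$; by the Riemann--Lebesgue lemma and $\aB \odot \bb \to \infty$, this quantity is uniformly $o(|D_n|)$ over the kernel window around $\ob$, which is exactly what is needed to make the feasibility contribution to $\widehat I_{h,n} - I_{h,n}$ vanish in probability after smoothing. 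Finally, (\ref{eq:Fnb-conv2}) follows by propagating each step in $L_2$: bounded moments of $I_{h,n}$ up to order four (Assumption \ref{assum:B} with $k=4$) together with $r>4$ in Assumption \ref{assum:beta2}(ii) provide enough moment room to establish uniform integrability of $|\widehat F_{n,\bb}(\ob) - F_h(\ob)|^2$.
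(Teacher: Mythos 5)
Your three-term decomposition and your treatment of the bias and variance of the infeasible estimator $F_{n,\bb}$ match the paper's route, and you correctly identify the feasibility error as the main obstacle; but your resolution of that obstacle has a genuine gap at the origin. You argue that $|D_n|^{-1}\partial_{\beta_j}H^{(n)}_{h\lambda^{(j)}(\cdot;\bbeta)}(\xb)=\mathcal{F}\big(h\,\partial_{\beta_j}\lambda^{(j)}\big)(\aB\odot\xb)$ is uniformly $o(1)$ over the kernel window $\ob+\prod_{j}[-b_j,b_j]$ by Riemann--Lebesgue. That holds for $\ob\neq\mathbf{0}$, but the theorem is asserted for every $\ob\in\R^d$, and when $\ob=\mathbf{0}$ the window contains the origin, where $|D_n|^{-1}H^{(n)}_{h\partial\underline{\lambda}}(\mathbf{0})$ equals a nonzero constant, so no uniform decay is available. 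This is not a technicality: boundedness of $\widehat F_{n,\bb}$ at the origin under a correctly specified intensity is exactly what separates this theorem from the misspecified case (Remark \ref{rmk:near-origin}), so the origin must be covered. The paper's fix replaces the sup bound over the window by Plancherel's identity, $\int_{\R^d}|H^{(n)}_{g}(\xb)|^2\,d\xb=(2\pi)^d|D_n|H_{g^2,1}$, giving
\begin{equation*}
\int_{\R^d}K_{\bb}(\ob-\xb)\big\{|D_n|^{-1}\big|H^{(n)}_{h\partial\underline{\lambda}}(\xb)\big|\big\}^2\,d\xb
\;\le\; \|K_{\bb}\|_{\infty}\,O(|D_n|^{-1})=O\big(|D_n|^{-1}(b_1\cdots b_d)^{-1}\big),
\end{equation*}
which is $o(1)$ by Assumption \ref{assum:D} uniformly in $\ob$, origin included. (A measure-theoretic patch of your argument --- the Fej\'{e}r factor is large only on a set of measure $O(|D_n|^{-1})$, which the kernel of sup-norm $O((b_1\cdots b_d)^{-1})$ cannot blow up --- would also work, but it requires invoking the balance $\prod_j A_jb_j\to\infty$ explicitly; pointwise Riemann--Lebesgue alone does not close the case $\ob=\mathbf{0}$.)

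Two smaller points. For the $L_2$ statement you propose uniform integrability of $|\widehat F_{n,\bb}(\ob)-F_h(\ob)|^2$; the standard UI criterion needs a $(2+\epsilon)$-moment of this squared error, hence DFT moments of order strictly greater than four, which Assumption \ref{assum:B} with $k=4$ does not supply. The paper instead bounds $\Ex|\widehat F_{n,\bb}-F_{n,\bb}|^2$ directly via Cauchy--Schwarz, which costs only fourth DFT moments plus $\Ex\{|D_n|^{1/2}|\widehat\bbeta_n-\bbeta_0|\}^{r}$ with $r>4$ (producing the $O(|D_n|^{1-r/4})$ remainder). Also, your claimed $O((\prod_jA_jb_j)^{-1})$ rate for $\var F_{n,\bb}(\ob)$ is neither needed here nor obtainable from the stated hypotheses alone --- the paper's rate version requires Assumption \ref{assum:E} and condition (SL); under the present assumptions only the qualitative $o(1)$ is established, which suffices.
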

\begin{proof} See Appendix \ref{appen:proofKSDE}. 
\end{proof}

Surprisingly, even in the case where the first-order intensity is incorrectly estimated, the convergence results in Theorem \ref{thm:KSDE1} still hold for frequencies away from the origin.

\begin{corollary} \label{coro:KSDE2}
Suppose the same set of assumptions in Theorem \ref{thm:KSDE1} hold, except we now allow for misspecified first-order intensity model. Furthermore, we assume the side lengths $\aB(n)$ satisfy (SL) below and the smoothness conditions on $h(\xb)$, $\underline{\lambda}(\xb)$, and $\underline{\lambda}(\xb;\bbeta_0)$ as in Assumption \ref{assum:smooth} in the Appendix hold. Then, the convergence results (\ref{eq:Fnb-conv1}) and (\ref{eq:Fnb-conv2}) also hold for $\ob \in \R^d$ such that the number of non-zero elements of $\ob$, denotes $|\ob|_0$, is greater than $d/4$.
\end{corollary}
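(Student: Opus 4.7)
The plan is to reduce the corollary to Theorem~\ref{thm:KSDE1} by introducing an ``ideal'' smoothed periodogram that uses the true (possibly non-parametric) mean of $\underline{\mathcal{J}}_{h,n}$, and then showing that the misspecification of the centering contributes only a lower-order remainder when $|\ob|_0 > d/4$. Define
\begin{equation*}
\underline{J}^*_{h,n}(\xb) = \underline{\mathcal{J}}_{h,n}(\xb) - \Ex[\underline{\mathcal{J}}_{h,n}(\xb)],\qquad \widehat{F}^*_{n,\bb}(\ob) = \int K_{\bb}(\ob-\xb)\,\underline{J}^*_{h,n}(\xb)\,\underline{J}^*_{h,n}(\xb)^{*}\, d\xb.
\end{equation*}
The proof of Theorem~\ref{thm:KSDE1} carries over verbatim for $\widehat{F}^*_{n,\bb}$ because it only uses that the centering coincides with $\Ex[\underline{\mathcal{J}}_{h,n}]$, not that the intensity lies in a parametric family; hence $\widehat{F}^*_{n,\bb}(\ob)\Pcon F_h(\ob)$ and $\Ex|\widehat{F}^*_{n,\bb}(\ob)-F_h(\ob)|^2\to 0$. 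It therefore suffices to prove that the remainder $\widehat{F}_{n,\bb}(\ob)-\widehat{F}^*_{n,\bb}(\ob)$ vanishes in the same two modes whenever $|\ob|_0 > d/4$.

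Write $\underline{\widehat{J}}_{h,n} = \underline{J}^*_{h,n} + \Delta_n$ with
\begin{equation*}
\Delta_n(\xb) = (2\pi)^{-d/2}H_{h,2}^{-1/2}|D_n|^{-1/2}\bigl(H_{h\underline{\lambda}_{\text{true}}}^{(n)}(\xb)-\widehat{H}_{h\underline{\lambda}}^{(n)}(\xb)\bigr) = \Delta_n^{\text{mis}}(\xb) + \Delta_n^{\text{est}}(\xb),
\end{equation*}
where $\Delta_n^{\text{mis}}$ (deterministic) replaces $\widehat{\bbeta}_n$ by the pseudo-true $\bbeta_0$, and $\Delta_n^{\text{est}}$ captures the estimation error. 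The product expansion yields $\widehat{I}_{h,n}-I^*_{h,n} = \underline{J}^*_{h,n}\Delta_n^{*} + \Delta_n(\underline{J}^*_{h,n})^{*} + \Delta_n\Delta_n^{*}$. After integrating against the non-negative kernel $K_{\bb}(\ob-\cdot)$ and applying Cauchy--Schwarz together with the bound $\Ex|\underline{J}^*_{h,n}(\xb)|^2 = O(1)$ uniform in $\xb$ on the kernel support (from Theorem~\ref{thm:asymDFT-IRS}), convergence of the remainder reduces to showing $\sup_{\|\xb-\ob\|_\infty\le\|\bb\|_\infty}|\Delta_n(\xb)| = o_p(1)$ (and the analogous second-moment bound for the $L_2$ statement).

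The analytic engine is oscillatory-integral decay: substituting $\ubb=\xb\oslash\aB$ and integrating by parts twice in every coordinate $j$ with $\xi_j\neq 0$, the $C^2$-smoothness and boundary-vanishing conditions built into Assumption~\ref{assum:smooth} (applied to $h\cdot(\underline{\lambda}_{\text{true}}-\underline{\lambda}(\cdot;\bbeta_0))$ and to $h\cdot\nabla_{\bbeta}\underline{\lambda}(\cdot;\bbeta_0)$) give $|H^{(n)}_{g}(\xb)| \le C|D_n|\prod_{j:\xi_j\neq 0}(A_j|\xi_j|)^{-2}$. Because $\|\bb\|_\infty\to 0$ (Assumption~\ref{assum:D}), every non-zero coordinate of $\ob$ stays bounded away from $0$ throughout the kernel support for large $n$; combining this with condition (SL) (which enforces $A_j\asymp|D_n|^{1/d}$) and a second-order Taylor expansion in $\bbeta$ for $\Delta_n^{\text{est}}$, I obtain
\begin{equation*}
\sup_{\xb}|\Delta_n^{\text{mis}}(\xb)| = O\!\bigl(|D_n|^{1/2-2|\ob|_0/d}\bigr),\qquad \sup_{\xb}|\Delta_n^{\text{est}}(\xb)| = O_p\!\bigl(|D_n|^{-2|\ob|_0/d}\bigr)+O_p(|D_n|^{-1/2}),
\end{equation*}
invoking Assumption~\ref{assum:beta2}(i) for the first-order Taylor term and Assumption~\ref{assum:beta} for the second-order remainder. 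Both suprema are $o_p(1)$ precisely when $|\ob|_0 > d/4$, delivering (\ref{eq:Fnb-conv1}); the $L_2$ statement (\ref{eq:Fnb-conv2}) then follows by upgrading to fourth moments and invoking Assumption~\ref{assum:beta2}(ii) with $r>4$.

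The main obstacle is making the Fourier decay for $\Delta_n^{\text{est}}$ fully uniform: mean-value linearization evaluates $\nabla_{\bbeta}\underline{\lambda}(\cdot;\widetilde{\bbeta})$ at a random intermediate point $\widetilde{\bbeta}$, and the $(A_j|\xi_j|)^{-2}$ rate must hold uniformly for $\widetilde{\bbeta}$ in an $O_p(|D_n|^{-1/2})$-neighborhood of $\bbeta_0$ and for $\xb$ over the shrinking kernel window. This requires aligning the parametric regularity of Assumption~\ref{assum:beta} with the joint smoothness and boundary-vanishing of Assumption~\ref{assum:smooth} so that the two integrations by parts remain valid for the perturbed integrand --- the single step where the estimation rate and the oscillatory-integral machinery must be carefully synchronized.
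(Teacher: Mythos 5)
Your proposal is correct and follows essentially the same route as the paper: the paper likewise splits the error into a deterministic misspecification term and a stochastic estimation term (Theorems \ref{theorem:Fij-L2} and \ref{thm:KSDE-bound-c}), controls the former via the Fourier-decay bound $|H^{(n)}_{g,1}(\ob)|\le C|D_n|^{1-2|\ob|_0/d}\bigl(\prod_{j\in i(\ob)}\omega_j\bigr)^{-2}$ for $C^2$ compactly supported $g$ (Lemma \ref{lemma:Cn}), and derives the threshold $|\ob|_0>d/4$ from the resulting rate $|D_n|^{1-4|\ob|_0/d}$. The only notable deviation is that you also apply the pointwise oscillatory decay to the estimation term through $h\,\nabla_{\bbeta}\underline{\lambda}(\cdot;\bbeta_0)$ --- which needs $C^2$-in-$\xb$ smoothness of the $\bbeta$-gradient not stated in Assumption \ref{assum:smooth} --- whereas the paper bounds that term by the integrated Parseval estimate (\ref{eq:KHbound}), which requires no such extra smoothness.
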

\begin{proof} See Appendix \ref{appen:proofKSDE2}.
\end{proof}

\begin{remark}\label{rmk:coroKSDE}
An intuition for the convergence results under the misspecified model is that the quantity $|\Ex[\mathcal{J}_{h,n}^{(j)}(\ob)]|^2$ in (\ref{eq:J-expectation}) acts like a Fej\'{e}r kernel (a discrete version of the Dirac delta function). Therefore, the contribution of the demeaning in $\widehat{F}_{n,\bb}(\ob)$ is negligible for frequencies away from the origin.
\end{remark}

\begin{remark} \label{rmk:near-origin}
As shown in Theorem \ref{thm:KSDE1}, under a correctly specified first-order intensity, $\widehat{F}_{n,\bb}^{(j,j)}(\mathbf{0}) = O_p(1)$. On the other hand, following the results in Corollary \ref{coro:KSDE2} and their proofs, it can be seen that if the first-order intensity is incorrectly specified, then $\Ex[\widehat{F}_{n,\bb}^{(j,j)}(\mathbf{0})]$ grows at a rate of $O(|D_n|)$. Note that the same dichotomy is also true for the feasible periodogram, whether the model is correctly or incorrectly specified. Therefore, the behavior of $\widehat{F}_{n,\bb}^{(j,j)}(\ob)$ (or, $\widehat{I}_{n}^{(j,j)}(\ob)$) near the origin can be informally used to assess the goodness of fit of the first-order intensity model $\lambda^{(j)}(\cdot;\bbeta)$. 
\end{remark}

\begin{remark}
In practice, we compute $\widehat{F}_{n,\bb}(\cdot)$ using the Riemann sum approximation
\begin{equation} \label{eq:KSDE2}
\widehat{F}^{(R)}_{n,\bb}(\cdot)
= \frac{\sum_{\kb \in \Z^d} K_{\bb}(\cdot - \xb_{\kb, \bOmega}) \widehat{I}_{h,n}(\xb_{\kb, \bOmega})}{
\sum_{\kb \in \Z^d} K_{\bb}(\cdot - \xb_{\kb, \bOmega})
},
\end{equation}
where, $\xb_{\kb, \bOmega} = 2\pi \kb \oslash \bOmega$ for some grid vector $\bOmega = (\Omega_1, \dots, \Omega_d)^\top$. We note that since $K_{\bb}(\cdot)$ has finite support, the summations in (\ref{eq:KSDE2}) are finite sums. By using similar techniques to those in the proof of Theorem \ref{thm:KSDE1}, one can also show the consistency and $L_2$-convergence of $\widehat{F}^{(R)}_{n,\bb}(\ob)$, provided the grid $\bOmega$ increases with $n$ and has the form $c \aB(n)$ for some constant $c \in (0,\infty)$. See Appendix \ref{appen:proofR} for the detailed arguments and proofs.
\end{remark}

\section{The bandwidth selection methods} \label{sec:bandwidth}

Although our kernel spectral density estimator $\widehat{F}_{n,\bb}(\cdot)$ consistently estimates the pseudo-spectrum for any bandwidth $\bb$ satisfying Assumption \ref{assum:D}, a poorly chosen bandwidth may lead to underperforming estimation. In this section, we propose two different approaches for selecting the bandwidth of $\widehat{F}_{n,\bb}(\cdot)$. For the sake of parsimony, we set $b_1 = \cdots = b_d = b$ for some $b \in (0,\infty)$.

\subsection{Method I: the optimal MSE criterion} \label{sec:opt}

The first approach is based on the optimal mean squared error (MSE) criterion. Let
\begin{equation} \label{eq:MSEb}
\text{MSE}(b) = \sup_{\ob \in \R^d} \Ex |\widehat{F}_{n,b}(\ob) - F_h(\ob)|^2,
\end{equation}
where $\widehat{F}_{n,b}(\ob)$ is the kernel spectral density estimator using the common bandwidth $b$ in all coordinates. To obtain the rate of convergence of $\text{MSE}(b)$ to zero, we require the following two additional assumptions.

First, we assume that the side lengths $\aB(n)$ of $D_n$ grow regularly in all coordinates:
\begin{equation} \tag{SL}
A_{i}(n) \propto A_{j}(n), \quad i,j \in \{1, \dots, d\},
\end{equation}
where for two positive sequences $\{a_n\}$ and $\{b_n\}$, $a_n \propto b_n$ means $0 < \inf_{n} (a_n/b_n) \leq \sup_{n} (a_n/b_n) < \infty$. The second assumption is on the pseudo-spectrum. 
\begin{assumption} \label{assum:E}
The pseudo-spectrum $F_h(\ob)$ satisfies (i) $F_h(\ob) - (2\pi)^{-d} H_{h,2}^{-1} \diag (H_{h^2 \underline{\lambda}}) \in L_1^{m \times m}(\R^d)$
and (ii) $F_h(\ob)$ is twice partially differentiable with respect to $\ob$, with second-order derivatives that are continuous and bounded above.
\end{assumption}

Then, thanks to Theorem \ref{thm:MSEb-rate} in the Appendix, when the first-order intensity is correctly specified, we have
\begin{equation} \label{eq:MSE-SPP}
\text{MSE}(b) = O(b^{4} + |D_n|^{-2/d} + |D_n|^{-1}b^{-d}).
\end{equation} 
Therefore, we choose $b = b_{\text{opt}}$, which minimizes the asymptotic order of the right-hand side above. In the theorem below, we provide the rates of convergence of $b_{\text{opt}}$ and $\text{MSE}(b_{\text{opt}})$ to zero (the proof immediately follows from Theorem \ref{thm:MSEb-rate}, so we omit the details).

\begin{theorem} \label{thm:opt-MSE}
Suppose that Assumptions \ref{assum:A}, \ref{assum:B} (for $k = 4$), \ref{assum:C}, \ref{assum:beta}, \ref{assum:beta2}(ii) (for $r=8$), \ref{assum:D}, and \ref{assum:E} hold. Moreover, we assume that the true first-order intensity is $\underline{\lambda}(\xb;\bbeta_0)$, the data taper $h$ is Lipschitz continuous on $[-1/2,1/2]^d$, and the side lengths $\aB(n)$ satisfy (SL). Then, for a dimension $d \leq 4$,
\begin{equation*}
b_{\text{opt}} \propto |D_n|^{-1/(d+4)} \quad \text{and} \quad \text{MSE}(b_{\text{opt}}) \propto |D_n|^{-4/(d+4)}.
\end{equation*} 
\end{theorem}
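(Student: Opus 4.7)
My plan is to derive Theorem \ref{thm:opt-MSE} directly from Theorem \ref{thm:MSEb-rate} (invoked in equation (\ref{eq:MSE-SPP})), since the latter already delivers the nonasymptotic rate
\begin{equation*}
\mathrm{MSE}(b) = O\!\left( b^{4} + |D_n|^{-2/d} + |D_n|^{-1} b^{-d} \right).
\end{equation*}
The proof then reduces to a bias--variance balance argument. I first identify the role of each term: $b^{4}$ arises from the squared smoothing bias (governed by the boundedness of the second derivatives of $F_h$ in Assumption \ref{assum:E}(ii) and the symmetry of $K$), $|D_n|^{-1} b^{-d}$ is the standard kernel variance term that worsens as the bandwidth shrinks, and $|D_n|^{-2/d}$ is a $b$-free deterministic remainder stemming from the local-stationarity approximation error highlighted after Definition \ref{def:infill}, for which (SL) lets us replace $\min_j A_j(n)$ by $|D_n|^{1/d}$.

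Next I treat $|D_n|^{-2/d}$ as an additive constant (in $b$) and optimize the remaining two terms. Setting $b^{4} = |D_n|^{-1} b^{-d}$ yields $b^{d+4} \propto |D_n|^{-1}$, so the candidate optimizer is $b_{\mathrm{opt}} \propto |D_n|^{-1/(d+4)}$, at which both the bias squared and the variance equal $|D_n|^{-4/(d+4)}$. Before declaring victory, I must check that this choice is admissible and that the third term does not spoil the rate. Admissibility follows from Assumption \ref{assum:D}: clearly $b_{\mathrm{opt}} \to 0$, and under (SL) we have $\min_j A_j(n) \asymp |D_n|^{1/d}$, hence $\min_j A_j(n) \cdot b_{\mathrm{opt}} \asymp |D_n|^{1/d - 1/(d+4)} = |D_n|^{4/[d(d+4)]} \to \infty$.

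The remaining (and really the only) subtle step is verifying that $|D_n|^{-2/d}$ does not dominate $|D_n|^{-4/(d+4)}$; this is precisely where the dimension restriction enters. The comparison $4/(d+4) \leq 2/d$ rearranges to $4d \leq 2(d+4)$, i.e.\ $d \leq 4$, which is exactly the hypothesis of the theorem. Under this condition, $\mathrm{MSE}(b_{\mathrm{opt}}) \propto |D_n|^{-4/(d+4)}$ as claimed. The step I expect to need the most care is not the optimization itself (which is elementary) but justifying that invoking Theorem \ref{thm:MSEb-rate} is legitimate under the reinforced hypotheses of Theorem \ref{thm:opt-MSE}: that Assumption \ref{assum:beta2}(ii) with $r = 8$ together with Lipschitz $h$ and the smoothness conditions on $\underline{\lambda}(\cdot)$ and $\underline{\lambda}(\cdot;\bbeta_0)$ (Assumption \ref{assum:smooth}) actually yields the uniform-in-$\ob$ MSE rate (\ref{eq:MSE-SPP}). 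Once that is confirmed, the optimization is a one-line calculation, and the proof concludes exactly as in the statement.
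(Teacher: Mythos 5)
Your proposal is correct and takes essentially the same route as the paper, which states that the proof ``immediately follows from Theorem \ref{thm:MSEb-rate}'' and omits the details: the balance $b^{4}=|D_n|^{-1}b^{-d}$ giving $b_{\mathrm{opt}}\propto |D_n|^{-1/(d+4)}$, the admissibility check under Assumption \ref{assum:D} via (SL), and the observation that $d\leq 4$ is exactly the condition under which $|D_n|^{-2/d}$ does not dominate $|D_n|^{-4/(d+4)}$ are all as intended. One minor correction: Assumption \ref{assum:smooth} is not among the hypotheses here and is not needed for Theorem \ref{thm:MSEb-rate} (it enters only for the misspecified-intensity case in Corollary \ref{coro:MSEb-rate}), so your final worry is already resolved by the stated assumptions of Theorem \ref{thm:opt-MSE}.
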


\begin{remark} \label{rmk:sub-opt}
Now, we consider the case where the first-order intensity function is incorrectly specified. In this case, $\text{MSE}(b) = \infty$ due to the large spike of $\widehat{F}_{n,b}(\mathbf{0})$ (see Remark \ref{rmk:near-origin}). Instead, for a fixed $\delta > 0$, let 
\begin{equation*}
\text{MSE}(b;\delta) = \sup_{\ob : \|\ob\|_{\infty} > \delta}  \Ex |\widehat{F}_{n,b}(\ob) - F_h(\ob)|^2.
\end{equation*}
Then, for $d \in \{1,2\}$, $\text{MSE}(b;\delta)$ also achieves the same asymptotic bound as in (\ref{eq:MSE-SPP}); in turn, the optimal bandwidth rate in Theorem \ref{thm:opt-MSE} is also near-optimal even for the misspecified case. The precise arguments and proofs can be found in Corollary \ref{coro:MSEb-rate} in the Appendix.
\end{remark}

\subsection{Method II: the cross-validation} \label{sec:CV}

From our simulation results in Section \ref{sec:sim} below, the pseudo-spectrum estimator based on the optimal bandwidth still suffers from large MSE for small sampling windows. Therefore, our second bandwidth selection criterion is based on a data-driven cross-validation method, motivated by \cite{p:bel-87} in the time series literature. Let $\bOmega = (\Omega_1, \dots, \Omega_d)^\top \in (0,\infty)^d$ be the grid vector, and let
\begin{equation} \label{eq:KSDE-minus}
\widehat{F}_{n,b}^{(-1)}(\ob_{\tb, \bOmega})
= \frac{\sum_{\kb \in \Z^d \setminus \{\tb\}} K_{b}(\ob_{\tb, \bOmega} - \ob_{\kb, \bOmega}) \widehat{I}_{h,n}(\ob_{\kb, \bOmega})}{
\sum_{\kb \in \Z^d  \setminus \{\tb\}} K_{b}(\ob_{\tb, \bOmega} - \ob_{\kb, \bOmega})
}, \quad \ob_{\tb,\bOmega} = 2\pi \tb \oslash \bOmega.
\end{equation}
In the above, $K_{b}(\cdot) = b^{-d} K(b^{-1} \cdot)$ denotes the scaled kernel function using the univariate bandwidth $b$. Since $\widehat{F}_{n,b}^{(-1)}(\cdot)$ excludes the center in the calculation of the local average of the periodograms, it can be viewed as a leave-one-out estimator of the pseudo-spectrum $F_h(\cdot)$. 

To select an appropriate bandwidth $b \in (0,\infty)$, we consider the following cross-validated spectral divergence between $\widehat{F}_{n,b}^{(-1)}$ and the periodogram $\widehat{I}_{h,n}$:
\begin{equation} \label{eq:Whittle-b}
L(b) = \sum_{\ob_{\tb,\bOmega} \in W} \left[ \textrm{Tr} \big\{ \widehat{I}_{h,n}(\ob_{\tb,\bOmega}) \widehat{F}_{n,b}^{(-1)}(\ob_{\tb,\bOmega})^{-1} \big\} + \log \det \widehat{F}_{n,b}^{(-1)}(\ob_{\tb,\bOmega}) \right].
\end{equation}
Here, $W \subset \R^{d}$ denotes a prespecified compact domain in $\R^d$. Then, our second proposed bandwidth selection criterion is based on minimizing the cross-validated spectral divergence:
\begin{equation} \label{eq:b-CV}
b_{\textrm{CV}} = \arg\min_{b \in (0,\infty)} L(b).
\end{equation} 
Heuristically, the selected $b_{\textrm{CV}}$ minimizes the Kullback–Leibler divergence between $\widehat{F}_{n,b}^{(-1)}$ and $\widehat{I}_{h,n}$ based on the information captured within $W$, where both are estimators of the pseudo-spectrum using non-overlapping frequencies. In our simulation study, the kernel spectral density estimator based on $b_{\textrm{CV}}$ outperforms that based on $b_{\textrm{opt}}$ uniformly over all domains. However, we do not investigate the statistical properties of $b_{\textrm{CV}}$ and $L(b_{\textrm{CV}})$ in this article.
 

\section{Empirical study} \label{sec:sim}

To verify our theoretical findings, we conduct some simulations. For the data-generating process, we use a bi-variate product-shot-noise Cox process considered in \cite{p:jal-15}, but with a modification of their model. Further details on the simulation settings and additional simulation results can be found in Appendix \ref{sec:sim-add}.

\subsection{The data-generating process} \label{sec:dgp}

Let $\underline{X} = (X_1, X_2)$ be a bi-variate Cox process on $D = [-A/2,A/2]^2$, where the corresponding latent intensity field $(\Lambda_1(\xb), \Lambda_2(\xb))^\top$ of $\underline{X}$ is given by
\begin{equation*}
\Lambda_i(\xb) = \lambda^{(i)}(\xb/A) S_i(\xb) Y_i(\xb), \quad \xb \in D.
\end{equation*}
Here, $\lambda^{(i)}(\xb/A) = \Ex[\Lambda_i(\xb)]$ denotes the first-order intensity of $X_i$ in alignment with the asymptotic framework in Definition \ref{def:infill}(ii).

Let $\Phi_1, \Phi_2, \Phi_3$ be independent homogeneous Poisson point processes. In our bi-variate model, $\Phi_1$ influences $X_1$ through the shot-noise field $S_1(\cdot)$, with an isotropic Gaussian kernel as its dispersal function. The same $\Phi_1$ also contributes to the compound field $Y_2$, which in turn influences the second process, $X_2$. Similarly, $\Phi_2$ affects $X_2$ directly through the shot-noise field $S_2$ and influences $X_1$ indirectly through the compound field $Y_1$. 

What distinguishes our model from that of \cite{p:jal-15} is the inclusion of a third latent parent process, $\Phi_3$. In our model, $\Phi_3$ affects both $X_1$ and $X_2$ indirectly through their respective compound fields, which provides a more realistic scenario for real-life applications. See Figure \ref{fig:simdiag}(a) for the schematic diagram of our model.

\begin{figure}[h]
	\centering
\begin{subfigure}{0.25\textwidth}
    \centering
    \includegraphics[width=\textwidth]{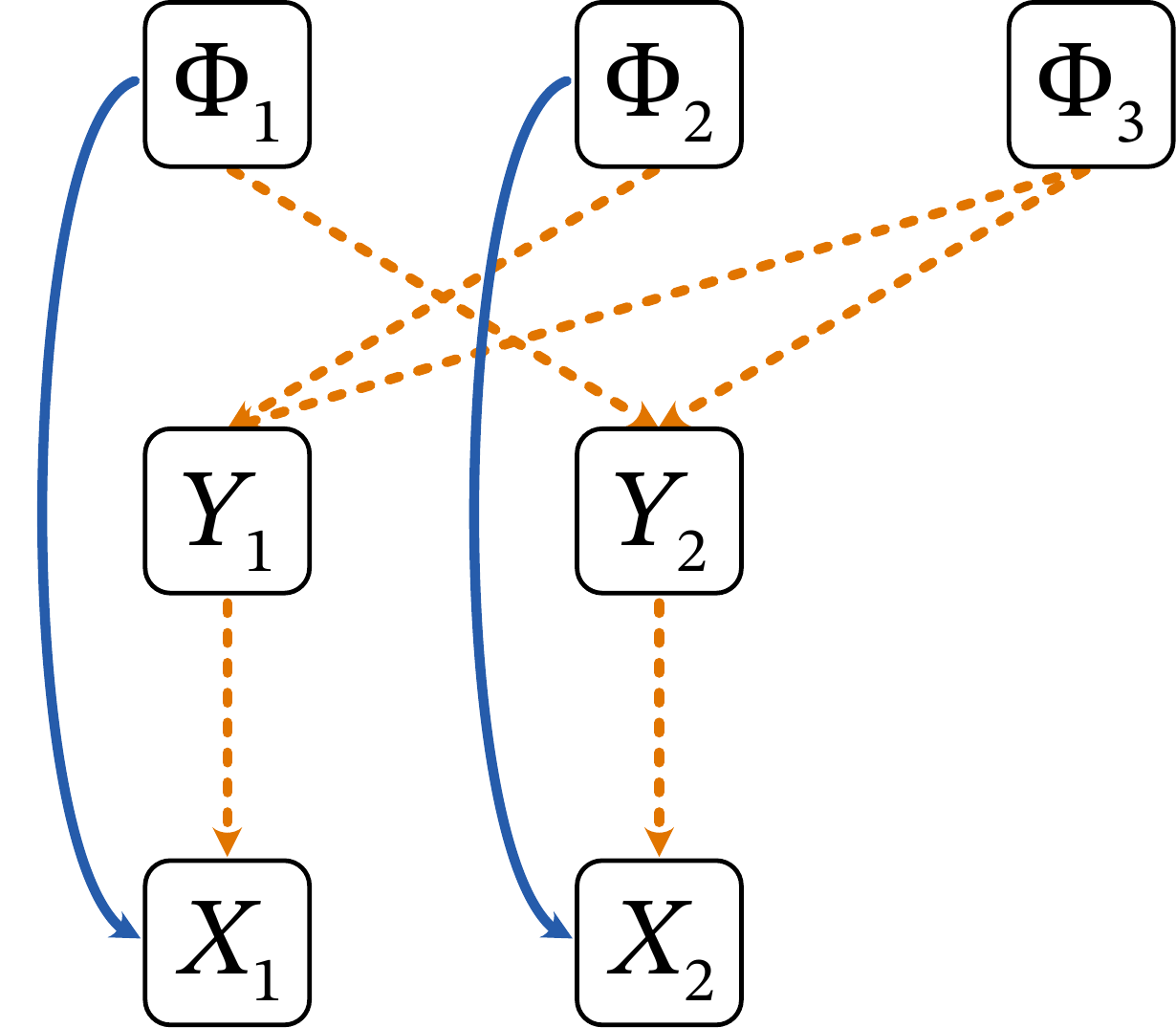}
    \caption{}
    \end{subfigure}
\begin{subfigure}{0.74\textwidth}
    \centering
    \includegraphics[width=0.95\textwidth]{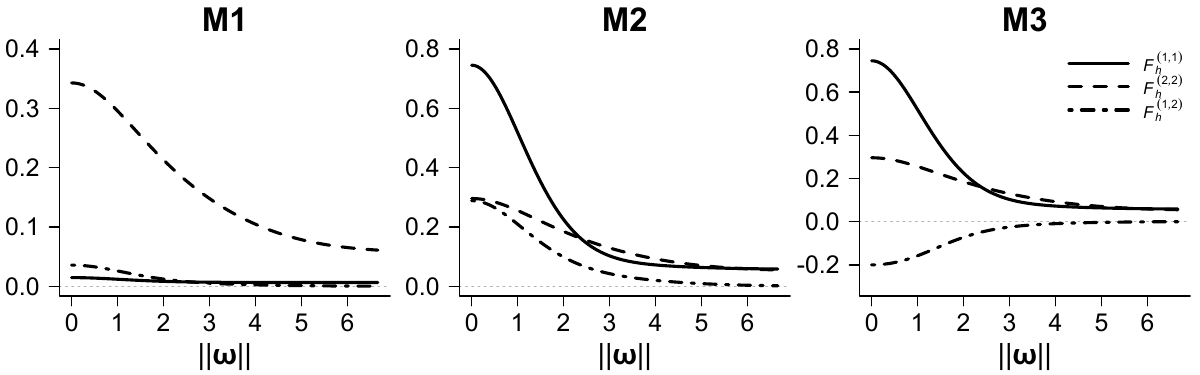}
    \caption{}
    \end{subfigure}
	\caption{
(a): A schematic diagram of our models. The solid lines refer to the direct effects through the shot-noise fields and the dashed lines refer to the indirect effects through the compound fields. 
(b): The marginal (solid and dashed lines) and cross (dot-dashed line) pseudo-spectrum plotted against $\|\ob\|$.
}
	\label{fig:simdiag}
\end{figure}

For the simulations, we consider the following three possible scenarios, namely M1--M3, based on the intensity function and the sign of interaction. Due to space constraints, we only depict their main features; detailed parameter settings can be found in Appendix \ref{sec:sim-detail}.
\begin{itemize}
    \item M1: Homogeneous isotropic process that exhibits inter-species clustering.
    \item M2: Inhomogeneous isotropic process that exhibits inter-species clustering.
    \item M3: Inhomogeneous isotropic process that exhibits inter-species repulsion.
\end{itemize}
Figure \ref{fig:simdiag}(b) plots the marginal and cross spectrum (for M1) and pseudo-spectrum (for M2 and M3). 
Note that since all three models above are isotropic, $F_h^{(i,j)}(\ob)$ are all real-valued and depend only on $\|\ob\|$. For convenience, we also refer to the spectrum of M1 as the pseudo-spectrum. 
Following the interpretation in Section \ref{sec:feature}, we see that $F^{(i,j)}$ lying above (resp. below) its asymptote line indicates that the underlying bi-variate process is marginally ($i=j$) or jointly ($i\neq j$) clustered (resp. repulsive).


\subsection{The pseudo-spectrum estimators} \label{sec:sim-est}

For each model above, we generate 500 replications of the bi-variate spatial point patterns within the observation window $D = [-A/2, A/2]^2$ for varying side lengths $A \in \{10, 20, 40\}$. To compare the different estimation approaches, for each replication, we evaluate three pseudo-spectrum estimators: (i) the raw periodogram ($\widehat{I}_{h,n}$), (ii) the kernel spectral density estimator using the optimal bandwidth ($\widehat{F}_{\text{opt}}$), and (iii) the kernel spectral density estimator using the cross-validation method ($\widehat{F}_{\text{CV}}$). Below, we discuss some practical issues that arise when calculating our estimators.

Firstly, when computing the DFT as in (\ref{eq:mathcalJn}), we use the separable data taper $h(\xb) = h_{a}(x_1) h_{a}(x_2)$ with $a=0.025$, where the form of $h_{a}(\cdot)$ is provided in Equation (\ref{eq:ha-25}) in the Appendix. Here, $a \in (0, 0.5)$ indicates the amount of taper applied at each edge. In our simulations, the choice of $a$ does not seem to affect the overall performance of the estimators. Next, when calculating $\widehat{I}_{h,n}$, the first-order intensities $\lambda^{(i)}$ ($i \in \{1, 2\}$) are estimated by fitting the correctly specified log-linear regression model. Simulation results for the case when the first-order intensity is incorrectly specified are provided in Appendices \ref{sec:sim-mis} and \ref{sec:sim-mis-res}.

Secondly, for both kernel estimators, we implement the Riemann sum approximation version as in (\ref{eq:KSDE2}) with the triangular kernel $K(\xb) = K_{\text{tri}}(x_1) K_{\text{tri}}(x_2)$, where $K_{\text{tri}}(x) = \max\{1 - |x|, 0\}$. To compute $\widehat{F}_{\text{opt}}$, we use the optimal bandwidth $b_{\text{opt}} = |D|^{-1/6} = A^{-1/3}$ for $A \in \{20, 40\}$, as described in Theorem \ref{thm:opt-MSE}. However, for the smallest side length $A = 10$, since the optimal bandwidth $b_{\text{opt}} = 10^{-1/3} \approx 0.46$ is smaller than the computational grid size (as explained below), we set $b_{\text{opt}} = 0.5$ for $A = 10$.

Lastly, to compute $\widehat{F}_{\text{CV}}$, we need to determine the prespecified domain $W$ and the computational grid $\bOmega$ as in (\ref{eq:Whittle-b}). Observing Figure \ref{fig:simdiag}(b), for all three models, when the frequency satisfies $\|\ob\|_{\infty} > 1.5\pi \approx 4.71$, $F_h^{(i,j)}(\ob)$ is close to its constant asymptotic value. This indicates that there is little additional information about the pseudo-spectrum for frequencies $\|\ob\|_{\infty} > 1.5\pi$. Therefore, we choose $W = [-1.5\pi, 1.5\pi]^2$ for all three models, ensuring robust results without additional computational cost. For the computational grid, we set $\bOmega = (\frac{4}{3}A, \frac{4}{3}A)^\top$, so that $L(b)$ in (\ref{eq:Whittle-b}) is evaluated on the grids $\ob_{\tb, \bOmega} = \left(\frac{1.5 \pi t_1}{A}, \frac{1.5 \pi t_2}{A}\right)^\top$ for $t_1, t_2 \in \{-A, -A + 1, \dots, A\}$.

\subsection{Simulation results} \label{sec:sim-results}

\subsubsection{The bandwidth selection methods} \label{sec:bandwidth-sim}

Table \ref{table:simband} below displays the three quartile values and the average of the cross-validated bandwidth $b_{\text{CV}}$ based on 500 replications of the point pattern. For reference, the first row also provides the optimal bandwidth $b_{\text{opt}}$, which depends only on the size of the observation window.

\begin{table}[ht!]
	\centering
	\begin{tabular}[t]{cccccccccc}
		\multicolumn{1}{c}{ } & \multicolumn{3}{c}{$D = [-5,5]^2$} & \multicolumn{3}{c}{$D = [-10,10]^2$} & \multicolumn{3}{c}{$D = [-20,20]^2$} \\
		\cmidrule(l{3pt}r{3pt}){2-4} \cmidrule(l{3pt}r{3pt}){5-7} \cmidrule(l{3pt}r{3pt}){8-10}
		& M1 & M2 & M3 & M1 & M2 & M3 & M1 & M2 & M3\\
		\hline \hline
		$b_\text{opt}$  & \multicolumn{3}{c}{0.5} & \multicolumn{3}{c}{0.37} & \multicolumn{3}{c}{0.29}\\
		Q1 & 1.13 & 1.06 & 1.08 & 0.59 & 0.54 & 0.55 & 0.37 & 0.28 & 0.28\\
		Q2 & 1.51 & 1.18 & 1.18 & 0.76 & 0.60 & 0.60 & 0.40 & 0.30 & 0.30\\
		Q3 & 1.88 & 1.55 & 1.53 & 0.84 & 0.77 & 0.75 & 0.40 & 0.32 & 0.30\\
		Mean & 1.61 & 1.26 & 1.29 & 0.80 & 0.66 & 0.65 & 0.39 & 0.31 & 0.30\\
		\hline
	\end{tabular}
	\caption{The three quartiles (second to fourth row) and the average (fifth row) of $b_\text{CV}$ from 500 replications for each window and model. The first row indicates the optimal bandwidth.}
	\label{table:simband}
\end{table}

We observe that both the value and the interquartile range of $b_\text{CV}$ decrease as the window size increases. Moreover, the distributions are slightly right-skewed, except for the case M1 with $D = [-20,20]^2$. Interestingly, within the same window size, the summary statistics of the selected bandwidths for M2 and M3 are close to each other, while the bandwidth for M1 is larger than those for M2 and M3. This suggests that the sign of the inter-species interaction may have a smaller effect on the choice of bandwidth.

Next, to compare $b_\text{opt}$ and $b_\text{CV}$, we note that the cross-validation method tends to select a larger bandwidth than the optimal MSE criterion. Specifically, for the smallest window, $b_\text{CV}$ is more than twice as large as $b_\text{opt}$. However, as the window size increases, the ratio $b_\text{CV} / b_\text{opt}$ approaches one, indicating that $b_\text{CV}$ may attain the same asymptotic convergence rate as $b_\text{opt}$.

\subsubsection{Estimation accuracy}

Moving on, we compare the accuracy of the three pseudo-spectrum estimators. To assess the effects of bias and variance separately, we compute two metrics that are widely used to evaluate the global performance of spectral estimators. The first metric is the integrated squared bias (IBIAS$^2$) over the prespecified domain:
\begin{equation} \label{eq:rBIAS}
\text{IBIAS$^2$} = 
\frac{1}{\# \{ \ob_{\tb,\bOmega} \in W_{o}\}} \sum_{\ob_{\tb,\bOmega} \in W_{o}}  \left( \frac{\frac{1}{500} \sum_{r=1}^{500} \Re \widehat{F}_{r}^{(i,j)} (\ob_{\tb,\bOmega})}{F_h^{(i,j)}(\ob_{\tb,\bOmega})} - 1 \right)^2,
\end{equation} 
where $\Re \widehat{F}^{(i,j)}_{r}$ denotes the real part of the $r$th replication of one of the three estimators of $F_h^{(i,j)}$. In (\ref{eq:rBIAS}), we choose the prespecified domain $W_{o} = \{\ob: 0.1\pi \leq \|\ob\|_{\infty} \leq 1.5\pi \}$ and the computational grid $\ob_{\tb, \bOmega} = \left( \frac{1.5\pi t_1}{A}, \frac{1.5 \pi t_2}{A} \right)^\top$, $t_1, t_2 \in \Z$. The upper bound of $W_{o}$ and the computational grid $\ob_{\tb, \bOmega}$ are chosen in the same spirit as determining the prespecified domain and grid for $L(b)$ in the previous section. According to (\ref{eq:Per-thm}), the feasible periodogram exhibits large relative bias near the origin. Therefore, we exclude low-frequency values (corresponding to $\|\ob\|_{\infty} < 0.1\pi$) when evaluating $\text{IBIAS}^2$. The second metric is the integrated MSE (IMSE), which is defined similarly to IBIAS$^2$, but replacing the summand in (\ref{eq:rBIAS}) with the MSE
$\frac{1}{500} \sum_{r=1}^{500}( \Re \widehat{F}_{r}^{(i,j)} (\ob_{\tb,\bOmega})/ F_h^{(i,j)}(\ob_{\tb,\bOmega}) - 1)^2$.

The estimation results are presented in Table \ref{table:simsummary}. First, we observe that the IBIAS$^2$ of all three estimators decreases to zero as $|D|$ increases. This confirms the asymptotic unbiasedness of the periodogram and the kernel spectral density estimator, as stated in Equation (\ref{eq:Per-thm}) and Theorem \ref{thm:KSDE1}. The magnitudes of IBIAS$^2$ are comparable across the three estimators. Within the same model and observation window, however, the IBIAS$^2$ for the cross pseudo-spectrum is larger than that of the marginal pseudo-spectrum. There are two explanations for the underperformance of the cross statistics. First, for large $\|\ob\|$ values, $F_h^{(1,2)}(\ob)$ is close to zero, whereas $F_h^{(1,1)}$ and $F_h^{(2,2)}$ remain strictly positive, which inflates the numerical error when calculating the relative bias as in (\ref{eq:rBIAS}). Second, while the marginal pseudo-spectrum is always positive-valued, the cross pseudo-spectrum is complex-valued that contains additional phase information between the two point processes, making it difficult to estimate the structure using cross statistics.

{\footnotesize
\begin{table}[ht!]
	\centering
	\begin{tabular}[t]{ccccccccc}
	\multicolumn{3}{c}{} & \multicolumn{2}{c}{$\widehat{I}_{h,n}$} & \multicolumn{2}{c}{$\widehat{F}_\text{opt}$} & \multicolumn{2}{c}{$\widehat{F}_\text{CV}$} \\
	\cmidrule(l{3pt}r{3pt}){4-5} \cmidrule(l{3pt}r{3pt}){6-7} \cmidrule(l{3pt}r{3pt}){8-9}
	\multirow{-2}{*}{Model} & \multirow{-2}{*}{\shortstack{Pseudo-\\spectrum}} & \multirow{-2}{*}{Window} & IBIAS$^2$ & IMSE & IBIAS$^2$ & IMSE & IBIAS$^2$ & IMSE\\
	\hline \hline
	&  & $[-5,5]^2$ & 0.00 & 1.14 & 0.00 & 0.83 & 0.00 & 0.24\\
	&  & $[-10,10]^2$ & 0.00 & 1.07 & 0.00 & 0.29 & 0.00 & 0.12\\
	& \multirow{-3}{*}{$F^{(1,1)}_h$} & $[-20,20]^2$ & 0.00 & 1.01 & 0.00 & 0.12 & 0.00 & 0.08\\
	\cmidrule{2-9}
	&  & $[-5,5]^2$ & 0.00 & 1.04 & 0.00 & 0.75 & 0.00 & 0.18\\
	&  & $[-10,10]^2$ & 0.00 & 1.02 & 0.00 & 0.26 & 0.00 & 0.10\\
	& \multirow{-3}{*}{$F^{(2,2)}_h$} & $[-20,20]^2$ & 0.00 & 1.00 & 0.00 & 0.11 & 0.00 & 0.07\\
	\cmidrule{2-9}
	&  & $[-5,5]^2$ & 0.39 & 145.85 & 0.27 & 101.75 & 0.04 & 16.72\\
	&  & $[-10,10]^2$ & 0.18 & 117.77 & 0.05 & 28.29 & 0.02 & 9.95\\
	\multirow{-9}{*}{M1} & \multirow{-3}{*}{$F^{(1,2)}_h$} & $[-20,20]^2$ & 0.21 & 102.52 & 0.02 & 10.97 & 0.01 & 7.09\\
	\cmidrule{1-9}
	&  & $[-5,5]^2$ & 0.01 & 1.19 & 0.01 & 0.90 & 0.01 & 0.38\\
	&  & $[-10,10]^2$ & 0.00 & 1.03 & 0.00 & 0.31 & 0.00 & 0.16\\
	& \multirow{-3}{*}{$F^{(1,1)}_h$} & $[-20,20]^2$ & 0.00 & 1.01 & 0.00 & 0.13 & 0.00 & 0.13\\
	\cmidrule{2-9}
	&  & $[-5,5]^2$ & 0.01 & 1.02 & 0.01 & 0.73 & 0.01 & 0.24\\
	&  & $[-10,10]^2$ & 0.00 & 1.00 & 0.00 & 0.25 & 0.00 & 0.12\\
	& \multirow{-3}{*}{$F^{(2,2)}_h$} & $[-20,20]^2$ & 0.00 & 0.98 & 0.00 & 0.11 & 0.00 & 0.11\\
	\cmidrule{2-9}
	&  & $[-5,5]^2$ & 0.06 & 27.80 & 0.04 & 19.62 & 0.02 & 5.48\\
	&  & $[-10,10]^2$ & 0.04 & 22.52 & 0.01 & 5.69 & 0.01 & 2.55\\
	\multirow{-9}{*}{M2} & \multirow{-3}{*}{$F^{(1,2)}_h$} & $[-20,20]^2$ & 0.04 & 20.16 & 0.01 & 2.29 & 0.00 & 2.14\\
	\cmidrule{1-9}
	&  & $[-5,5]^2$ & 0.01 & 1.16 & 0.01 & 0.87 & 0.01 & 0.31\\
	&  & $[-10,10]^2$ & 0.00 & 1.05 & 0.00 & 0.31 & 0.00 & 0.15\\
	& \multirow{-3}{*}{$F^{(1,1)}_h$} & $[-20,20]^2$ & 0.00 & 1.00 & 0.00 & 0.13 & 0.00 & 0.13\\
	\cmidrule{2-9}
	&  & $[-5,5]^2$ & 0.01 & 1.06 & 0.01 & 0.75 & 0.01 & 0.19\\
	&  & $[-10,10]^2$ & 0.00 & 1.05 & 0.00 & 0.25 & 0.00 & 0.12\\
	& \multirow{-3}{*}{$F^{(2,2)}_h$} & $[-20,20]^2$ & 0.00 & 0.99 & 0.00 & 0.11 & 0.00 & 0.11\\
	\cmidrule{2-9}
	&  & $[-5,5]^2$ & 0.40 & 288.41 & 0.28 & 199.14 & 0.10 & 34.83\\
	&  & $[-10,10]^2$ & 0.48 & 208.98 & 0.11 & 49.69 & 0.05 & 20.46\\
	\multirow{-9}{*}{M3} & \multirow{-3}{*}{$F^{(1,2)}_h$} & $[-20,20]^2$ & 0.34 & 174.90 & 0.03 & 19.10 & 0.03 & 18.50\\
	\bottomrule
	\end{tabular}
	\caption{IBIAS$^2$ and IMSE for the three pseudo-spectrum estimators.}
	\label{table:simsummary}
\end{table}
}

Second, we compare the IMSE. Unlike IBIAS$^2$, there is a stark distinction between the IMSE of the raw periodogram and those of the two kernel-smoothed estimators. The IMSE of the raw periodogram remains substantial even for the largest window, whereas the IMSE of both kernel spectral density estimators converges to zero as the window size increases. This observation corroborates the theoretical results in Equation (\ref{eq:Per-thm}) and Theorem \ref{thm:KSDE1}.
Within the same model and observation window, the IMSE of the cross pseudo-spectrum tends to be larger than that of the marginal pseudo-spectrum, which can be attributed to the numerical and structural issues discussed previously. Moreover, it is noteworthy that the IMSE of $\widehat{F}^{(1,2)}$ under the negative inter-species interaction model (M3) is larger than that under the positive inter-species interaction model (M2).

Last, we compare the performance of the two bandwidth selection methods. Within the same model and observation window, the IBIAS$^2$ of $\widehat{F}_{\text{CV}}$ tends to be smaller than that of $\widehat{F}_{\text{opt}}$. According to Theorem \ref{thm:KSDE-bound-b} in the Appendix, the bias arising from the estimation of the first-order intensity is of order $O(|D_n|^{-1}b^{-d})$. This leads to a larger bias for $\widehat{F}_{\text{opt}}$, provided the cross-validation method selects a larger bandwidth (see Table \ref{table:simband}). Furthermore, the IMSE of $\widehat{F}_{\text{CV}}$ is uniformly smaller than that of $\widehat{F}_{\text{opt}}$, with the difference being more pronounced for small and moderate window sizes ($A \in \{10,20\}$). This suggests that, for small to moderate windows, the benefits of the data-driven cross-validation method in selecting the bandwidth outweigh the additional computational costs required to compute $b_{\text{CV}}$. Finally, by defiintion, the IMSE of $\widehat{F}_{\text{opt}}$ achieves the fastest convergence rate to zero. Hence, the results in Table \ref{table:simsummary} for $D = [-20,20]^2$ also support the evidence that $b_{\text{CV}}$ is asymptotically optimal.


\section{Real data application} \label{sec:data}
In this section, we apply our methods to the tree species data from the tropical forest of Barro Colorado Island (BCI). The dataset contains spatial locations (within a $1000\times500~\text{m}^2$ rectangular window) of trees and shrubs with stem diameters larger than 10 mm (\cite{m:cond-19}). Point process modeling of these tree species has also been considered in \cite{p:jal-15, p:waa-16} (fitting parametric inhomogeneous models in the spatial domain) and \cite{p:gra-23} (analyzing frequency domain features under the stationarity assumption).

For the analysis, we focus on five specific tree species, namely \textit{Capparis frondosa} ($X_1$), \textit{Hirtella triandra} ($X_2$), \textit{Protium panamense} ($X_3$), \textit{Protium tenuifolium} ($X_4$), and \textit{Tetragastris panamensis} ($X_5$) from the seventh census, which are also studied in \cite{p:jal-15}. The spatial point patterns of these species are displayed in Figure \ref{fig:bci}(a). These patterns clearly reveal inhomogeneity in the first-order intensities, which is further supported by the frequency domain goodness of fit check (see Appendix \ref{sec:DA} for details). Accordingly, we model these point patterns using a five-variate SOIRS process.

To investigate frequency domain features, we compute the pseudo-spectrum estimator. Given the relatively large observation window measured in meter, the first-order intensities are on the order of $10^{-3}$. For better comparison with the results obtained using synthetic data in Section \ref{sec:sim}, we rescale the window to $50\times25~\text{unit}^2$, where one unit corresponds to 20 m. This ensures that the fitted first-order intensity functions have a scale between 1 and 10 (the same scale used in the simulation study) and corresponds to the moderate to large window size.

 Moreover, when calculating the feasible periodogram, we fit a log-linear first-order intensity model using ten environmental covariates as provided in \cite{p:jal-15}. For the kernel spectral estimator $\widehat{F}_{n,b}$, we employ the cross-validation bandwidth selection method on a computational grid $\{(\frac{1.5\pi t_1}{A_1}, \frac{1.5\pi t_2}{A_2})^\top\}$ with $(A_1, A_2) = (50, 25)$ and $W = [-1.5\pi, 1.5\pi]^2$. Consequently, we select the bandwidth  $b = 0.62$.

\begin{figure}[h]
	\centering
	\begin{subfigure}{0.17\textwidth}
		\centering
		\includegraphics[width=1.04\textwidth]{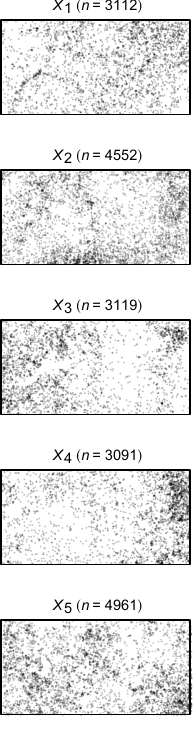}
		\caption{}
	\end{subfigure}
	\begin{subfigure}{0.82\textwidth}
		\centering
		\includegraphics[width=0.96\textwidth]{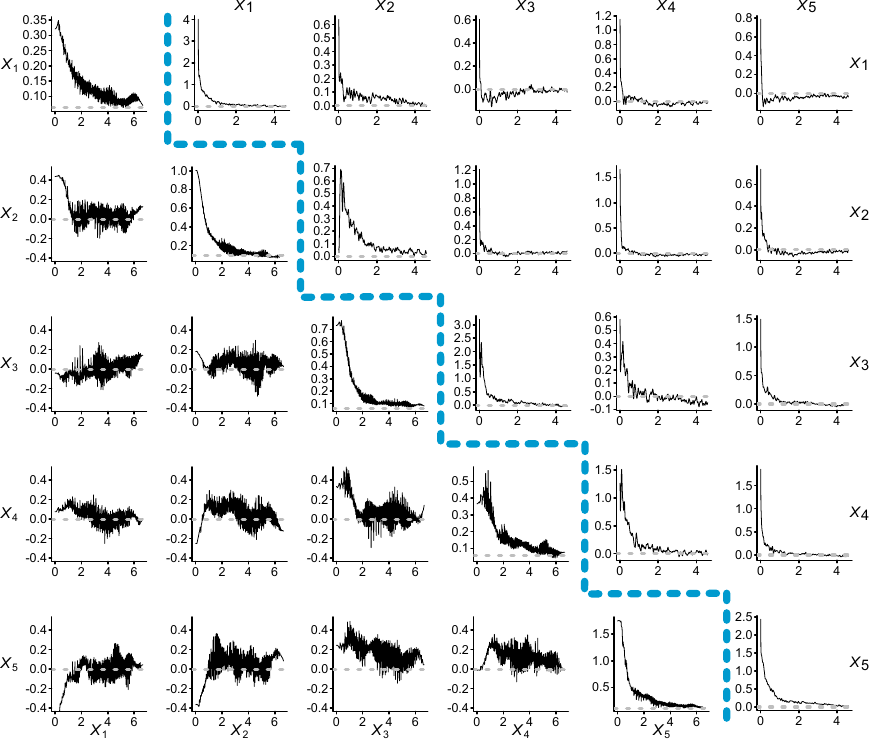}
		\caption{}
	\end{subfigure}
	\caption{(a): Point patterns of five tree species from the BCI dataset. The number in parentheses indicates the total number of observations. Lower (b): Radial average of $\widehat{R}^{(i,j)}(\ob)$ (for $i \neq j$) and $\widehat{F}^{(i,j)}(\ob)$ (for $i=j$). Upper (b): Nonparametric shifted PCF estimators $\widehat{g}^{(i,j)}(r)-1$.}
	\label{fig:bci}
\end{figure}

As one of the reviewers pointed out, to compare the cross pseudo-spectrum pairs on a uniform scale, in the lower diagonal part of Figure \ref{fig:bci}(b), we plot the radial average of the real part of the coherence estimator $\widehat{R}^{(i,j)}(\ob) = \Re \widehat{F}^{(i,j)}(\ob)/ \{\widehat{F}^{(i,i)}(\ob)\widehat{F}^{(j,j)}(\ob)\}^{1/2}$ for $i \neq j$ and $\widehat{F}^{(i,i)}(\ob)$ for $i=j$. The horizontal dotted lines indicate the asymptotic values. For comparison of our approach with spatial domain methods, the upper diagonal part of Figure \ref{fig:bci}(b) plots the nonparametric shifted pair correlation function (PCF) estimators, which correspond to the estimator of $\ell^{(i,j)}(\cdot)$ in our context.

We first focus on the marginal estimators. For all five species, the values $\widehat{F}^{(i,i)}(\ob)$ remain bounded near the origin and lie above their respective asymptotes. These observations indicate that the fitted first-order intensities using the log-linear regression model are reasonable approximations of the underlying true intensities (see, Remark \ref{rmk:near-origin}) and that all five species exhibit marginal clustering. Additionally, the pseudo-spectrum of $X_4$ shows a peak around $\|\ob\| \approx 1$, whereas the pseudo-spectra of the other four species are monotonically decreasing. Thanks to Corollary \ref{coro:KSDE2}, the nonzero peak observed in $\widehat{F}^{(4,4)}(\ob)$ is statistically valid even in the case we incorrectly specified the first-order intensity of $X_4$. Since a nonzero peak in the frequency domain corresponds to oscillatory behavior in the spatial domain, this suggests that $g^{(4,4)}(r) - 1$ may behave like a damping oscillatory function (e.g., $g^{(4,4)}(r) - 1 = \alpha e^{-\beta r} \cos(2\pi r)$ for some $\alpha, \beta > 0$). However, the oscillatory behavior of $X_4$ is less pronounced in the PCF plot.

Next, we examine the cross estimators. The cross pseudo-spectrum and the coherence provide insights into the second-order interactions, with the latter standardized to satisfy $|\widehat{R}^{(i,j)}(\ob)| \leq 1$. For example, $\widehat{R}^{(1,2)}$ is positive near the origin, indicating overall inter-species clustering between $X_1$ and $X_2$, which is corroborated by $\widehat{g}^{(1,2)}(r) - 1 > 0$ in the PCF plot. Similarly, negative values of $\widehat{R}^{(1,5)}$ near the origin indicate that $X_1$ and $X_5$ repel each other, although this repulsion is less evident in the PCF plots, as $\widehat{g}^{(1,5)}(r) - 1$ exhibits a large peak near the origin. Our results are also consistent with the fitted PCF results in \cite{p:jal-15}, where $g^{(i,j)}(r;\widehat{\btheta}) - 1 > 0$ (resp. $<0$) indicates inter-species clustering (resp. repulsion) between $X_i$ and $X_j$.

Finally, the coherence and its estimator also provide information about the asymptotic uncorrelatedness of two point processes. However, while coherence has a clear interpretation for SOS point processes, its use for inhomogeneous processes is not straightforward. In this regard, in Appendix \ref{sec:coh}, we propose new correlation structures for SOIRS processes via the coherence and partial coherence of the intensity reweighted process $\underline{\widetilde{X}}$ of $\underline{X}$. See Appendix \ref{sec:coh} for further details on the coherence and partial coherence analysis and its application to the BCI data.

\section{Concluding remarks}

In this article, we propose spectral methods for multivariate inhomogeneous point processes. The main thrust of our approach is to define an analogue of the spectrum of a stationary process for inhomogeneous processes, which we term the pseudo-spectrum. Under a new asymptotic framework, we develop a consistent nonparametric estimator of the pseudo-spectrum. Empirical studies indicate that our estimator performs well in finite samples and, in some cases, captures the features of point patterns more effectively than spatial domain methods.

This study focuses on the case where the number $m$ of different process types is fixed. For large (but fixed) $m$, our approach appears to offer computational advantages over spatial domain methods, such as computing the PCF matrix. This is because our method requires only $m$ single summations to compute the DFT, whereas spatial domain methods require computing the PCF for each pair of processes, which involves double summations. Extending our results to high-dimensional settings will require certain regularity conditions on the pseudo-spectrum or its inverse, which we leave for future research.

\section*{Acknowledgments}
QWD and JY acknowledge the support of the Taiwan’s National Science and Technology Council (grant 113-2118-M-001-012). QWD's research assistantship was partly supported by Academia Sinica's Career Development Award Grant (AS-CDA-114-M03). The authors thank Yongtao Guan for fruitful discussions and suggestions and Abdollah Jalilian for kindly share the BCI data used in Section \ref{sec:data} and Appendix \ref{sec:bci-add}.

\bibliography{bib-PSD}
\bibliographystyle{plainnat}

\pagebreak

\appendix
\counterwithin{figure}{section}
\counterwithin{table}{section}


\section{Background on the SOIRS process} \label{sec:joint}

\subsection{The intensity reweighted process} \label{sec:construction}
The authors of \cite{p:bad-00} were the first to define the SOIRS process through the marked point process framework. For simplicity, let $X$ be a univariate spatial point process with strictly positive first-order intensity $\lambda(\xb)$. Let
\begin{equation*}
\widetilde{N}(A) = \sum_{\xb \in X} \frac{1}{\lambda(\xb)} I(\xb \in A), \quad A \in \mathcal{B}(\R^d),
\end{equation*}
where $I(\cdot)$ denotes the indicator function. Then, \cite{p:bad-00} defined $X$ to be SOIRS if $\widetilde{N}(\cdot)$ is a second-order stationary measure. However, this construction appears to be not accurate. 
To see this, let $A$ and $B$ be bounded Borel sets and let $A+\tb$ and $B+\tb$ denote their translations by $\tb \in \R^d$. Under their construction, $X$ is SOIRS if $\Ex[\widetilde{N}(A+\tb) \widetilde{N}(B+\tb)] = \Ex[\widetilde{N}(A) \widetilde{N}(B)]$ for any $\tb \in \R^d$.
However, a simple calculation gives
\begin{equation*}
\begin{aligned}
& \Ex[\widetilde{N}(A+\tb) \widetilde{N}(B+\tb)] \\
&= \Ex\left[ \sum_{\xb \in X} \frac{1}{\lambda(\xb)^2} I( \xb \in (A+\tb) \cap (B+\tb)) \right] 
+ \Ex\left[ \sum_{\xb, \yb \in X}^{\neq} \frac{1}{\lambda(\xb) \lambda(\yb)}  I(\xb \in A+\tb)  I(\yb \in B+\tb) \right] \\
&= \int_{(A+\tb) \cap (B+\tb)} \frac{1}{\lambda(\xb)} d\xb +
\iint_{(A+\tb) \times (B+\tb)} g(\xb-\yb) d\xb d\yb \\
&= \int_{(A+\tb) \cap (B+\tb)} \frac{1}{\lambda(\xb)} d\xb +
\iint_{A \times B} g(\xb-\yb) d\xb d\yb,
\end{aligned}
\end{equation*}
where $g(\xb - \yb) = \lambda_2(\xb,\yb) / \{\lambda(\xb) \lambda(\yb) \}$ denotes the pair correlation function (PCF) of $X$. 
Therefore, the calculation of $\Ex [ \widetilde{N}(A) \widetilde{N}(B)]$ between the Equations (2) and (3) of \cite{p:bad-00} requires an additional term, namely $\int_{A \cap B} \frac{1}{\lambda(\xb)} d\xb$. Unfortunately, 
\begin{equation*}
\int_{(A+\tb) \cap (B+\tb)} \frac{1}{\lambda(\xb)} d\xb \neq \int_{A \cap B} \frac{1}{\lambda(\xb)} d\xb,
\end{equation*}
unless $\lambda(\cdot)$ is constant or $A \cap B = \emptyset$. Hence, in general, $\Ex[\widetilde{N}(A+\tb) \widetilde{N}(B+\tb)] \neq \Ex[\widetilde{N}(A) \widetilde{N}(B)]$, and $\widetilde{N}(\cdot)$ cannot be a stationary measure.

Alternatively, we adopt the concept of a weighted point process to construct the multivariate SOIRS process. 

\begin{definition}[Weighted spatial point process] \label{def:weight}
Let $X$ be a univariate point process on $D \subset \R^d$, and let $w(\cdot)$ be a bounded, non-negative function on $D$. Choose $k \in \mathbb{N}$ such that $k \geq \sup_{\xb \in D} w(\xb)$. Then, $p(\xb) = w(\xb)/k$ defines a probability (i.e., $p(\xb) \in [0,1]$). Let $X^{(1)}, \dots, X^{(k)}$ be independent copies of $X$. The weighted point process of $X$ with respect to $w(\cdot)$ is defined as the thinned process of the superposition of $X^{(1)}, \dots, X^{(k)}$, with independent thinning probability $p(\cdot)$.  

For a multivariate point process $\underline{X} = (X_1, \dots, X_m)$ and a bounded, non-negative vector-valued function $\underline{w}(\xb) = (w_1(\xb), \dots, w_m(\xb))$, the weighted point process of $\underline{X}$ with respect to $\underline{w}$ is an $m$-variate process where the $i$th component is the weighted process of $X_i$ with respect to weight $w_i$.
\end{definition}

Recall $\underline{\lambda}_1(\xb)$ in (\ref{eq:lambda-vector}). Using the above notion, we now define the multivariate SOIRS process.

\begin{definition} \label{def:k-IRS}
An $m$-variate point process $\underline{X}$ on $D \subset \R^d$ is called second-order intensity reweighted stationary (SOIRS) if $\underline{\widetilde{X}}$,
the weighted process of $\underline{X}$ with weights $(1/\lambda_1^{(1)}(\cdot), \dots, 1/\lambda_1^{(m)}(\cdot))$, is second-order stationary process on $D$. Here, we refer to $\underline{\widetilde{X}}$ as the intensity reweighted process of $\underline{X}$.
\end{definition}

Although this definition differs slightly from that of \cite{p:bad-00}, all other statistical properties presented therein remain unchanged.

\subsection{Comparison between the spatial and frequency domain estimands} \label{sec:K-function}

The inhomogeneous $K$-function (\cite{p:bad-00}) is widely used to assess the second-order structure of a SOIRS process. For simplicity, let $X$ be a univariate isotropic SOIRS process with first-order intensity $\lambda(\xb)$ and PCF $g(\xb)$. Then, the inhomogeneous $K$-function of $X$ is defined as
\begin{equation} \label{eq:Kinhom}
K_{\text{inhom}}(r) = \int_{B(\mathbf{0}, r)} g(\xb) \, d\xb, \quad r \in [0,\infty),
\end{equation}
where $B(\mathbf{0}, r)$ denotes the $d$-dimensional Euclidean ball of radius $r>0$ centered at the origin. The rationale behind $K_{\text{inhom}}(r)$ is that it coincides with Ripley's $K$-function of the corresponding intensity reweighted process $\widetilde{X}$ of $X$. Moreover, \cite{p:bad-00} proposed the nonparametric unbiased estimator of $K_{\text{inhom}}(r)$ as
\begin{equation} \label{eq:Khat}
\widehat{K}_n(r) = |D_n|^{-1} \sum_{\xb, \yb \in X \cap D_n}^{\neq} \frac{w(\xb,\yb) \, I( \|\xb-\yb\| \leq r)}{\lambda(\xb) \lambda(\yb)},
\end{equation}
where $w(\xb,\yb)$ is an edge correction factor and $\{D_n\}$ is a sequence of observation windows. Under an increasing domain framework, the distributional properties of $\widehat{K}_n(r)$ are investigated in \cite{p:waa-09}.

Now, we borrow the idea of the ``intensity reweighting'' to define the frequency domain estimand. Recall that $\widetilde{X}$ is SOS with unit first-order intensity and having $\ell_2(\cdot)$ as the reduced covariance intensity. Therefore, the spectrum of $\widetilde{X}$ is given by
\begin{equation*}
\widetilde{F}(\ob) = (2\pi)^{-d} + \mathcal{F}^{-1}(\ell_2)(\ob), \quad \ob \in \R^d.
\end{equation*}
To estimate $\widetilde{F}(\ob)$, we consider the intensity reweighted version of the (untapered) periodogram
$I_{n}^{\text{IR}}(\ob) = \big| \mathcal{J}_{n}^{\text{IR}}(\ob) - \Ex[\mathcal{J}_{n}^{\text{IR}}(\ob)] \big|^2$, where
\begin{equation} \label{eq:J-IR}
\mathcal{J}_{n}^{\text{IR}}(\ob) = (2\pi)^{-d/2} |D_n|^{-1/2} \sum_{\xb \in X \cap D_n} \frac{1}{\lambda(\xb)} \exp(-i \xb^\top \ob).
\end{equation}

However, this approach seems not so successful in at least three reasons. Firstly, by applying Campbell's formula, the expectation of $I_n^{\text{IR}}(\ob)$ is given by
\begin{equation} \label{eq:I-expectation}
\Ex[I_{n}^{\text{IR}}(\ob)] 
= (2\pi)^{-d} |D_n|^{-1} \int_{D_n} \frac{1}{\lambda(\xb)} \, d\xb
+ (2\pi)^{-d} |D_n|^{-1} \int_{D_n^2} \ell_2(\xb-\yb) \, e^{-i (\xb-\yb)^\top \ob} \, d\xb \, d\yb.
\end{equation}
Therefore, if we rely solely on the increasing domain framework as our asymptotics, the first term in (\ref{eq:I-expectation}) does not necessarily converge as $|D_n| \rightarrow \infty$. This implies that a stronger structural assumption on $\lambda(\cdot)$, such as that in Definition \ref{def:infill}(ii), must be imposed on top of the increasing domain framework to ensure the convergence of the periodogram-based estimator. Secondly, suppose that the limit in (\ref{eq:I-expectation}) exists. For example, consider the SOS case where $\lambda(\xb) \equiv \lambda$ is constant. Then, we have
\begin{equation*}
\lim_{n \rightarrow \infty} \Ex[I_{n}^{\text{IR}}(\ob)] 
= (2\pi)^{-d} \frac{1}{\lambda} + \mathcal{F}^{-1}(\ell_2)(\ob),
\end{equation*}
which is not equal to $\widetilde{F}(\ob)$ unless $\lambda = 1$. This indicates that an additional bias correction to the intensity reweighted periodogram is required to obtain an asymptotically unbiased estimator of $\widetilde{F}(\ob)$. Such a debiasing procedure complicates the analysis of periodogram-based estimation and inference for SOIRS processes. For example, positive-definiteness of the periodogram may be lost after debiasing.
Lastly, since the first-order intensity is typically unknown, $\mathcal{J}_{n}^{\text{IR}}(\ob)$ in (\ref{eq:J-IR}) must be estimated. However, it is unclear what the intensity reweighted periodogram estimates when the first-order intensity is misspecified, a situation considered in Section \ref{sec:feasibleDFT}.

Therefore, we claim that using the ordinary periodogram under the new asymptotic framework in Definition \ref{def:infill} provides the most natural approach to establishing spectral methods for SOIRS processes. This framework yields clear interpretations of the estimand with theoretical guarantees even under misspecified first-order intensity models.


\section{Properties of the pseudo-spectrum and its raw estimator} \label{sec:ps-prop}

Recall the pseudo-spectrum corresponding to the data taper $h$: 
\begin{equation*}
F_{h}(\ob) =  (2\pi)^{-d} H_{h,2}^{-1} \diag (H_{h^2 \underline{\lambda}})
+ H_{h,2}^{-1} \Big( H_{h^2 \underline{\lambda} \cdot \underline{\lambda}^\top} \odot
\mathcal{F}^{-1}(L_2)(\ob) \Big).
\end{equation*}

Below, we list three important mathematical properties of $F_h$.

\begin{proposition} \label{prop:Fh}
Let $\{\underline{X}_{D_n}\}$ be an $m$-variate locally stationary SOIRS process. Suppose further that $L_2(\cdot)$ in Definition \ref{def:infill}(i) belongs to $L^{m \times m}_1(\R^d)$. Then the following assertions hold.
\begin{itemize}
\item[(i)] If the first-order intensity function $\underline{\lambda}(\xb)$ is constant, then, regardless of the choice of $h$, $F_h$ coincides with the spectrum of the corresponding SOS process.

\item[(ii)] For any $\ob \in \R^d$, $F_h(\ob)$ is conjugate symmetric and positive definite.

\item[(iii)] If $\{\underline{X}_{D_n}\}$ corresponds to a sequence of $m$-variate inhomogeneous Poisson point processes, then the pseudo-spectrum simplifies to
\begin{equation*}
F_h(\ob) = (2\pi)^{-d} \, \diag \Bigg(
\frac{\int_{[-1/2,1/2]^d} (h^2 \lambda^{(1)})(\xb) \, d\xb}{\int_{[-1/2,1/2]^d} h^2(\xb) \, d\xb}, \dots,
\frac{\int_{[-1/2,1/2]^d} (h^2 \lambda^{(m)})(\xb) \, d\xb}{\int_{[-1/2,1/2]^d} h^2(\xb) \, d\xb}
\Bigg).
\end{equation*}
This shows that the pseudo-spectrum of an inhomogeneous Poisson point process is a constant diagonal matrix, analogous to the constant spectrum of a homogeneous Poisson process.
\end{itemize}
\end{proposition}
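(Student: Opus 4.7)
The plan is to verify all three assertions by direct computation from the explicit formula (\ref{eq:f-IRS-mat}), using the identification $F_h(\ob)=\lim_{n\to\infty}\Ex[I_{h,n}(\ob)]$ announced in the display just before Definition \ref{defin:pseudo} (whose formal proof is deferred to Theorem \ref{thm:asymDFT-IRS}) to handle the structural statement in (ii).

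For (i), I would substitute $\underline{\lambda}(\xb)\equiv \underline{\lambda}_0$ into the normalizing constants and pull the constants outside the integrals, obtaining $H_{h^2\lambda^{(j)},1}=\lambda_0^{(j)}H_{h,2}$ and $H_{h^2\lambda^{(i)}\lambda^{(j)},1}=\lambda_0^{(i)}\lambda_0^{(j)}H_{h,2}$, so that the factor $H_{h,2}^{-1}$ cancels everywhere in (\ref{eq:f-IRS-mat}) and
\begin{equation*}
F_h(\ob)=(2\pi)^{-d}\diag(\underline{\lambda}_0)+(\underline{\lambda}_0\underline{\lambda}_0^\top)\odot \mathcal{F}^{-1}(L_2)(\ob).
\end{equation*}
Meanwhile, under constant intensity the SOIRS factorization (\ref{eq:IRS-cov}) yields $\Gamma_{2,\text{red}}(\xb)=(\underline{\lambda}_0\underline{\lambda}_0^\top)\odot L_2(\xb)$; applying $\mathcal{F}^{-1}$ and comparing with (\ref{eq:spectrum-mat}) identifies $F_h$ with the SOS spectrum. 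Since $h$ does not enter the final expression, independence of the taper is immediate.

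For (ii), I would observe that the raw periodogram $I_{h,n}(\ob)=\underline{J}_{h,n}(\ob)\underline{J}_{h,n}(\ob)^{*}$ is, by construction, an almost-surely rank-one Hermitian positive semidefinite $m\times m$ matrix. Hence $\Ex[I_{h,n}(\ob)]$ lies in the closed cone of Hermitian positive semidefinite matrices for every $n$. Appealing to $\lim_{n\to\infty}\Ex[I_{h,n}(\ob)]=F_h(\ob)$ from Theorem \ref{thm:asymDFT-IRS} and closedness of that cone gives that $F_h(\ob)$ is itself Hermitian (i.e., conjugate symmetric) and positive semidefinite, covering both claims in the proposition's terminology. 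Part (iii) is then a direct specialization: for an inhomogeneous Poisson process $\Gamma_{2,n}\equiv 0$, so the factorization (\ref{eq:IRS-cov}) forces $L_2\equiv 0$ a.e., hence $\mathcal{F}^{-1}(L_2)\equiv 0$; the second term in (\ref{eq:f-IRS-mat}) vanishes, and rewriting the $(j,j)$ entry of the diagonal remainder as $(2\pi)^{-d}H_{h^2\lambda^{(j)},1}/H_{h,2}$ produces exactly the displayed ratio of integrals.

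The main obstacle is cosmetic and concentrated in (ii): a direct Schur-product attack would require $\mathcal{F}^{-1}(L_2)(\ob)$ itself to be positive semidefinite, which does \emph{not} follow automatically from positive-semidefiniteness of the full intensity-reweighted spectrum $(2\pi)^{-d}I+\mathcal{F}^{-1}(L_2)(\ob)$, and attempting to combine it with the diagonal $(2\pi)^{-d}H_{h,2}^{-1}\diag(H_{h^2\underline{\lambda}})$ under the Hadamard product is awkward because the diagonal piece is not a Hadamard factorization of a PSD matrix. Routing through the limit of expectations of almost-surely PSD matrices bypasses this entirely and is the cleanest route; everything else reduces to pulling constants through integrals and comparing formulas.
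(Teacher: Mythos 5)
Parts (i) and (iii) are correct and follow the paper's own proof essentially verbatim: substitute the constant intensity and cancel $H_{h,2}$ for (i), and observe $L_2\equiv 0$ for the Poisson case in (iii). The only genuinely different step is (ii), and that is where your argument has a gap. The paper proves (ii) by invoking the integral representation of Theorem \ref{thm:local-S}: $F_h(\cdot)=H_{h,2}^{-1}\int h(\ubb)^2F^{\ubb}(\cdot)\,d\ubb$, where each $F^{\ubb}$ is the spectrum of a genuine SOS process (constructed by thinning the intensity reweighted process), and conjugate symmetry and positive definiteness are inherited from the local spectra. Your route instead takes $F_h(\ob)$ as the limit of $\Ex[I_{h,n}(\ob)]$, each of which is Hermitian positive \emph{semi}definite, and uses closedness of the PSD cone. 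This establishes conjugate symmetry and positive semidefiniteness, but it cannot establish positive definiteness as the proposition literally asserts: the PD cone is open, so a limit of PSD (even PD) matrices is only guaranteed to be PSD. Your remark that this ``covers both claims'' elides exactly the point where the paper's route does more work, since the local spectra carry the strictly positive atom $(2\pi)^{-d}\diag(\underline{\lambda}(\ubb))$ that the averaging preserves.

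There is a second, structural problem with routing (ii) through the periodogram limit. Proposition \ref{prop:Fh} is stated under only the local-stationarity structure and $L_2\in L_1^{m\times m}(\R^d)$, whereas the convergence $\lim_{n\to\infty}\Ex[I_{h,n}(\ob_n)]=F_h(\ob)$ in Theorem \ref{thm:asymDFT-IRS} is proved under the additional Assumptions \ref{assum:A}--\ref{assum:beta2} (cumulant integrability, taper continuity, conditions on the intensity model), and for sequences of frequencies asymptotically distant from the origin rather than at a fixed $\ob$ (in particular $\ob=\mathbf{0}$ needs an auxiliary approximating sequence). So your proof of (ii) imports hypotheses the proposition does not assume and is therefore not self-contained, while the paper's argument via Theorem \ref{thm:local-S} is a purely deterministic identity valid under exactly the stated hypotheses. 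To repair your version you would either need to add these assumptions or, better, switch to the paper's representation; your own closing paragraph correctly identifies why a direct Hadamard-product attack fails, but the limit-of-expectations detour trades that difficulty for the two issues above.
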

\begin{proof}
By substituting the constant first-order intensity $\underline{\lambda}$ into the expression of $F_h$, we obtain, for any $h$,
\begin{equation*}
F_h(\ob) = (2\pi)^{-d} \, \diag(\underline{\lambda}) + (\underline{\lambda} \underline{\lambda}^\top) \odot \mathcal{F}^{-1}(L_2)(\ob), \quad \ob \in \R^d.
\end{equation*}
Therefore, $F_h$ coincides with the spectrum of the SOS process, proving (i).

To show (ii), we use the integral representation of $F_h(\ob)$ given in Theorem \ref{thm:local-S}. Since the local spectrum $F^{\ubb}(\cdot)$ is conjugate symmetric and positive definite, $F_h$ also inherits these properties.

Lastly, (iii) follows directly from the fact that $L_2(\xb)$ is a zero matrix for an inhomogeneous Poisson process.
\end{proof}

Now, we study the statistical properties of the DFTs and periodograms, where the latter serves as a raw estimator of $F_h$. Recall the feasible DFT $\underline{\widehat{J}}_{h,n}(\ob)$ and periodogram $\underline{\widehat{I}}_{h,n}(\ob)$ defined in Section \ref{sec:feasibleDFT}. In this section, we focus only on the case where the parametric first-order intensity model is correctly specified; results for the misspecified case are provided in Appendix \ref{sec:KSDE-bound2}.

To establish asymptotic properties of the DFT and periodogram, we define two sequences of frequencies $\{\ob_{1,n}\}_{n=1}^\infty$ and $\{\ob_{2,n}\}_{n=1}^\infty$ on $\R^d$ to be asymptotically distant if 
\begin{equation}
\lim_{n \to \infty} \|\aB \odot (\ob_{1,n} - \ob_{2,n})\| = \infty,
\end{equation} 
where $\aB = \aB(n)$ denotes the vector of side lengths of $D_n$.

Below theorem states that the DFTs and periodograms are asymptotically uncorrelated for two asymptotically distant frequencies and that the periodogram is an asymptotically unbiased estimator (except at the origin) of the pseudo-spectrum.

\begin{theorem} \label{thm:asymDFT-IRS}
Suppose that Assumptions \ref{assum:A}, \ref{assum:B}(for $k = 2$), \ref{assum:C}, \ref{assum:beta}, and \ref{assum:beta2}(ii) (for $r>2$) hold. 
Moreover, we assume that the true first-order intensity function is $\underline{\lambda}(\cdot;\bbeta_0)$, where $\bbeta_0$ is in Assumption \ref{assum:beta2}. Let $\{\ob_{1,n}\}$ and $\{\ob_{2,n}\}$ be two sequences in $\R^d$ such that $\{\ob_{1,n}\}$, $\{\ob_{2,n}\}$, and $\{\mathbf{0}\}$ are pairwise asymptotically distant. Moreover, let $\{\ob_{n}\}$ be a sequence that is asymptotically distant from $\{\mathbf{0}\}$ and satisfies $\lim_{n\rightarrow \infty} \ob_{n} = \ob \in \R^d$. Then,
\begin{eqnarray}
&& \lim_{n\rightarrow \infty} \cov\big( \underline{\widehat{J}}_{h,n}(\ob_{1,n}), \underline{\widehat{J}}_{h,n}(\ob_{2,n}) \big)
= O_m \quad \text{and} \label{eq:lim-DFT-IRS1} \\
&& \lim_{n\rightarrow \infty} \var\big( \underline{\widehat{J}}_{h,n}(\ob_{n}) \big)
= \lim_{n\rightarrow \infty} \Ex\big[\widehat{I}_{h,n}(\ob_{n})\big] = F_h(\ob),
\label{eq:lim-DFT-IRS2}
\end{eqnarray} 
where $O_m$ denotes the $m \times m$ zero matrix. 

Suppose further that Assumptions \ref{assum:B}(for $k=4$) and \ref{assum:beta2}(ii) (for $r>4$) hold and that $\{\ob_{1,n}\}$ and $\{-\ob_{2,n}\}$ are asymptotically distant. Then, for $i_1, i_2, j_1, j_2 \in \{1, \dots, m\}$,
\begin{eqnarray} \label{eq:lim-Per-IRS1} 
&& \lim_{n\rightarrow \infty} \cov\big( \widehat{I}_{h,n}^{(i_1,j_1)}(\ob_{1,n}), \widehat{I}_{h,n}^{(i_2, j_2)}(\ob_{2,n}) \big)
= 0 \quad \text{and} \\
&& \lim_{n\rightarrow \infty} \cov\big( \widehat{I}_{h,n}^{(i_1,j_1)}(\ob_{n}), \widehat{I}_{h,n}^{(i_2,j_2)}(\ob_{n}) \big)
= F_h^{(i_1, i_2)}(\ob) F_h^{(j_1, j_2)}(-\ob).
 \label{eq:lim-Per-IRS2} 
\end{eqnarray}
\end{theorem}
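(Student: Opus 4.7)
The plan is to prove all four limits in two stages. Stage I establishes them for the ``oracle'' centered DFT $\underline{J}_{h,n}^{0}(\ob) = \underline{\mathcal{J}}_{h,n}(\ob) - (2\pi)^{-d/2}H_{h,2}^{-1/2}|D_n|^{-1/2}H_{h\underline{\lambda}(\cdot;\bbeta_0)}^{(n)}(\ob)$ and periodogram $I_{h,n}^{0}=\underline{J}_{h,n}^{0}(\underline{J}_{h,n}^{0})^{*}$, both defined with the true $\bbeta_0$. Stage II transfers the conclusions to $\underline{\widehat{J}}_{h,n}$ and $\widehat{I}_{h,n}$ by showing that the deterministic-in-$(\widehat{\bbeta}_n-\bbeta_0)$ plug-in error
\[
R_n(\ob) := \underline{J}_{h,n}^{0}(\ob)-\underline{\widehat{J}}_{h,n}(\ob) = (2\pi)^{-d/2}H_{h,2}^{-1/2}|D_n|^{-1/2}\bigl(\widehat{H}_{h\underline{\lambda}}^{(n)}(\ob)-H_{h\underline{\lambda}}^{(n)}(\ob)\bigr)
\]
is asymptotically negligible in $L_r$.

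For Stage I at the second-order level, I would apply Campbell's formula to write $\cov(J_{h,n}^{0,(i)}(\ob_{1,n}),\overline{J_{h,n}^{0,(j)}(\ob_{2,n})})$ as a double integral against the complete covariance intensity $\widetilde{C}_n^{(i,j)}$. The locally stationary SOIRS structure in Definition \ref{def:infill} splits this into a Dirac-delta ``diagonal'' single integral and a reduced double integral against $\ell_2^{(i,j)}(\xb-\yb)$. The substitutions $\ubb=\xb\oslash\aB$ (and $(\ubb,\rb)=(\xb\oslash\aB,\xb-\yb)$ in the reduced part) expose an outer $\ubb$-integral of a bounded continuous function against the oscillatory factor $\exp\{-i(\aB\odot\ubb)^\top(\ob_{1,n}-\ob_{2,n})\}$. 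When $\{\ob_{1,n}\}$ and $\{\ob_{2,n}\}$ are asymptotically distant, $\|\aB\odot(\ob_{1,n}-\ob_{2,n})\|\to\infty$ and the Riemann--Lebesgue lemma (valid under Assumptions \ref{assum:A}, \ref{assum:C}) sends the integral to zero, giving (\ref{eq:lim-DFT-IRS1}). For $\ob_{1,n}=\ob_{2,n}=\ob_n\to\ob$ the oscillation disappears; bounded convergence (using $L_2\in L_1^{m\times m}$ from Assumption \ref{assum:B} with $k=2$) identifies the two limiting summands as the defining pieces of $F_h(\ob)$ in Definition \ref{defin:pseudo}, yielding (\ref{eq:lim-DFT-IRS2}). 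Since $J_{h,n}^{0}$ is zero-mean, $\Ex[I_{h,n}^{0}(\ob_n)]$ equals the variance and shares the same limit.

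For Stage I at the fourth-order level, I would use the centered moment--cumulant identity
\[
\cov\bigl(I_{h,n}^{0,(i_1,j_1)}(\ob_1),I_{h,n}^{0,(i_2,j_2)}(\ob_2)\bigr)=\Ex[X_1X_3]\Ex[X_2X_4]+\Ex[X_1X_4]\Ex[X_2X_3]+\cum(X_1,X_2,X_3,X_4),
\]
with $X_1=J^{(i_1)}(\ob_1),\;X_2=\overline{J^{(j_1)}(\ob_1)},\;X_3=\overline{J^{(i_2)}(\ob_2)},\;X_4=J^{(j_2)}(\ob_2)$. Because $h\underline{\lambda}$ is real, $\overline{J^{(j)}(\ob)}=J^{(j)}(-\ob)$, so each product of second moments reduces to a Stage I limit evaluated at $\ob_1\pm\ob_2$. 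In (\ref{eq:lim-Per-IRS1}) the pairwise asymptotic distance of $\{\ob_{1,n}\},\{\ob_{2,n}\},\{-\ob_{2,n}\},\{\mathbf{0}\}$ kills all four second-order factors; in (\ref{eq:lim-Per-IRS2}) with $\ob_n\to\ob$ only the $\Ex[X_1X_3]\Ex[X_2X_4]$ pairing survives (since $\|\aB\odot(2\ob_n)\|\to\infty$), producing $F_h^{(i_1,i_2)}(\ob)\overline{F_h^{(j_1,j_2)}(\ob)}=F_h^{(i_1,i_2)}(\ob)F_h^{(j_1,j_2)}(-\ob)$ by the Hermitian/conjugate symmetry of $F_h$ from Proposition \ref{prop:Fh}. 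The residual fourth-order cumulant expands via an extended Campbell formula into integrals against the connected joint cumulant intensities of $\underline{X}_{D_n}$ of orders $2,3,4$; under Assumption \ref{assum:B} with $k=4$ each such piece contributes $O(|D_n|^{-1})$ and vanishes.

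Stage II uses Taylor expansion with Assumption \ref{assum:beta} to write $\widehat{H}_{h\lambda^{(j)}}^{(n)}(\ob)-H_{h\lambda^{(j)}}^{(n)}(\ob)=(\widehat{\bbeta}_n-\bbeta_0)^\top|D_n|\,\phi_n^{(j)}(\aB\odot\ob;\bbeta^\star)$, where $\phi_n^{(j)}(\xi;\bbeta^\star)=\int_{[-1/2,1/2]^d}h(\ubb)\,\nabla_{\bbeta}\lambda^{(j)}(\ubb;\bbeta^\star)\,e^{-i\xi^\top\ubb}d\ubb$ is a Fourier transform of a function that is $L_1$ uniformly over the compact $\bbeta^\star\in\Theta$; Riemann--Lebesgue therefore gives $\phi_n^{(j)}(\aB\odot\ob;\bbeta^\star)\to 0$ uniformly in $\bbeta^\star$ whenever $\ob$ is asymptotically distant from $\mathbf{0}$. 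Combined with Assumption \ref{assum:beta2}(ii), this produces $\|R_n(\ob)\|_{L_r}\to 0$. Expanding each $\underline{\widehat{J}}_{h,n}$ as $\underline{J}_{h,n}^{0}-R_n$ in every second- and fourth-moment calculation produces a finite family of cross-terms, each bounded by H\"{o}lder's inequality via a product of an $O(1)$ oracle moment and an $o(1)$ $L_r$-norm of $R_n$; $r>2$ suffices for (\ref{eq:lim-DFT-IRS1})--(\ref{eq:lim-DFT-IRS2}) and $r>4$ for (\ref{eq:lim-Per-IRS1})--(\ref{eq:lim-Per-IRS2}). The main obstacle is precisely this combinatorial bookkeeping in the fourth-order case: the product $\widehat{J}^{(i_1)}\overline{\widehat{J}^{(j_1)}}\widehat{J}^{(i_2)}\overline{\widehat{J}^{(j_2)}}$ expands into sixteen mixed terms, and controlling them uniformly requires the oracle fourth-moment bound from Assumption \ref{assum:B} ($k=4$), the matching $L_r$-bound on $R_n$ from Assumption \ref{assum:beta2}(ii) ($r>4$), \emph{and} uniformity in $\bbeta^\star$ of the Riemann--Lebesgue decay --- the three assumptions must be made to work together.
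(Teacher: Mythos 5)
Your proposal is correct and mirrors the paper's own argument: the paper likewise reduces everything to the theoretical (oracle) DFT and periodogram via an $L_r$ bound on the plug-in error (Lemma \ref{lemma:Jbound}, Corollary \ref{coro:Jbound}, Theorems \ref{thm:asymp-IRS-fea1}--\ref{thm:asymp-IRS-fea2}), then handles the second-order limits by Campbell's formula plus Riemann--Lebesgue-type decay of $H^{(n)}$ and $R^{(n)}$ (Lemma \ref{lemma:DFT-expr} with YG24, Theorem D.2 and Lemma C.2) and the fourth-order limits by exactly your moment--cumulant pairing with the joint cumulant term $O(|D_n|^{-1})$ under Assumption \ref{assum:B} ($k=4$). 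The only cosmetic difference is that the paper Taylor-expands $\lambda(\cdot;\bbeta)$ to second order with derivatives at $\bbeta_0$, which sidesteps the uniform-in-$\bbeta^\star$ Riemann--Lebesgue argument your mean-value form requires.
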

\begin{proof} 
See Appendix \ref{sec:proof1}.
\end{proof}

Now, we derive the asymptotic joint distribution of the DFTs and periodograms. To do so, we introduce the $\alpha$-mixing coefficient of $\underline{X}$. For compact and convex subsets $E_i, E_j \subset \mathbb{R}^d$, let $d(E_{i}, E_{j}) = \inf_{\xb_i \in E_{i}, \xb_j \in E_{j}} \|\xb_i - \xb_j\|_{\infty}$. Then, the $\alpha$-mixing coefficient of $\underline{X}$ is defined as
\begin{equation*}
\begin{aligned}
\alpha_{p}^{}(k) &= \sup_{A_{i}, A_j, E_i, E_j} \Bigl\{ \left| P(A_{i} \cap A_{j}) - P(A_{i}) P(A_{j}) \right|: \\
& \quad A_{i} \in \mathcal{F}_{}(E_{i}), \, A_{j} \in \mathcal{F}_{}(E_{j}), \, |E_{i}|=|E_{j}| \leq p, \, d(E_{i}, E_{j}) \geq k \Bigr\}.
\end{aligned}
\end{equation*} 
Here, $\mathcal{F}_{}(E)$ denotes the $\sigma$-field generated by the superposition of $\underline{X}$ in $E \subset \mathbb{R}^d$, and the supremum is taken over all compact and convex subsets $E_i$ and $E_j$. 

Using the $\alpha$-mixing cofficients, we can now state the CLT for the DFTs and periodograms under the SOIRS framework.

\begin{theorem} \label{thm:asym-Normal}
Suppose that Assumptions \ref{assum:A}, \ref{assum:B}(for $k = 2$), \ref{assum:C}, \ref{assum:beta}, and \ref{assum:beta2}(i) hold. Moreover, 
we assume that the true first-order intensity function is $\underline{\lambda}(\cdot;\bbeta_0)$ and that there exists $\varepsilon >0$ such that
\begin{equation*}
\sup_{p \in (0,\infty)} \frac{\alpha_p(k)}{\max(p,1)} = O(k^{-d-\varepsilon}), \quad k \rightarrow \infty.
\end{equation*}
For fixed $r \in \mathbb{N}$, let $\{\ob_{1,n}\}, \dots, \{\ob_{r,n}\}$ be $r$ sequences of frequencies that satisfy 
(i) $\lim_{n\rightarrow \infty} \ob_{i,n} = \ob_{i} \in \mathbb{R}^d$; 
(ii) $\{\ob_{i,n}\}$ is asymptotically distant from $\{\textbf{0}\}$; and 
(iii) $\{\ob_{i,n} + \ob_{j,n}\}$ and $\{\ob_{i,n} - \ob_{j,n}\}$ are asymptotically distant from $\{\textbf{0}\}$ for $i \neq j$.
Then, we have
\begin{equation*}
\big( \underline{\widehat{J}}_{h,n}(\ob_{1,n}), \dots, \underline{\widehat{J}}_{h,n}(\ob_{r,n})\big) \Dcon (Z_1^c, \dots, Z_r^c),
\end{equation*} 
where $Z_1^c, \dots, Z_r^c$ are independent $m$-variate complex normal random variables with mean zero and variance matrices $F_h(\ob_1), \dots, F_h(\ob_r)$.

If we further assume Assumption \ref{assum:B} for $k=4$, then
\begin{equation*}
\big( \widehat{I}_{h,n}(\ob_{1,n}), \dots, \widehat{I}_{h,n}(\ob_{r,n}) \big) \Dcon (W_1^c, \dots, W_r^c),
\end{equation*}
where $W_1^c, \dots, W_r^c$ are independent $m$-variate complex Wishart random variables with one degree of freedom and variance matrices $F_h(\ob_1), \dots, F_h(\ob_r)$ (see \cite{b:bri-81}, Section 4.2, for details on the complex normal and complex Wishart distributions).
\end{theorem}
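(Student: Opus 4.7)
The plan is to first establish joint asymptotic complex normality of the DFT vector $(\underline{\widehat{J}}_{h,n}(\ob_{1,n}), \dots, \underline{\widehat{J}}_{h,n}(\ob_{r,n}))$ via a Cram\'{e}r--Wold reduction and the cumulant method, and then pass to the periodogram statement via the continuous mapping theorem. Since $\widehat{I}_{h,n}(\ob) = \underline{\widehat{J}}_{h,n}(\ob)\,\underline{\widehat{J}}_{h,n}(\ob)^{*}$ is a continuous function of the DFT, joint weak convergence of the DFTs to independent centered complex normal vectors $Z_{k}^{c}$ with covariance $F_{h}(\ob_{k})$ immediately yields joint weak convergence of the periodograms to $Z_{k}^{c}(Z_{k}^{c})^{*}$, which by definition is complex Wishart with one degree of freedom and variance matrix $F_{h}(\ob_{k})$; independence across the $r$ components is preserved under this continuous map.

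Before attacking the main CLT, I would reduce the feasible DFT to the infeasible one computed with the true $\bbeta_{0}$. The discrepancy at $\ob_{k,n}$ takes the form
\begin{equation*}
(2\pi)^{-d/2} H_{h,2}^{-1/2} |D_{n}|^{-1/2}\bigl\{H_{h\underline{\lambda}(\cdot;\bbeta_{0})}^{(n)}(\ob_{k,n}) - H_{h\underline{\lambda}(\cdot;\widehat{\bbeta}_{n})}^{(n)}(\ob_{k,n})\bigr\}.
\end{equation*}
A first-order Taylor expansion of $\underline{\lambda}(\cdot;\bbeta)$ about $\bbeta_{0}$, combined with the boundedness of the partial derivatives from Assumption \ref{assum:beta}, the rate $|D_{n}|^{1/2}|\widehat{\bbeta}_{n} - \bbeta_{0}| = O_{p}(1)$ from Assumption \ref{assum:beta2}(i), and the oscillatory cancellation in the Fourier integral afforded by the asymptotic distantness of $\ob_{k,n}$ from the origin, bounds this discrepancy by $o_{p}(1)$, paralleling the argument used in Theorem \ref{thm:asymDFT-IRS}. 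By Slutsky, it then suffices to prove the CLT for the infeasible DFT.

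For the infeasible DFT, the Cram\'{e}r--Wold device reduces the problem to showing that an arbitrary real linear combination $S_{n}$ of the real and imaginary parts of the DFT coordinates is asymptotically normal with the variance dictated by Theorem \ref{thm:asymDFT-IRS}. I would apply the cumulant method: for each $s \geq 2$, express the $s$th cumulant of $S_{n}$ as a linear combination of joint cumulants of DFT coordinates, each of which expands via Campbell's formula into an integral of the $s$th-order joint cumulant intensity $\gamma_{n}^{\balpha}$ of $\underline{X}_{D_{n}}$ against a product of tapered complex exponentials. The second-order contribution is pinned down by (\ref{eq:lim-DFT-IRS1})--(\ref{eq:lim-DFT-IRS2}) of Theorem \ref{thm:asymDFT-IRS}, and the asymptotic distantness of $\{\ob_{i,n} \pm \ob_{j,n}\}$ from $\{\mathbf{0}\}$ produces the cancellations that force the cross-covariances between DFTs at different frequencies to vanish, yielding independence of the Gaussian limits. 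The remaining task is to show that cumulants of order $s \geq 3$ tend to zero.

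The main obstacle lies in establishing these higher-order cumulant bounds in the locally stationary SOIRS setting. Unlike the SOS case treated in YG24, the joint cumulant intensities $\gamma_{n}^{\balpha}$ carry explicit location dependence through multiplicative factors of the form $\underline{\lambda}(\xb_{j}\oslash\aB)$, which are only uniformly bounded on $[-1/2,1/2]^{d}$, so the usual convolution argument does not apply verbatim. I would pull these bounded factors out and apply the integrability bound in Assumption \ref{assum:B} to the remaining translation-invariant reduced cumulant intensities, with the $\alpha$-mixing rate $\sup_{p}\alpha_{p}(k)/\max(p,1) = O(k^{-d-\varepsilon})$ controlling the tail decay. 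One of the $s$ integrations over $D_{n}$ then contributes a factor of $|D_{n}|$ while the remaining $s-1$ lag integrations are finite, yielding the desired $O(|D_{n}|^{1-s/2})$ bound, which vanishes for $s \geq 3$. This completes the CLT for the DFT and, via the continuous mapping step in the first paragraph, the complex Wishart limit for the periodograms.
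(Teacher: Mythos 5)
Your overall skeleton matches the paper's: reduce the feasible DFT to the infeasible one computed at $\bbeta_0$ (the paper does this by citing Theorem \ref{thm:asymp-IRS-fea1}, whose proof is exactly your Taylor-expansion-plus-Assumption \ref{assum:beta2}(i) argument), prove a joint CLT for the theoretical DFT, and obtain the Wishart limit for the periodogram by continuous mapping. The divergence, and the problem, is in the central CLT step.

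The paper does \emph{not} use the cumulant method. It reduces the claim to joint normality of the stacked real and imaginary parts of $\underline{J}_{h,n}(\ob_{i,n})$ and invokes the $\alpha$-mixing CLT of YG24, Theorem 3.2 (a blocking-type argument for mixing point processes); this is precisely why the hypothesis contains the mixing-rate condition $\sup_{p}\alpha_p(k)/\max(p,1)=O(k^{-d-\varepsilon})$ and only Assumption \ref{assum:B} for $k=2$. Your cumulant method cannot be run under the stated hypotheses: to show that all cumulants of order $s\geq 3$ of the linear combination $S_n$ vanish, you need the $s$th-order joint cumulant intensities $\gamma_n^{\balpha}$ to exist and satisfy the integrability bound of Assumption \ref{assum:B} for \emph{every} $s$ (Brillinger-type mixing), whereas the theorem assumes it only for $|\balpha|\leq 2$ (resp.\ $\leq 4$ for the periodogram part). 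Your attempt to patch this by letting the $\alpha$-mixing rate ``control the tail decay'' of the reduced cumulant intensities does not work as stated: the mixing coefficient bounds covariances of events on separated sets and does not, by itself, yield existence or integrability of higher-order cumulant functions without additional moment assumptions. So either you must strengthen Assumption \ref{assum:B} to hold for all $k\in\N$ (changing the theorem), or you must replace the cumulant method by a mixing-based CLT as the paper does. The remaining pieces of your argument --- Cram\'er--Wold, the identification of the limiting covariances via Theorem \ref{thm:asymDFT-IRS} and the asymptotic distantness of $\{\ob_{i,n}\pm\ob_{j,n}\}$ from $\{\mathbf{0}\}$, and the continuous-mapping passage to the complex Wishart limit --- are correct and consistent with the paper.
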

\begin{proof} 
Let $\underline{J}_{h,n}$ and $I_{h,n}(\ob)$ be the theoretical DFT and periodogram, respectively, as in (\ref{eq:J-expectation}). Then, Theorem \ref{thm:asymp-IRS-fea1} below states that under the same set of assumptions, $\underline{\widehat{J}}_{h,n}$ (resp., $\widehat{I}_{h,n}(\ob)$) shares the same asymptotic distribution as $\underline{J}_{h,n}$ (resp., $I_{h,n}(\ob)$). Therefore, it is enough to show
\begin{eqnarray*}
&& \left( \Re \underline{J}_{h,n}(\ob_{1,n})^\top,
\Im \underline{J}_{h,n}(\ob_{1,n})^\top, \cdots,
\Im \underline{J}_{h,n}(\ob_{r,n})^\top \right)^\top  \Dcon \bigg( \{(F_h(\ob_1)/2)^{1/2}Z_{1}\}^{\top}, 
 \\
&&  \{(F_h(\ob_1)/2)^{1/2}Z_{2}\}^{\top}, \dots, \{(F_h(\ob_r)/2)^{1/2}Z_{2r-1}\}^{\top}, \{(F_h(\ob_r)/2)^{1/2}Z_{2r}\}^{\top}
\bigg)^\top, \quad n\rightarrow \infty,
\end{eqnarray*} 
where $\Re \underline{J}_{h,n}(\ob)$ and $\Im \underline{J}_{h,n}(\ob)$ are the real and imaginary parts of $\underline{J}_{h,n}(\ob)$, 
$\{Z_{j}: j=1, \dots, 2r\}$ are i.i.d. standard normal random variables on $\R^{m}$, and for a positive definite matrix $A$, 
$A^{1/2}$ denotes the (unique) positive square root of $A$. Showing the above is almost identical to that of the proof of YG24, Theorem 3.2 (see Appendix B.2 of the same reference), so we omit the details.
\end{proof}

\subsection{Proof of Theorem \ref{thm:asymDFT-IRS}} \label{sec:proof1}

To prove the asymptotic uncorrelatedness of the DFTs, we require the following representation of the covariance of the DFT. Recall $H_{h,k}^{(n)}$ in (\ref{eq:Hkn}).
For two functions $h$ and $g$ with support $[-1/2,1/2]^d$, let
\begin{equation*}
R_{h,g}^{(n)} (\ubb, \ob)
= \int_{\R^d} h(\xb\oslash\aB) \bigg\{  g\left((\xb+\ubb)\oslash\aB\right) - g(\xb\oslash\aB)\bigg\} \exp(-i\xb^{\top} \ob) d\xb, \quad \ubb, \ob \in \R^d.
\end{equation*}
In the lemma below, we provide an expression for $\cov( \underline{J}_{h,n}(\ob_{1}) , \underline{J}_{h,n}(\ob_{2}))$ using $H_{h,k}^{(n)}(\cdot)$ and $R_{h,g}^{(n)} (\cdot, \cdot)$.

\begin{lemma} \label{lemma:DFT-expr}
Suppose that Assumptions \ref{assum:A} and \ref{assum:B} (for $k=2$) hold. Furthermore, we assume $h$ and $\lambda^{(j)}$ are bounded on $[-1/2,1/2]^d$. Then, for $i,j \in \{1, \dots, m\}$ and $\ob_1, \ob_2 \in \R^d$,
\begin{equation}
\begin{aligned}
\cov\big( J_{h,n}^{(i)}(\ob_{1}) , J_{h,n}^{(j)}(\ob_{2}) \big) &= (2\pi)^{-d}H_{h,2}^{-1}|D_n|^{-1} \bigg(
	\delta_{i,j} \cdot H_{h^2 \lambda^{(i)},1}^{(n)}(\ob_1-\ob_2) \\  
&~~ + H_{h^2 \lambda^{(i)} \lambda^{(j)},1}^{(n)}(\ob_{1} - \ob_{2}) \int_{\R^d} e^{-i \ubb^\top \ob_{1}} \ell_2^{(i,j)}(\ubb)  d\ubb \\ 
&~~ +  \int_{\R^d} e^{-i \ubb^\top \ob_{1}} 
\ell_2^{(i,j)}(\ubb)  R_{h\lambda^{(j)}, h\lambda^{(i)}}^{(n)}(\ubb, \ob_{1} - \ob_{2})  d\ubb  \bigg), 
\end{aligned}
\label{eq:DFT-exp}
\end{equation}
where $\delta_{i,j}=1$ if $i=j$ and zero otherwise.
\end{lemma}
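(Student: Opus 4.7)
The plan is to compute the covariance directly from Campbell's second-order formula and then reshape the resulting double integral via the change of variables $\ubb = \xb - \yb$, supplemented by one add-and-subtract manipulation that generates the functional $R_{\cdot,\cdot}^{(n)}$.

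First, since $J_{h,n}^{(i)}(\ob)$ differs from $\mathcal{J}_{h,n}^{(i)}(\ob)$ only by a deterministic mean, the two covariances coincide, so I would work with $\cov(\mathcal{J}_{h,n}^{(i)}(\ob_1), \mathcal{J}_{h,n}^{(j)}(\ob_2))$. Applying Campbell's formula with the complete covariance intensity $\widetilde{C}_n^{(i,j)}(\xb,\yb) = \delta_{i,j}\lambda^{(i)}(\xb \oslash \aB) \delta(\xb-\yb) + \lambda^{(i)}(\xb \oslash \aB) \lambda^{(j)}(\yb \oslash \aB) \ell_2^{(i,j)}(\xb-\yb)$, drawn from (\ref{eq:complete-cov-mat}) and (\ref{eq:IRS-cov}), expresses the covariance as $(2\pi)^{-d} H_{h,2}^{-1} |D_n|^{-1}$ times the sum of a Dirac contribution and a bilinear integral in $\ell_2^{(i,j)}$. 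Integrating out the Dirac against the tapered exponential collapses the first piece to $\delta_{i,j} H_{h^2\lambda^{(i)},1}^{(n)}(\ob_1-\ob_2)$, matching the first summand of (\ref{eq:DFT-exp}).

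For the remaining bilinear piece, the plan is to change variables $(\xb,\yb) \mapsto (\yb, \ubb)$ with $\ubb = \xb - \yb$, which turns the phase factor $e^{-i\xb^\top \ob_1 + i\yb^\top \ob_2}$ into $e^{-i\ubb^\top \ob_1}\,e^{-i\yb^\top(\ob_1-\ob_2)}$ and leaves the inner $\yb$-integrand as the product $(h\lambda^{(i)})((\yb+\ubb)\oslash \aB)\,(h\lambda^{(j)})(\yb \oslash \aB)$. I would then apply the identity $(h\lambda^{(i)})((\yb+\ubb)\oslash \aB) = (h\lambda^{(i)})(\yb \oslash \aB) + \{(h\lambda^{(i)})((\yb+\ubb)\oslash \aB) - (h\lambda^{(i)})(\yb \oslash \aB)\}$ to split the inner integral in two. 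The translation-invariant piece reassembles into $H_{h^2 \lambda^{(i)}\lambda^{(j)},1}^{(n)}(\ob_1-\ob_2)$, and the translation-correction piece is, by the definition just above the statement, exactly $R_{h\lambda^{(j)}, h\lambda^{(i)}}^{(n)}(\ubb, \ob_1-\ob_2)$. Multiplying each piece by $e^{-i\ubb^\top \ob_1} \ell_2^{(i,j)}(\ubb)$ and integrating over $\ubb$ then recovers the second and third terms of (\ref{eq:DFT-exp}).

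All interchanges of integration are justified by Fubini: $h$ and each $\lambda^{(j)}$ are bounded on $[-1/2,1/2]^d$ by hypothesis, and Assumption \ref{assum:B} for $k=2$ guarantees $\ell_2^{(i,j)} \in L_1(\R^d)$, so the three-fold integrand over $D_n \times D_n$ (equivalently $\yb \times \ubb$) is absolutely integrable. The only genuinely non-bookkeeping step is the add-and-subtract manipulation of the shifted taper $(h\lambda^{(i)})((\yb+\ubb)\oslash \aB)$; without it the inner integral cannot be expressed in terms of the predefined $H_{\cdot,\cdot}^{(n)}$ and $R_{\cdot,\cdot}^{(n)}$. I expect this algebraic identification, along with the clerical care needed to track which frequency survives as a phase against $\ubb$ versus against $\yb$ after the change of variables, to be the main (and essentially the only) obstacle.
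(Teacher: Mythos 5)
Your proposal is correct and follows essentially the same route as the paper: Campbell's formula applied to the complete covariance intensity, the change of variables $(\ubb,\vbb)=(\xb-\yb,\yb)$, and the add-and-subtract decomposition of the shifted factor into $H_{h^2\lambda^{(i)}\lambda^{(j)},1}^{(n)}(\ob_1-\ob_2)$ plus $R_{h\lambda^{(j)},h\lambda^{(i)}}^{(n)}(\ubb,\ob_1-\ob_2)$, which the paper delegates to YG24 (Theorem D.1 and Lemma D.1) but which you correctly spell out, including the order of the subscripts on $R$. No gaps.
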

\begin{proof} 
By using Campbell's formula and applying similar techniques as in YG24, Theorem D.1, we have
\begin{equation}
\begin{aligned}
& \cov\big( J_{h,n}^{(i)}(\ob_{1}) , J_{h,n}^{(j)}(\ob_{2}) \big) \\
&~~= (2\pi)^{-d}H_{h,2}^{-1}|D_n|^{-1} \bigg(
	\delta_{i,j} \int_{D_n} (h^2\lambda^{(i)})(\xb\oslash\aB) e^{-i\xb^\top (\ob_{1}-\ob_{2})} d\xb \\
&~~ \quad + \int_{D_n^2} (h\lambda^{(i)})(\xb\oslash\aB)  (h\lambda^{(j)})(\yb\oslash\aB) \ell_2^{(i,j)}(\xb-\yb) e^{-i(\xb^\top \ob_{1} - \yb^\top \ob_{2})} d\xb d\yb \bigg).
\end{aligned}
\label{eq:DFT-exp00}
\end{equation} 
The first term in the parentheses above is $\delta_{i,j} H_{h^2 \lambda^{(i)},1}^{(n)}(\ob_1-\ob_2)$. 
Observe that $(h\lambda^{(i)})(\xb\oslash\aB) = 0$ on $\xb \notin D_n$, and the second term can be written as an integral on $\R^{2d}$. Therefore, by using a change of variables $(\ubb,\vbb)=(\xb-\yb,\yb)$ and using similar techniques as in YG24, Lemma D.1, the second term above equals
\begin{equation*}
(2\pi)^{-d}H_{h,2}^{-1}|D_n|^{-1} \int_{\R^d} d\ubb e^{-i \ubb^\top \ob_{1}} \ell_2^{(i,j)}(\ubb)
\left( H_{h^2 \lambda^{(i)} \lambda^{(j)},1}^{(n)}(\ob_{1} - \ob_{2}) + R_{h\lambda^{(j)}, h\lambda^{(i)}}^{(n)}(\ubb, \ob_{1} - \ob_{2}) \right).
\end{equation*} 
Substitute the above into (\ref{eq:DFT-exp00}), we get the desired result.
\end{proof}

Now we are ready to prove Thoerem \ref{thm:asymDFT-IRS}.

\begin{proof}[Proof of Thoerem \ref{thm:asymDFT-IRS}]

Let $\underline{J}_{h,n}(\cdot)$ and $I_{h,n}(\cdot)$ be the theoretical DFT and periodogram, respectively. Then, thanks to Theorem \ref{thm:asymp-IRS-fea2} below, it is enough to prove the statements for $\underline{J}_{h,n}(\cdot)$ and $I_{h,n}(\cdot)$ replacing $\underline{\widehat{J}}_{h,n}(\cdot)$ and $\widehat{I}_{h,n}(\cdot)$, respectively.

We first show (\ref{eq:lim-DFT-IRS1}) and (\ref{eq:lim-DFT-IRS2}). First, by using YG24, Theorem D.2, we have 
\begin{equation} \label{eq:DFTlim0}
\lim_{n\rightarrow \infty} \sup_{\ob_1, \ob_2 \in \R^d} (2\pi)^{-d}H_{h,2}^{-1}|D_n|^{-1} \int_{\R^d} |\ell_2^{(i,j)}(\ubb)| 
R_{h\lambda^{(j)}, h\lambda^{(i)}}^{(n)}(\ubb, \ob_{1} - \ob_{2})| d\ubb = 0.
\end{equation} Therefore, substitute (\ref{eq:DFTlim0}) into (\ref{eq:DFT-exp}), we have
\begin{equation}
\begin{aligned}
& \cov\big( J_{h,n}^{(i)}(\ob_{1}) , J_{h,n}^{(j)}(\ob_{2}) \big) = (2\pi)^{-d}H_{h,2}^{-1}|D_n|^{-1} \bigg( \delta_{i,j} H_{h^2\lambda^{(i)},1}^{(n)}(\ob_{1}-\ob_{2})\\
& \qquad \qquad \qquad 
+ H_{h^2 \lambda^{(i)} \lambda^{(j)},1}^{(n)}(\ob_{1} - \ob_{2}) \int_{\R^d}e^{-i \ubb^\top \ob_{1}} \ell_2^{(i,j)}(\ubb) d\ubb 
\bigg) +o(1), \quad n\rightarrow \infty,
\end{aligned}
\label{eq:DFT-exp2}
\end{equation} 
where the $o(1)$ bound above is uniform over $\ob_{1}, \ob_{2} \in \R^d$.

Next, by using YG24, Lemma C.2, we have
\begin{equation} \label{eq:DFTlim1}
\lim_{n \rightarrow \infty}|D_n|^{-1} |H_{h^2\lambda^{(i)},1}^{(n)}(\ob_{1,n}-\ob_{2,n})| = 0.
\end{equation} 
The same argument holds when replacing $h^2\lambda^{(i)}$ with $h^2\lambda^{(i)}\lambda^{(j)}$ above. Therefore, since $\ell_2^{(i,j)} \in L_1(\R^d)$ due to Assumption \ref{assum:B} ($k=2$), substitute (\ref{eq:DFTlim1}) into (\ref{eq:DFT-exp2}), we show (\ref{eq:lim-DFT-IRS1}).

To show (\ref{eq:lim-DFT-IRS2}), by using (\ref{eq:DFT-exp2}), we have
\begin{equation}
\begin{aligned}
& \cov\big( J_{h,n}^{(i)}(\ob_{n}) , J_{h,n}^{(j)}(\ob_{n}) \big) \\
&~~= (2\pi)^{-d}H_{h,2}^{-1}|D_n|^{-1} \left( \delta_{i,j}H_{h^2\lambda^{(i)},1}^{(n)} (\textbf{0}) 
+  H_{h^2 \lambda^{(i)} \lambda^{(j)},1}^{(n)}(\textbf{0}) \int_{\R^d}  e^{-i \ubb^\top \ob_{n}} \ell_2^{(i,j)}(\ubb) d\ubb \right) + o(1) \\
&~~= (2\pi)^{-d}H_{h,2}^{-1} H_{h^2\lambda^{(i)},1} \delta_{i,j} + H_{h,2}^{-1} \mathcal{F}^{-1} (\ell^{(i,j)})(\ob_{n}) + o(1) \\
&~~= (2\pi)^{-d}H_{h,2}^{-1} H_{h^2\lambda^{(i)},1} \delta_{i,j} + H_{h,2}^{-1} \mathcal{F}^{-1} (\ell^{(i,j)})(\ob_{}) + o(1), \quad n \rightarrow \infty,
\end{aligned}
\label{eq:covJ00}
\end{equation} 
where the $o(1)$ bound above is uniform over $\ob_{n}$.
Here, the last inequality is due to the (uniform) continuity of $\mathcal{F}^{-1} (\ell^{(i,j)})(\cdot)$, provided Assumption \ref{assum:B}(for $k=2$).
Therefore, in a matrix form, we have
\begin{equation}
\begin{aligned}
& \lim_{n\rightarrow \infty}
\var\left( \underline{J}_{h,n}(\ob_{n}) \right) \\
&~~ = (2\pi)^{-d} H_{h,2}^{-1} \diag (H_{h^2 \underline{\lambda}}) +
  H_{h,2}^{-1} \left( H_{h^2 \underline{\lambda}\cdot\underline{\lambda}^\top} \odot
\mathcal{F}^{-1}(L_2)(\ob) \right) = F_{h}(\ob).
\end{aligned}
\label{eq:Jconv-to-F}
\end{equation} 
This proves (\ref{eq:lim-DFT-IRS2}).

Next, we show (\ref{eq:lim-Per-IRS1}) and (\ref{eq:lim-Per-IRS2}). By using cumulant decomposition, for $i_1, j_1, i_2, j_2 \in \{1,\dots, m\}$, we have
\begin{equation} \label{eq:cum-Per}
\begin{aligned} 
&  \cov\big(I_{h,n}^{(i_1,j_1)}(\ob_{1}), I_{h,n}^{(i_2, j_2)}(\ob_{2})\big)  \\
&\quad = \cum\big( J_{h,n}^{(i_1)}(\ob_1), J_{h,n}^{(i_2)}(-\ob_2) \big) \cum\big(J_{h,n}^{(j_1)}(-\ob_1), J_{h,n}^{(j_2)}(\ob_2) \big) \\
&\quad ~~+ \cum\big( J_{h,n}^{(i_1)}(\ob_1), J_{h,n}^{(j_2)}(\ob_2) \big) \cum\big(J_{h,n}^{(j_1)}(-\ob_1), J_{h,n}^{(i_2)}(-\ob_2) \big) \\
&\quad ~~+ 
\cum\big( J_{h,n}^{(i_1)}(\ob_1),J_{h,n}^{(j_1)}(-\ob_1), J_{h,n}^{(i_2)}(-\ob_2) , J_{h,n}^{(j_2)}(\ob_2) \big) \\
&
\quad = \cum\big( J_{h,n}^{(i_1)}(\ob_1), J_{h,n}^{(i_2)}(-\ob_2) \big) \cum\big(J_{h,n}^{(j_1)}(-\ob_1), J_{h,n}^{(j_2)}(\ob_2) \big) \\
&\quad ~~+ \cum\big( J_{h,n}^{(i_1)}(\ob_1), J_{h,n}^{(j_2)}(\ob_2) \big) \cum\big(J_{h,n}^{(j_1)}(-\ob_1), J_{h,n}^{(i_2)}(-\ob_2) \big) + O(|D_n|^{-1}) \\
& \quad =: S_1 + S_2 +O(|D_n|^{-1}), \quad n\rightarrow \infty,
\end{aligned}
\end{equation} 
where the $O(|D_n|^{-1})$ bound above is uniform over $\ob_1, \ob_2 \in \R^d$. Here, the last identity is due to YG24, Lemma D.5.

To show (\ref{eq:lim-Per-IRS1}), since $\{\ob_{1,n}\}$ is asymptotically distant from $\{\ob_{2,n}\}$ and $\{-\ob_{2,n}\}$, by applying (\ref{eq:lim-DFT-IRS1}), we have
\begin{equation*}
\cov\big(I_{h,n}^{(i_1,j_1)}(\ob_{1}), I_{h,n}^{(i_2, j_2)}(\ob_{2})\big)
= S_1(\ob_{1,n}, \ob_{2,n}) + S_1(\ob_{1,n}, \ob_{2,n}) + O(|D_n|^{-1}) = o(1), \quad n\rightarrow \infty.
\end{equation*}
This shows (\ref{eq:lim-Per-IRS1}). 

To show (\ref{eq:lim-Per-IRS2}), since $\{\ob_{1,n}\}$ is asymptotically distant from $\{\textbf{0}\}$, 
by using (\ref{eq:lim-DFT-IRS1}) and (\ref{eq:lim-DFT-IRS2}), we have 
\begin{eqnarray*}
&&  \cov\big(I_{h,n}^{(i_1,j_1)}(\ob_{n}), I_{h,n}^{(i_2, j_2)}(\ob_{n})\big) \\
&&\quad = \cum\big( J_{h,n}^{(i_1)}(\ob_{n}), J_{h,n}^{(i_2)}(-\ob_{n}) \big) \cum\big(J_{h,n}^{(j_1)}(-\ob_{n}), J_{h,n}^{(j_2)}(\ob_{n}) \big) \\
&&\quad ~~+ \cum\big( J_{h,n}^{(i_1)}(\ob_{n}), J_{h,n}^{(j_2)}(\ob_{n}) \big) \cum\big(J_{h,n}^{(j_1)}(-\ob_{n}), J_{h,n}^{(i_2)}(-\ob_{n}) \big) + o(1) \\
&&\quad = F_h^{(i_1, i_2)}(\ob) F_h^{(j_1, j_2)}(-\ob) + o(1), \quad n \rightarrow \infty.
\end{eqnarray*} 
This shows (\ref{eq:lim-Per-IRS2}). All together, we get the desired results.
\end{proof}


\section{Proof of the main results}

\subsection{Proof of Theorem \ref{thm:local-S}} \label{sec:localS}

Let $\underline{\widetilde{X}}_{D_n}$ denote the intensity reweighted process of $\underline{X}_{D_n}$ as in Definition \ref{def:k-IRS}. Then, the first-order intensity function and covariance intensity function of $\underline{\widetilde{X}}_{D_n}$ are $(1, \dots, 1)^\top$ and $L_2(\cdot)$, respectively. For a fixed $\ubb \in [-1/2, 1/2]^d$, let $\underline{X}^{\ubb}$ be the weighted process (see Definition \ref{def:weight}) of $\underline{\widetilde{X}}_{D_n}$ with respect to the weight $\underline{\lambda}(\ubb) = (\lambda^{(1)}(\ubb), \dots, \lambda^{(m)}(\ubb))^\top$. Then, it is straightforward that $\underline{X}^{\ubb}$ is also an SOS process with first-order intensity $\underline{\lambda}(\ubb)$ and reduced covariance function $\left( \underline{\lambda}(\ubb)\underline{\lambda}(\ubb)^\top \right) \odot L_2(\cdot)$. Therefore, the complete covariance intensity function of $\underline{X}^{\ubb}$ is $C^{\ubb}$, and in turn, the corresponding spectrum is $F^{\ubb}(\ob)$ as defined in (\ref{eq:F-local}).  

Next, to show the identity, we first substitute (\ref{eq:F-local}) into the right-hand side. The first integral term is
\begin{equation*}
\frac{(2\pi)^{-d}}{H_{h,2}} \int_{[-1/2,1/2]^d} \diag \left(h(\ubb)^2 \lambda^{(1)}(\ubb), \dots, h(\ubb)^2 \lambda^{(m)}(\ubb) \right) d\ubb = \frac{(2\pi)^{-d}}{H_{h,2}} \diag (H_{h^2 \underline{\lambda}}).
\end{equation*}
The $(i,j)$th ($i,j \in \{1, \dots, m\}$) element of the second integral term is
\begin{equation*}
\frac{1}{H_{h,2}} \mathcal{F}^{-1}(\ell_2^{(i,j)})(\ob) \int_{[-1/2,1/2]^d} h(\ubb)^2 \lambda^{(i)}(\ubb)  \lambda^{(j)}(\ubb) d \ubb 
= \frac{1}{H_{h,2}} \mathcal{F}^{-1}(\ell_2^{(i,j)})(\ob)  H_{h^2 \lambda^{(i)} \lambda^{(j)},1}.
\end{equation*} 
Therefore, the second integral term can be written as $H_{h,2}^{-1} ( H_{h^2\underline{\lambda}\cdot\underline{\lambda}^\top} \odot
\mathcal{F}^{-1}(L_2)(\ob))$. 
By comparing these expressions with (\ref{eq:f-IRS-mat}), we obtain the desired result.
\hfill $\Box$


\subsection{Proof of Theorem \ref{thm:KSDE1}} \label{appen:proofKSDE}

Let
\begin{equation} \label{eq:KSDE-th}
F_{n,\bb}(\ob) = \int_{\R^d} K_{\bb}(\ob - \xb) I_{h,n}(\xb) d\xb, \quad \ob \in \R^d,
\end{equation}
be the theoretical counterpart of $\widehat{F}_{n,\bb}$, where $I_{h,n}(\cdot)$ denotes the theoretical periodogram defined in (\ref{eq:In-IRS}).

The following theorem, which is key in the proof of Theorem \ref{thm:KSDE1}, addresses the MSE convergence of $F_{n,\bb}(\ob)$.

\begin{theorem} \label{thm:KSDE-th}
Suppose that Assumptions \ref{assum:A}, \ref{assum:B} (for $k = 4$), \ref{assum:C}, \ref{assum:beta}, \ref{assum:beta2}(i), and \ref{assum:D} hold. Then,
\begin{equation*}
\lim_{n\rightarrow \infty} \Ex \big| F_{n,\bb}(\ob) - F_{h}(\ob) \big|^2 = 0, \quad \ob \in \R^d.
\end{equation*}
\end{theorem}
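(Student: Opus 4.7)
The plan is to decompose the mean squared error in the standard bias-plus-variance manner,
\[
\Ex \big| F_{n,\bb}(\ob) - F_{h}(\ob) \big|^2 = \big| \Ex[F_{n,\bb}(\ob)] - F_h(\ob) \big|^2 + \var\bigl(F_{n,\bb}(\ob)\bigr),
\]
and show that both summands vanish as $n\to\infty$; Lemma~\ref{lemma:DFT-expr} and Theorem~\ref{thm:asymDFT-IRS} supply the point-process-specific ingredients needed in each step.

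For the bias, the change of variables $\xb = \ob - \bb \odot \zb$ together with the symmetry of $K$ yields
\[
\Ex[F_{n,\bb}(\ob)] = \int_{[-1,1]^d} K(\zb)\, \Ex\bigl[I_{h,n}\bigl(\ob - \bb(n) \odot \zb\bigr)\bigr]\, d\zb.
\]
For each $\zb \ne \mathbf{0}$, Assumption~\ref{assum:D} forces $\min_{j} A_j(n) b_j(n) \to \infty$, so the sequence $\ob - \bb(n) \odot \zb$ converges to $\ob$ while being asymptotically distant from the origin, and Theorem~\ref{thm:asymDFT-IRS} then gives $\Ex[I_{h,n}(\ob - \bb(n) \odot \zb)] \to F_h(\ob)$. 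The same conclusion holds trivially at $\zb=\mathbf{0}$ whenever $\ob \ne \mathbf{0}$; in the single remaining case $\ob = \zb = \mathbf{0}$, a direct evaluation of Lemma~\ref{lemma:DFT-expr} at $\ob_1 = \ob_2 = \mathbf{0}$, using the identity $H^{(n)}_{g,1}(\mathbf{0}) = |D_n| H_{g,1}$ together with the uniform vanishing (\ref{eq:DFTlim0}), recovers $\Ex[I_{h,n}(\mathbf{0})] \to F_h(\mathbf{0})$. A uniform-in-$(n,\ob)$ bound on $|\Ex[I_{h,n}(\cdot)]|$ is furnished by the same lemma (invoking $\ell_2^{(i,j)} \in L_1(\R^d)$, which follows from Assumption~\ref{assum:B} with $k=2$), so dominated convergence produces $\Ex[F_{n,\bb}(\ob)] \to F_h(\ob)$.

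For the variance, I apply the cumulant decomposition of (\ref{eq:cum-Per}): for each entry,
\[
\cov\!\bigl(I_{h,n}^{(i_1,j_1)}(\xb_1),\, I_{h,n}^{(i_1,j_1)}(\xb_2)\bigr) = S_1(\xb_1,\xb_2) + S_2(\xb_1,\xb_2) + O(|D_n|^{-1}),
\]
uniformly in $(\xb_1,\xb_2)$, where the $O(|D_n|^{-1})$ bound on the fourth-order cumulant contribution relies on Assumption~\ref{assum:B} with $k=4$; this remainder integrates against $K_\bb(\ob-\xb_1) K_\bb(\ob-\xb_2)$ to $O(|D_n|^{-1})$ since $\int K_\bb = 1$. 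By Lemma~\ref{lemma:DFT-expr} together with (\ref{eq:DFTlim0}), the leading terms satisfy $|S_1(\xb_1,\xb_2)| \le C|D_n|^{-2}|H_{g,1}^{(n)}(\xb_1-\xb_2)|^2$ and $|S_2(\xb_1,\xb_2)| \le C|D_n|^{-2}|H_{g,1}^{(n)}(\xb_1+\xb_2)|^2$, where $g$ is one of the bounded functions $h^2\lambda^{(i)}$ or $h^2\lambda^{(i)}\lambda^{(j)}$ on $[-1/2,1/2]^d$. The substitutions $(\ubb,\vbb) = (\xb_1 \mp \xb_2,\xb_2)$ reduce the two corresponding double integrals to $\int |H_{g,1}^{(n)}(\ubb)|^2 (K_\bb \ast \widetilde K_\bb)(\ubb)\,d\ubb$ with $\widetilde K_\bb(\xb) = K_\bb(-\xb)$. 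Young's inequality gives $\|K_\bb \ast \widetilde K_\bb\|_\infty \le (b_1\cdots b_d)^{-1}\|K\|_\infty$, and Plancherel's identity combined with the relation $H_{g,1}^{(n)}(\ubb) = |D_n|\, \mathcal{F}(g)(\aB \odot \ubb)$ yields $\int |H_{g,1}^{(n)}|^2 = (2\pi)^d |D_n| \|g\|_2^2$. Hence
\[
\var\bigl(F_{n,\bb}^{(i_1,j_1)}(\ob)\bigr) = O\!\Bigl(\prod_{j=1}^{d}(A_j b_j)^{-1}\Bigr) + O(|D_n|^{-1}),
\]
which vanishes by Assumption~\ref{assum:D}.

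The principal technical obstacle is the $S_2$-contribution, whose Fourier kernel is localised near $\xb_1 + \xb_2 \approx 2\ob$ rather than at $\mathbf{0}$: one must confirm that no $\ob$-dependent spike arises (most delicately at $\ob = \mathbf{0}$, where both $S_1$ and $S_2$ have coincident peaks in frequency). This is resolved by the translation-invariance of the Young/Plancherel bound, which only involves $\|K_\bb\|_\infty$ and $\|g\|_2$ and is insensitive to the location of the peak of $H_{g,1}^{(n)}(\cdot)$. A subsidiary subtlety is the dominated convergence step at the single point $\ob = \zb = \mathbf{0}$, which falls outside the frequency-asymptotically-distant hypothesis of Theorem~\ref{thm:asymDFT-IRS}; this is handled by the direct evaluation of Lemma~\ref{lemma:DFT-expr} described above.
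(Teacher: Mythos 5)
Your proof is correct, but both halves take a different route from the paper's own proof of this theorem. For the bias, the paper splits $|\Ex[F_{n,\bb}(\ob)]-F_h(\ob)|$ into a kernel-smoothing term $\int K_{\bb}(\xb)\{F_h(\ob-\xb)-F_h(\ob)\}d\xb$ (killed by uniform continuity of $F_h$) plus a periodogram-bias term controlled by the \emph{uniform} bound $\sup_{\ob}|\Ex[I_{h,n}(\ob)]-F_h(\ob)|=o(1)$ from (\ref{eq:covJ00}); you instead substitute $\xb=\ob-\bb\odot\zb$ and apply pointwise convergence plus dominated convergence. Both work; note that the uniform bound in (\ref{eq:covJ00}) already covers $\ob=\mathbf{0}$, and your special treatment of the measure-zero point $\zb=\mathbf{0}$ is harmless but unnecessary for the DCT. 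For the variance, the paper's proof of this theorem uses a soft partition of $\R^{2d}$ into near-diagonal, near-antidiagonal, and far regions (choosing an intermediate radius $r(n)$ with $r\to 0$, $|D_n|r^d\to\infty$, $r/b_j\to 0$), which yields only $o(1)$ and invokes Lipschitz continuity of $K$ --- a property not actually listed among the stated kernel conditions. Your Young/Plancherel computation is essentially the argument the paper deploys later for Theorem \ref{thm:KSDE-bias-var}(ii): it buys an explicit rate $O(|D_n|^{-1}(b_1\cdots b_d)^{-1})$ and needs only boundedness of $K$ and $g\in L_2$, so it is arguably cleaner here. Two small points of rigor: the bounds $|S_1|,|S_2|\le C|D_n|^{-2}|H_{g,1}^{(n)}(\cdot)|^2$ should carry a uniform additive $o(1)$ coming from the $R^{(n)}$-term in Lemma \ref{lemma:DFT-expr} (it integrates against the kernels to $o(1)$, so nothing breaks), and each $S_k$ is a product of two covariances involving possibly different weight functions, so the square should formally be replaced by a product or a sum via $2|ab|\le|a|^2+|b|^2$.
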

\begin{proof}
To show the result, we calculate the bias and variance of $F_{n,\bb}(\ob)$ seperately. First, by using the triangle inequality together with $\int_{\R^{d}} K_{\bb}(\ob) d\ob = 1$, the bias can be bounded by
\begin{equation} \label{eq:Fnb-bias0}
\begin{aligned}
&| \Ex[ F_{n,\bb}(\ob)] - F_h(\ob) | \\
&~~\leq
\left|\int_{\R^d} K_{\bb} (\ob - \xb) F_h(\xb) d\xb - F_h(\ob) \right|  + 
\left| \Ex[ F_{n,\bb}(\ob)] - \int_{\R^d} K_{\bb} (\ob - \xb) F_h(\xb) d\xb\right| \\
&~~= \left| \int_{\R^d} K_{\bb} (\xb) \left\{ F_h(\ob - \xb) - F_h(\ob) \right\} d\xb \right| +  \int_{\R^d} K_{\bb}(\ob - \xb) \big| \Ex[I_{h,n}(\xb)] - F_h(\xb)\big| d\xb \\
& = A_1 + A_2.
\end{aligned}
\end{equation}
We first bound $A_1$. Since $F_h(\cdot)$ is uniformly continuous due to Assumption \ref{assum:B} (for $k=2$), given $\varepsilon > 0$, there exists $\delta > 0$ such that 
\begin{equation*}
\sup_{\|\xb - \yb\| < \delta} |F_h(\xb) - F_h(\yb)| < \varepsilon.
\end{equation*}
Next, we note that the support of $K_{\bb}$ is $[-b_1, b_1] \times \cdots \times [-b_d, b_d]$, which shrinks in all coordinates as $n \to \infty$. Therefore, for sufficiently large $n \in \mathbb{N}$, we have $\operatorname{supp}(K_{\bb}) \subset B(\mathbf{0}; \delta)$, where $B(\mathbf{0}; \delta)$ denotes the centered ball in $\mathbb{R}^d$ with radius $\delta > 0$. Combining these two facts, for large enough $n \in \mathbb{N}$, we obtain
\begin{equation*}
A_1 \leq \int_{B(\mathbf{0}; \delta)} K_{\bb}(\xb)\, |F_h(\ob - \xb) - F_h(\ob)|\, d\xb \leq \varepsilon.
\end{equation*}
Since $\varepsilon > 0$ is arbitrary, we conclude that
\begin{equation} \label{eq:Fnb-bias1}
\lim_{n \to \infty} A_1 = 0.
\end{equation}
To bound $A_2$, we observe that $\lim_{n \to \infty} \sup_{\ob \in \mathbb{R}^d} \big| \mathbb{E}[I_{h,n}(\ob)] - F_h(\ob) \big| = 0$ due to (\ref{eq:covJ00}). Therefore, we have 
\begin{equation} \label{eq:Fnb-bias2}
A_2 \leq \sup_{\ob \in \mathbb{R}^d} \big| \mathbb{E}[I_{h,n}(\ob)] - F_h(\ob) \big| = o(1), \quad n \to \infty.
\end{equation}
Substituting (\ref{eq:Fnb-bias1}) and (\ref{eq:Fnb-bias2}) into (\ref{eq:Fnb-bias0}), we obtain
\begin{equation} \label{eq:Fnb-bias-all}
|\Ex[ F_{n,\bb}(\ob)] - F_h(\ob)| = o(1), \quad n \to \infty.
\end{equation}
Next, we bound the variance. Fix $i, j \in \{1, \dots, m\}$. Then,  
\begin{equation} \label{eq:varF-expression}
\var(F_{n,\bb}^{(i,j)}(\ob))
= \int_{\R^d} d\yb \, K_{\bb}(\ob - \yb)
  \int_{\R^d} d\xb \, K_{\bb}(\ob - \xb)
  \cov(I_{h,n}^{(i,j)}(\xb), I_{h,n}^{(i,j)}(\yb)).
\end{equation}  
First, under Assumption \ref{assum:D}, we can choose $r = r(n) \in (0,\infty)$ such that, as $n \to \infty$,  
\begin{equation*}
r \rightarrow 0,~~ |D_n| r^{d} \rightarrow \infty, ~~\text{and} \quad \max_j (r/b_j) \rightarrow 0.
\end{equation*}  
For $\xb \in \R^d$ and $r \in (0,\infty)$, let $B_{\infty}(\xb; r) = \{\yb: \|\xb - \yb\|_\infty \leq r\}$. 

For $\yb \neq \mathbf{0}$ and sufficiently large $n \in \N$, we can partition $\R^d$ into the disjoint sets $B_{\infty}(\yb; r)$, $B_{\infty}(-\yb; r)$, and the remaining set, denoted by $C(\yb; r)$.  
When $\yb = \mathbf{0}$, $\R^d$ can be partitioned into $B_{\infty}(\mathbf{0}; r)$ and the remaining set $C(\mathbf{0}; r)$.

Now, we bound the integral involving $C(\yb; r)$. If $\xb \in C(\yb; r)$, then  
\begin{equation*}
|D_n|^{1/d} \|\xb - \yb\|_{\infty} > |D_n|^{1/d} r \rightarrow \infty.
\end{equation*}  
Therefore, by (\ref{eq:DFT-exp2}), (\ref{eq:DFTlim1}), and (\ref{eq:cum-Per}), it follows that  
\begin{equation*}
\sup_{\yb \in \R^d} \sup_{\xb \in C(\yb; r)} 
\big| \cov(I_{h,n}^{(i,j)}(\xb), I_{h,n}^{(i,j)}(\yb)) \big| 
= o(1), \quad n \to \infty.  
\end{equation*}  
Hence,  
\begin{equation} 
\begin{aligned}
& \int_{\R^d} d\yb \, K_{\bb}(\ob - \yb)
   \int_{C(\yb; r)} d\xb \, K_{\bb}(\ob - \xb)
   \big| \cov(I_{h,n}^{(i,j)}(\xb), I_{h,n}^{(i,j)}(\yb)) \big| \\  
&\leq o(1) 
   \left( \int_{\R^d} d\yb \, K_{\bb}(\ob - \yb) \right)^2 
   = o(1), \quad n \to \infty.
\end{aligned}
\label{eq:varI-bound1}
\end{equation}  
Next, to bound the integral involving $B_{\infty}(\yb;r)$, we observe that if $\xb \in B_\infty(\yb;r)$, then
\begin{equation*}
\|(\xb-\yb)\oslash\bb\|_{\infty} \leq \max_j r/b_j \rightarrow 0.
\end{equation*}
Therefore, since $K(\cdot)$ is Lipschitz continuous on $\R^d$, for an arbitrary $\varepsilon \in (0,\infty)$, there exists $N = N(\varepsilon) \in \N$ such that if $n > N$, then
\begin{equation*}
\sup_{\ob, \yb \in \R^d} \sup_{\xb \in B_{\infty}(\yb;r)}
| K_{\bb}(\ob- \xb) - K_{\bb}(\ob - \yb)| \leq \varepsilon (b_1 \cdots b_d)^{-1}.
\end{equation*}
Moreover, by using the Cauchy-Schwarz inequality and YG24, Lemma D.5, we have
\begin{equation*}
\sup_{\xb, \yb\in \R^d} |\cov(I_{h,n}^{(i,j)}(\xb), I_{h,n}^{(i,j)}(\yb))|
\leq \sup_{\xb \in \R^d} \var (I_{h,n}^{(i,j)}(\xb)) = O(1), \quad n\rightarrow \infty.
\end{equation*}
Therefore, combining the above two inequalities, we obtain
\begin{equation} 
\begin{aligned}
& \int_{\R^d} d\yb K_{\bb}(\ob - \yb) \int_{B_{\infty}(\yb;r)} d\xb K_{\bb}(\ob - \xb) 
|\cov(I_{h,n}(\xb), I_{h,n}(\yb))| \\  
&\leq O(1) \int_{\R^d} d\yb K_{\bb}(\ob - \yb) \int_{B_{\infty}(\yb;r)} d\xb K_{\bb}(\ob - \xb) \\  
&\leq O(1) \int_{\R^d} d\yb K_{\bb}(\ob - \yb)^2 \int_{B_{\infty}(\yb;r)} d\xb  
+ O(1) \varepsilon (b_1 \cdots b_d)^{-1} \int_{\R^d} d\yb K_{\bb}(\ob - \yb) \int_{B_{\infty}(\yb;r)} d\xb \\  
&\leq O(1) r^{d} \int_{\R^d} d\yb K_{\bb}(\ob - \yb)^2  
+ O(1) \varepsilon (b_1 \cdots b_d)^{-1} r^d \int_{\R^d} d\yb K_{\bb}(\ob - \yb)  \\  
& = O( r^d (b_1 \cdots b_d)^{-1}) +  O(\varepsilon  r^d  (b_1 \cdots b_d)^{-1}) = o(1), \quad n\rightarrow \infty.
\end{aligned}
\label{eq:varI-bound2}
\end{equation}  
Here, we use $\int_{B_{\infty}(\yb;r)} d\xb = |B_{\infty}(\yb;r)| = O(r^d)$ in the third inequality.

Lastly, using similar techniques, we can show that
\begin{equation}
\int_{\R^d} d\yb \, K_{\bb}(\ob - \yb) 
\int_{B_{\infty}(-\yb; r)} d\xb \, K_{\bb}(\ob - \xb) 
\big| \cov(I_{h,n}(\xb), I_{h,n}(\yb)) \big| = o(1), \quad n \to \infty.
\label{eq:varI-bound3}
\end{equation}
Substituting (\ref{eq:varI-bound1})--(\ref{eq:varI-bound3}) into (\ref{eq:varF-expression}) and applying the triangle inequality, we conclude that, for $i, j \in \{1, \dots, m\}$,
\begin{equation*}
\lim_{n \to \infty} \var(F_{n,\bb}^{(i,j)}(\ob)) = 0.
\end{equation*}
Summing the above over $i$ and $j$, and combining with (\ref{eq:Fnb-bias-all}), we obtain the desired result.
\end{proof}

Now, we are ready to prove Theorem \ref{thm:KSDE1}.

\begin{proof}[Proof of Theorem \ref{thm:KSDE1}]
We first show (\ref{eq:Fnb-conv1}). By applying Markov's inequality to Theorem \ref{thm:KSDE-th}, we obtain 
$F_{n,\bb}(\ob) - F_{h}(\ob) = o_p(1)$ as $n \to \infty$. Furthermore, Theorem \ref{thm:KSDE-bound-a} below states that 
$\widehat{F}_{n,\bb}(\ob) - F_{n,\bb}(\ob) = o_p(1)$. 
Combining these two results with the triangle inequality yields (\ref{eq:Fnb-conv1}).  

Next, the $L_2$-convergence result in (\ref{eq:Fnb-conv2}) follows immediately from 
Theorems \ref{thm:KSDE-th} and \ref{thm:KSDE-bound-b}. 

Altogether, we obtain the desired results.
\end{proof}

\subsection{Proof of Corollary \ref{coro:KSDE2}} \label{appen:proofKSDE2}
\begin{proof}
Let $\ob \in \R^{d}$ be such that the number of its nonzero elements, denoted by $|\ob|_0$, is greater than $d/4$, where $d \in \N$ is the spatial dimension of the point process. Then, by applying Theorems \ref{theorem:Fij-L2} (Equation (\ref{eq:E3-1})) and \ref{thm:KSDE-bound-c} (Equation (\ref{eq:E4-1})), together with Markov's inequality and the triangle inequality, we obtain 
$\widehat{F}_{n,\bb}(\ob) - F_{n,\bb}(\ob) = o_p(1)$. 
Combining this result with Theorem \ref{thm:KSDE-th}, we obtain the analogous consistency results under a misspecified intensity model.

Next, the $L_2$-convergence results follow from Theorems \ref{theorem:Fij-L2} (Equation (\ref{eq:E3-1})), 
\ref{thm:KSDE-bound-c} (Equation (\ref{eq:E4-2})), and \ref{thm:KSDE-th}.

Altogether, we obtain the desired results.
\end{proof}

\subsection{Proof of convergence results for Reimann sum version} \label{appen:proofR}

Recall the Riemann sum version of the kernel spectral density estimator:
\begin{equation} \label{eq:FRob}
\widehat{F}^{(R)}_{n,\bb}(\ob)
= \frac{\sum_{\kb \in \Z^d} K_{\bb}(\ob - \xb_{\kb, \bOmega}) \widehat{I}_{h,n}(\xb_{\kb, \bOmega})}
       {\sum_{\kb \in \Z^d} K_{\bb}(\ob - \xb_{\kb, \bOmega})},
\end{equation}
where $\xb_{\kb, \bOmega} = (2\pi k_1 / \Omega_1, \dots, 2\pi k_d / \Omega_d)^\top$ for a given grid vector $\bOmega = (\Omega_1, \dots, \Omega_d)^\top$.

In the following corollary, we show that $\widehat{F}^{(R)}_{n,\bb}(\ob)$ also converges to the pseudo-spectrum.

\begin{corollary}\label{thm:KSDE2}
Suppose that the grid vector $\bOmega$ increases with $n \in \N$ and takes the form $\bOmega = c \, \aB(n)$ for some constant $c \in (0,\infty)$. Then, under the same assumptions as in Theorem \ref{thm:KSDE1} and Corollary \ref{coro:KSDE2}, the convergence results stated therein also hold for $\widehat{F}^{(R)}_{n,\bb}(\ob)$ replacing $\widehat{F}_{n,\bb}(\ob)$.
\end{corollary}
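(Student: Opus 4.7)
The plan is to reduce the claim to Theorem~\ref{thm:KSDE1} and Corollary~\ref{coro:KSDE2} by proving that the Riemann-sum estimator $\widehat{F}^{(R)}_{n,\bb}(\ob)$ inherits the same asymptotic behaviour as the integral version $\widehat{F}_{n,\bb}(\ob)$. The key structural fact is that, under the prescribed scaling $\bOmega = c\,\aB(n)$, the grid spacing $\Delta_j := 2\pi/\Omega_j = 2\pi/(cA_j(n))$ satisfies
\[
\Delta_j/b_j(n) \;=\; \frac{2\pi}{c\,A_j(n)\,b_j(n)} \;\longrightarrow\; 0 \qquad \text{by Assumption~\ref{assum:D}},
\]
so the grid is asymptotically finer than the kernel bandwidth in every coordinate direction. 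This is precisely the condition that makes a Riemann sum over the grid converge to the corresponding integral.

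The first main step is to introduce the theoretical analogue $F^{(R)}_{n,\bb}(\ob)$, defined as in (\ref{eq:FRob}) but with the theoretical periodogram $I_{h,n}$ in place of $\widehat{I}_{h,n}$, and to show $\Ex\big|F^{(R)}_{n,\bb}(\ob) - F_h(\ob)\big|^2 \to 0$, paralleling the layout of Theorem~\ref{thm:KSDE-th}. For the bias, multiplying both the numerator and the denominator of (\ref{eq:FRob}) by $\prod_j \Delta_j$ turns them into Riemann sums: by uniform continuity of $K$ on its compact support, the denominator converges to $\int K_{\bb}(\ob-\xb)\,d\xb = 1$ with error $o(1)$, and a similar argument for the numerator combined with the uniform bound $\sup_{\xb}|\Ex[I_{h,n}(\xb)] - F_h(\xb)| = o(1)$ from~(\ref{eq:covJ00}) and the continuity of $F_h$ shows that $\Ex[F^{(R)}_{n,\bb}(\ob)] \to F_h(\ob)$.

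The variance analysis is the main obstacle and requires a grid analogue of the three-region decomposition in Theorem~\ref{thm:KSDE-th}. Expanding $\var(F^{(R)}_{n,\bb}(\ob))$ as a normalized double sum over $(\kb,\kb')$, I choose $r = r(n)$ with $r\to 0$, $|D_n|r^d \to \infty$, and $\max_j r/b_j \to 0$ (feasible because (SL) is in force), and split the pairs into those with $\xb_{\kb',\bOmega} \in B_\infty(\pm \xb_{\kb,\bOmega};r)$ and the complement. On the complement, Theorem~\ref{thm:asymDFT-IRS}, specifically equation~(\ref{eq:lim-Per-IRS1}), gives $\cov(I_{h,n}(\xb_{\kb,\bOmega}), I_{h,n}(\xb_{\kb',\bOmega})) = o(1)$ uniformly; on the near region I use the crude bound $|\cov| \le \var(I_{h,n}(\cdot)) = O(1)$. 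The number of grid pairs in each region is its Lebesgue volume divided by $(\prod_j \Delta_j)^2$, which cancels exactly against the squared denominator $\big(\sum_{\kb} K_{\bb}(\ob - \xb_{\kb,\bOmega})\big)^2 \sim (\prod_j \Delta_j)^{-2}$, so the asymptotic rates reduce to those in the integral case. The delicate point is verifying that the discretization does not spoil the $r^d/\prod_j b_j \to 0$ rate in the near-region variance; this is exactly what the grid-count cancellation above achieves.

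Finally, transferring from $F^{(R)}_{n,\bb}$ to the feasible $\widehat{F}^{(R)}_{n,\bb}$ follows the same route as the proofs of Theorem~\ref{thm:KSDE1} and Corollary~\ref{coro:KSDE2}: the additional bias and variance contributions arising from the plug-in estimator $\widehat{\bbeta}_n$ admit the analogous bounds, with integrals against $K_{\bb}$ replaced by Riemann sums whose discretization error is dominated by the $o(1)$ terms established in the bias step. Combining these ingredients yields both the consistency and the $L_2$-convergence of $\widehat{F}^{(R)}_{n,\bb}(\ob)$ to $F_h(\ob)$ in the correctly specified setting, and—restricted to $\ob$ with $|\ob|_0 > d/4$—in the misspecified setting via the same Fej\'{e}r-kernel localization used in Corollary~\ref{coro:KSDE2}.
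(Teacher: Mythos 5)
Your overall architecture matches the paper's: introduce the ideal Riemann-sum estimator $F^{(R)}_{n,\bb}$ built from the theoretical periodogram, prove a discrete analogue of Theorem \ref{thm:KSDE-th} by converting integrals against $K_{\bb}$ into normalized grid sums, and then transfer to the feasible estimator via discrete analogues of Theorems \ref{thm:KSDE-bound-a} and \ref{thm:KSDE-bound-b}. The bias and variance steps for $F^{(R)}_{n,\bb}$ are handled essentially as the paper does, and your observation that $\Delta_j/b_j \to 0$ under Assumption \ref{assum:D} correctly justifies the Riemann-sum approximation of integrals of functions that are slowly varying at the bandwidth scale.

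There is, however, a genuine gap in the transfer step, which is precisely where the paper's proof does its only nontrivial work. To control $\widehat{F}^{(R)}_{n,\bb} - F^{(R)}_{n,\bb}$ one must bound the grid analogue of (\ref{eq:KHbound}), namely
\begin{equation*}
\sup_{\ob}\;\frac{\sum_{\kb} K_{\bb}(\ob - \xb_{\kb,\bOmega})\,\big\{|D_n|^{-1}\big|H^{(n)}_{h\partial\underline{\lambda}}(\xb_{\kb,\bOmega})\big|\big\}^2}{\sum_{\kb} K_{\bb}(\ob - \xb_{\kb,\bOmega})} \;=\; O\big(|D_n|^{-1}(b_1\cdots b_d)^{-1}\big).
\end{equation*}
You assert that the discretization error here is ``dominated by the $o(1)$ terms established in the bias step,'' but that reasoning does not apply: the bias-step Riemann-sum argument relies on the integrand varying slowly relative to the grid spacing $\Delta_j \asymp 1/A_j$, whereas $|H^{(n)}_{h\partial\underline{\lambda}}(\xb)|^2$ is a Fej\'er-type kernel that oscillates and concentrates at exactly the scale $1/A_j$. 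A generic Riemann sum of such a function can differ from its integral by a term of the same order as the integral itself. The paper resolves this by writing $|D_n|^{-1}H^{(n)}_g(\xb_{\kb,\bOmega}) = \mathcal{F}(g)(-\aB\odot\xb_{\kb,\bOmega}) = \mathcal{F}(g)(-2\pi\kb/c)$ and invoking Parseval's identity for the compactly supported $g$ on the lattice $(2\pi/c)\Z^d$ to get $\sum_{\kb}|\mathcal{F}(g)(2\pi\kb/c)|^2 = O(1)$ (Equation (\ref{eq:D3-R})). This is the one place where the hypothesis $\bOmega = c\,\aB(n)$ is used essentially, rather than merely to make the grid finer than the bandwidth; your proposal never identifies where this hypothesis enters, and without the lattice Parseval argument the required $O(|D_n|^{-1}(b_1\cdots b_d)^{-1})$ rate for the plug-in contribution is not established.
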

\begin{proof}
For brevity, we only show the results for the correctly specified bias case (as in Theorem \ref{thm:KSDE1}); the misspecified bias case (as in Corollary \ref{coro:KSDE2}) can be treated similarly (details are omitted).

To show the convergence results in (\ref{eq:Fnb-conv1}) and (\ref{eq:Fnb-conv2}) for $\widehat{F}^{(R)}_{n,\bb}$, we define $F^{(R)}_{n,\bb}$ to be the ideal (Riemann sum) version of $\widehat{F}^{(R)}_{n,\bb}(\ob)$ by replacing $\widehat{I}_{h,n} (\xb_{\kb, \bOmega})$ with the theoretical periodogram $I_{h,n} (\xb_{\kb, \bOmega})$ in (\ref{eq:FRob}). We will first show
\begin{equation} \label{eq:thmC1}
\lim_{n \rightarrow \infty} \Ex\big| F^{(R)}_{n,\bb}(\ob) - F_h(\ob) \big|^2 = 0, \quad \ob \in \R^{d},
\end{equation}
which is the discrete analogue of Theorem \ref{thm:KSDE-th}. The proof of this is almost identical to that of Theorem \ref{thm:KSDE-th}, except that all integrals of the form $\int_{\R^d} K_{\bb}( \ob - \xb ) g(\xb) d\xb$ 
for various functions $g$ should be replaced by
\begin{equation*}
\int_{\R^d} K_{\bb}(\ob - \xb) g(\xb) d\mu(\xb) = \frac{ \sum_{ \kb\in\Z^d } K_{\bb}(\ob - \xb_{\kb, \bOmega}) g(\xb_{\kb, \bOmega}) }{ \sum_{ \kb\in\Z^d } K_{\bb}(\ob - \xb_{\kb, \bOmega}) },
\end{equation*}
where $\mu$ is the Dirac measure on $\{\xb_{\kb, \bOmega}: \kb \in \Z^d\}$ with point mass
$\mu( \{\xb_{\kb, \bOmega}\} ) =1/ \{ \sum_{ \kb\in\Z^d } K_{\bb}(\ob-\xb_{\kb, \bOmega}) \}$.

Then, all arguments in the proof of Theorem \ref{thm:KSDE-th} remain valid by replacing $d\xb$ with $d\mu(\xb)$, with a help from the following identity:
\begin{equation*}
\frac{1}{\text{vol}(\bOmega)} \sum_{\kb\in\Z^d} K_{\bb}(\ob - \xb_{\kb, \bOmega}) = \int_{\R^d} K_{\bb}(\ob - \xb) d\xb + o(1) = 1 + o(1), \quad n \rightarrow \infty,
\end{equation*}
where $\text{vol}(\bOmega) = \prod_{j=1}^{d} \Omega_j(n)$.

Next, we bound the difference $\big| \widehat{F}^{(R)}_{n,\bb}(\ob) - F^{(R)}_{n,\bb}(\ob) \big|$. In particular, under appropriate conditions, we have, as $n \rightarrow \infty$,
\begin{eqnarray}
\sup_{\ob \in \R^d} \big| \widehat{F}^{(R)}_{n,\bb}(\ob) - F^{(R)}_{n,\bb}(\ob) \big| &=& o_p(1), \label{eq:thmD3} \\
\sup_{\ob \in \R^d} \Ex \big| \widehat{F}^{(R)}_{n,\bb}(\ob) - F^{(R)}_{n,\bb}(\ob) \big|^2 &=& O\big( |D_n|^{-1} (b_1 \cdots b_d)^{-1} \big) + O\big( |D_n|^{1-r/4} \big), \label{eq:thmD4}
\end{eqnarray}
which are the discrete analogues of Theorems \ref{thm:KSDE-bound-a} and \ref{thm:KSDE-bound-b} below.

The proofs of (\ref{eq:thmD3}) and (\ref{eq:thmD4}) are almost identical to those of Theorems \ref{thm:KSDE-bound-a} and \ref{thm:KSDE-bound-b}. Here, again, the main modification is to replace the Lebesgue integral $d\xb$ by the Dirac measure $d\mu(\xb)$ introduced above. Another key step is to show the following:
\begin{equation} \label{eq:D3-R}
\sup_{\ob \in \R^{d}} \int_{\R^d} K_{\bb}(\ob-\xb) \left\{ |D_n|^{-1} \big| H_{\underline{g}}^{(n)}(\xb) \big| \right\}^2 d\mu(\xb) 
= O\big(|D_n|^{-1} (b_1 \cdots b_d)^{-1}\big), \quad n\rightarrow \infty,
\end{equation} where $\underline{g} = h\partial\underline{\lambda}$, which is a discrete setting of Equation (\ref{eq:KHbound}) below. For simplicity, assume $\underline{\lambda}$ and $\bbeta$ are univariate, so that $\underline{g} = g$ is univariate. To show (\ref{eq:D3-R}), observe that
\begin{equation*}
|D_n|^{-1} H_g^{(n)}(\xb) = \frac{1}{|D_n|} \int_{D_n} g(\yb\oslash\aB) \exp(-i\yb^\top \xb) d\yb
= \int_{[-1/2,1/2]^d} g(\zb) \exp(-i\zb^\top (\aB\odot\xb)) d\zb,
\end{equation*}
where the second identity follows from the change of variables $\yb\oslash\aB = \zb$. Combining this with $\sup_{\ob} K_{\bb}(\ob) < C(b_1 \cdots b_d)^{-1}$ and 
$\sum_{\kb\in\Z^d} K_{\bb}(\ob - \xb_{\kb, \bOmega}) = \text{vol}(\bOmega) (1 + o(1)) = O(|D_n|)$, we obtain
\begin{equation} \label{eq:D3-R2}
\begin{aligned}
& \sup_{\ob \in \R^{d}} \int_{\R^d} K_{\bb}(\ob-\xb) \left\{ |D_n|^{-1} \big| H_g^{(n)}(\xb) \big| \right\}^2 d\mu(\xb) \\
& \quad \leq O\big(|D_n|^{-1} (b_1 \cdots b_d)^{-1}\big) \sum_{\kb\in\Z^d} \left| \int_{[-1/2,1/2]^d} g(\zb) \exp\big(-i (2\pi/c) \zb^\top \kb\big) d\zb \right|^2.
\end{aligned}
\end{equation}
Here, we use the condition $\aB\odot\xb_{\kb, \bOmega} = 2\pi c^{-1}\kb$ in the inequality. To bound the summand in (\ref{eq:D3-R2}), we extend $g(\xb) = 0$ for $\xb \notin [-1/2,1/2]^d$ and, using the change of variables $\xb = (2\pi/c)\zb$, we have
\begin{equation*}
\begin{aligned}
&\sum_{\kb\in\Z^d} \left| \int_{[-1/2,1/2]^d} g(\zb) \exp(-i (2\pi/c) \zb^\top \kb ) d\zb \right|^2 \\
&\quad = (2\pi c)^d \sum_{\kb\in\Z^d} \left| \frac{1}{(2\pi)^d} \int_{\R^d} g\left(\frac{c\xb}{2\pi}\right) \exp(-i \xb^\top \kb) d\xb \right|^2 \\
&\quad = (2\pi c)^d \int_{\R^d} \left| g\left(\frac{c\xb}{2\pi}\right) \right|^2 d\xb 
= (2\pi)^{2d} \int_{[-1/2,1/2]^d} |g(\yb)|^2 d\yb = O(1).
\end{aligned}
\end{equation*}
Here, the second identity follows from Parseval's identity for the compactly supported function $g_c(\xb) = g(c\xb/(2\pi))$ and the third identity is due to the change of variables $\yb = c\xb/(2\pi)$. Substituting this bound into (\ref{eq:D3-R2}) shows (\ref{eq:D3-R}). Once (\ref{eq:D3-R}) is proved, the proofs of (\ref{eq:thmD3}) and (\ref{eq:thmD4}) follow almost identically to those of Theorems \ref{thm:KSDE-bound-a} and \ref{thm:KSDE-bound-b}.

Finally, (\ref{eq:thmC1}) and (\ref{eq:thmD3}) imply $\widehat{F}_{n,\bb}^{(R)}(\ob) \Pcon F_h(\ob)$, and (\ref{eq:thmC1}) together with (\ref{eq:thmD4}) yields \\ $\lim_{n\rightarrow\infty} \Ex \big| \widehat{F}_{n,\bb}^{(R)}(\ob) - F_h(\ob) \big|^2 = 0$. Thus, we get the desired results.
\end{proof}


\section{Convergence rate of the MSE of $\widehat{F}_{n,\bb}(\ob)$}

In some situations, it is necessary to quantify the rate at which $\widehat{F}_{n,\bb}(\ob)$ converges to $F_h(\ob)$. For instance, the MSE convergence rate of $\widehat{F}_{n,\bb}(\ob)$ is used to determine the optimal bandwidth in Section \ref{sec:opt}. However, because $\widehat{F}_{n,\bb}$ involves estimating the parametric form of the first-order intensity, directly computing its moments is infeasible. Instead, we consider the moments of its theoretical counterpart, denoted by $F_{n,\bb}(\ob)$ in (\ref{eq:KSDE-th}). 

The following theorem addresses the asymptotic order of the bias and variance of $F_{n,\bb}(\ob)$.
\begin{theorem} \label{thm:KSDE-bias-var}
Suppose that Assumptions \ref{assum:A}, \ref{assum:B} (for $k = 4$), \ref{assum:C}, \ref{assum:D}, and \ref{assum:E} hold. Moreover, assume that the data taper $h$ is Lipschitz continuous on $[-1/2,1/2]^d$ and the side lengths $\aB(n)$ satisfy condition \textrm{(SL)}. Then, the following two assertions hold:
\begin{itemize}
    \item[(i)] $\sup_{\ob \in \R^d}|\Ex [F_{n,\bb}(\ob)]  - F_h(\ob)| = O(\|\bb\|^2 + |D_n|^{-1/d})$, $n\rightarrow \infty$.
    \item[(ii)] $\sup_{\ob \in \R^d} \var |F_{n,\bb}(\ob)| = O(|D_n|^{-1} (b_1 \cdots b_d)^{-1})$, $n\rightarrow \infty$.
\end{itemize}
\end{theorem}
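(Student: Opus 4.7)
The plan is to bound the bias and the variance of $F_{n,\bb}(\ob)$ separately, building on the exact first- and second-moment expressions for $I_{h,n}$ obtained from Lemma \ref{lemma:DFT-expr}. Throughout, the kernel normalization $\int K_{\bb}=1$ and the symmetry and compact support of $K$ will be used repeatedly to reduce various bias/variance pieces to standard convolutions against a Fej\'er-type kernel.

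For the bias in (i), I would decompose
$$\Ex[F_{n,\bb}(\ob)] - F_h(\ob) = \int K_{\bb}(\ob-\xb)\{F_h(\xb) - F_h(\ob)\}d\xb + \int K_{\bb}(\ob-\xb)\{\Ex[I_{h,n}(\xb)] - F_h(\xb)\}d\xb.$$
The first (smoothing) summand is handled by Taylor expanding $F_h$ about $\ob$ to second order: the zeroth-order piece cancels, the first-order piece vanishes by symmetry of $K$, and Assumption \ref{assum:E}(ii) together with $\int \|\xb\|^2 K_{\bb}(\xb)\,d\xb = O(\|\bb\|^2)$ yields an $O(\|\bb\|^2)$ contribution uniformly in $\ob$. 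For the second (finite-window) summand, specializing Lemma \ref{lemma:DFT-expr} to $\ob_1=\ob_2=\xb$ and using $H_{g,1}^{(n)}(\mathbf{0}) = |D_n|H_{g,1}$ cancels the leading $F_h^{(i,j)}(\xb)$ exactly, leaving the edge-effect residual
$$\Ex[I_{h,n}^{(i,j)}(\xb)] - F_h^{(i,j)}(\xb) = \frac{(2\pi)^{-d}}{H_{h,2}|D_n|}\int_{\R^d} e^{-i\ubb^\top\xb}\,\ell_2^{(i,j)}(\ubb)\,R^{(n)}_{h\lambda^{(j)},h\lambda^{(i)}}(\ubb,\mathbf{0})\,d\ubb.$$
Lipschitz continuity of $h$ and of each $\lambda^{(i)}$ yields $|R^{(n)}_{h\lambda^{(j)},h\lambda^{(i)}}(\ubb,\mathbf{0})|\leq C|D_n|\min(\|\ubb\|/\|\aB\|_\infty,1)$. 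Combined with a finite first moment of $\ell_2^{(i,j)}$ and condition (SL), this bounds the residual by $O(\|\aB\|_\infty^{-1})=O(|D_n|^{-1/d})$, uniformly in $\xb$ and hence in $\ob$. Summing over the finitely many matrix entries completes (i).

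For the variance in (ii), I would start from
$$\var(F_{n,\bb}^{(i,j)}(\ob)) = \iint K_{\bb}(\ob-\xb)K_{\bb}(\ob-\yb)\cov(I_{h,n}^{(i,j)}(\xb),I_{h,n}^{(i,j)}(\yb))\,d\xb\,d\yb,$$
and apply the cumulant decomposition (\ref{eq:cum-Per}) together with $|\cov(J^{(i)}_{h,n}(\xb),J^{(j)}_{h,n}(\yb))|\leq C|D_n|^{-1}|H^{(n)}(\xb-\yb)|$ (a direct consequence of Lemma \ref{lemma:DFT-expr}) to obtain
$$|\cov(I_{h,n}^{(i,j)}(\xb),I_{h,n}^{(i,j)}(\yb))| \leq C|D_n|^{-2}|H^{(n)}(\xb-\yb)|^2 + C|D_n|^{-2}|H^{(n)}(\xb+\yb)|^2 + O(|D_n|^{-1}).$$
Changing variables $\zb=\xb-\yb$ in the first contribution and collapsing the inner $\xb$-integral by symmetry of $K_{\bb}$ reduces it to $\int (K_{\bb}\star K_{\bb})(\zb)|D_n|^{-2}|H^{(n)}(\zb)|^2\,d\zb$, which is bounded by $\|K_{\bb}\star K_{\bb}\|_\infty\int |D_n|^{-2}|H^{(n)}|^2 = O((b_1\cdots b_d)^{-1})\cdot O(|D_n|^{-1})$ using Parseval's identity $\int |H^{(n)}|^2=(2\pi)^d|D_n|H_{h^2,1}$. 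For the $\xb+\yb$ contribution, setting $\wb=\xb+\yb$ yields $\int (K_{\bb}\star K_{\bb})(\wb-2\ob)|D_n|^{-2}|H^{(n)}(\wb)|^2\,d\wb$, to which the same $\|\cdot\|_\infty$-times-$L_1$ estimate applies at the same rate, uniformly in $\ob$. The $O(|D_n|^{-1})$ remainder, integrated against the probability density $K_{\bb}\otimes K_{\bb}$, contributes only $O(|D_n|^{-1})$, which is dominated since $b_1\cdots b_d\to 0$.

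The main obstacle is verifying the first-moment integrability $\int \|\ubb\|\cdot|\ell_2^{(i,j)}(\ubb)|\,d\ubb<\infty$ used in the $O(|D_n|^{-1/d})$ bound on the periodogram bias. This is not stated explicitly, but it should follow from Assumption \ref{assum:E}(ii) via the representation $F_h - (2\pi)^{-d}H_{h,2}^{-1}\diag(H_{h^2\underline{\lambda}}) = H_{h,2}^{-1}(H_{h^2\underline{\lambda}\underline{\lambda}^\top}\odot\mathcal{F}^{-1}(L_2))$: bounded continuous second $\ob$-derivatives control the second moment of $\ell_2^{(i,j)}$, and Cauchy--Schwarz against the $L_1$-integrability guaranteed by Assumption \ref{assum:B} delivers the first moment. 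Carefully justifying the interchange of differentiation and integration in this Fourier-inversion step is the only genuinely delicate point; the rest of the argument is essentially bookkeeping on the Fej\'er-kernel bounds combined with the closed-form expressions already established in the paper.
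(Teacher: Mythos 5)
Your treatment of the bias (i) is essentially the paper's own argument: the same two-term decomposition, Taylor expansion plus symmetry of $K_{\bb}$ for the $O(\|\bb\|^2)$ smoothing bias, and the residual $R^{(n)}_{h\lambda^{(j)},h\lambda^{(i)}}(\ubb,\mathbf{0})$ controlled by $C|D_n|\min(\|\ubb\oslash\aB\|,1)$ and integrated against $\|\ubb\|\,|\ell_2^{(i,j)}(\ubb)|$. The integrability $\int\|\ubb\|\,|L_2(\ubb)|\,d\ubb<\infty$ that you flag as delicate is asserted in the paper in exactly the same way (``provided Assumption \ref{assum:E}(ii) holds''), so you are no worse off there.

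The variance (ii) has a genuine gap. The bound $|\cov(J^{(i)}_{h,n}(\xb),J^{(j)}_{h,n}(\yb))|\leq C|D_n|^{-1}|H^{(n)}(\xb-\yb)|$ is \emph{not} a direct consequence of Lemma \ref{lemma:DFT-expr}: the lemma's third term, $\int e^{-i\ubb^\top\xb}\ell_2^{(i,j)}(\ubb)\,R^{(n)}_{h\lambda^{(j)},h\lambda^{(i)}}(\ubb,\xb-\yb)\,d\ubb$, is the Fourier transform at $\xb-\yb$ of a \emph{different} windowed function, and it does not vanish (or decay) where $H^{(n)}_{h^2\lambda^{(i)}\lambda^{(j)}}(\xb-\yb)$ does. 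The best uniform control on that term is $O(|D_n|^{-1/d})$ (via $|D_n|^{-1}|R^{(n)}(\ubb,\cdot)|\leq C\min(\|\ubb\oslash\aB\|,1)$), so after squaring, your variance estimate acquires an extra $O(|D_n|^{-2/d})$ (plus cross terms), which is not $O(|D_n|^{-1}(b_1\cdots b_d)^{-1})$ for $d\geq 3$ under Assumption \ref{assum:D} alone; statement (ii) as claimed is therefore not delivered. The paper avoids this by never using a pointwise Fej\'er-type bound on the DFT covariance: it invokes Assumption \ref{assum:E}(i) to write $L_2=\mathcal{F}(\widetilde{F})$ with $\widetilde{F}=\mathcal{F}^{-1}(L_2)\in L_1^{m\times m}(\R^d)$, substitutes into the Campbell-formula expression (\ref{eq:DFT-exp00}) to obtain the \emph{exact} identity
\begin{equation*}
(2\pi)^{d}H_{h,2}|D_n|\,\cum\big(J^{(i)}_{h,n}(\ob_1),J^{(j)}_{h,n}(\ob_2)\big)
=\delta_{i,j}H^{(n)}_{h^2\lambda^{(i)}}(\ob_1+\ob_2)
+\int_{\R^d}\widetilde{f}^{(i,j)}(\bxi)\,H^{(n)}_{h\lambda^{(i)}}(\ob_1-\bxi)\,H^{(n)}_{h\lambda^{(j)}}(\ob_2+\bxi)\,d\bxi,
\end{equation*}
with no remainder, and then bounds the four resulting products $B_{11},\dots,B_{14}$ separately, each by Parseval's theorem applied to $\widehat{K}_{\bb}=\mathcal{F}^{-1}(K_{\bb})$ and to the inner $\bxi$-integral. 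Your Young/Parseval computation is exactly the $B_{11}$ piece; to close the argument you need to carry out the analogous estimates for the convolution terms $B_{12},B_{13},B_{14}$ rather than absorb them into a pointwise kernel bound.
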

\begin{proof}
See Appendix \ref{appen:proofKSDE-MSE}.
\end{proof}

\begin{remark} \label{rmk:bias}
We observe from Theorem \ref{thm:KSDE-bias-var}(i) that $F_{n,\bb}(\ob)$ has two sources of bias. The $O(\|\bb\|^2)$ bias arises from the classical nonparametric kernel smoothing, whereas the $O(|D_n|^{-1/d})$ bias originates from the bias of the periodogram. Specifically, in (\ref{eq:Ibias-uniform}) below, we show that
\begin{equation*}
\sup_{\ob \in \R^d} |\Ex[I_{h,n}(\ob)] - F_h(\ob)| = O(|D_n|^{-1/d}), \quad n \rightarrow \infty.
\end{equation*}
This $O(|D_n|^{-1/d})$ bias transfers to the bias of the kernel estimator.
\end{remark}

Now, we assume a common bandwidth for all coordinates, i.e., $b_1 = \cdots = b_d = b \in (0,\infty)$. Recall
\begin{equation*}
\text{MSE}(b) = \sup_{\ob \in \R^d} \Ex \big|\widehat{F}_{n,b}(\ob) - F_h(\ob)\big|^2.
\end{equation*}
The following theorem addresses the convergence rate of $\text{MSE}(b)$ under the correctly specified first-order intensity model.

\begin{theorem} \label{thm:MSEb-rate}
Suppose that Assumptions \ref{assum:A}, \ref{assum:B} (for $k = 4$), \ref{assum:C}, \ref{assum:beta}, \ref{assum:beta2}(ii) (for $r = 8$), \ref{assum:D}, and \ref{assum:E} hold. Moreover, assume that the true first-order intensity is $\underline{\lambda}(\xb;\bbeta_0)$, the data taper $h$ is Lipschitz continuous on $[-1/2,1/2]^d$, and the side lengths $\aB(n)$ satisfy condition (SL). Then, 
\begin{equation*}
\text{MSE}(b) = O(b^{4} + |D_n|^{-2/d} + |D_n|^{-1} b^{-d}), \quad n \rightarrow \infty.
\end{equation*}
\end{theorem}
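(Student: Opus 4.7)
The strategy is to decompose the error via the intermediate theoretical kernel estimator $F_{n,b}(\ob) = \int_{\R^d} K_{b}(\ob - \xb) I_{h,n}(\xb)\, d\xb$ based on the true (unknown) intensity. Writing $\widehat{F}_{n,b}(\ob) - F_h(\ob) = \{\widehat{F}_{n,b}(\ob) - F_{n,b}(\ob)\} + \{F_{n,b}(\ob) - F_h(\ob)\}$ and applying the elementary inequality $(a+b)^2 \le 2a^2 + 2b^2$, it suffices to bound each of the two pieces in $L_2$ uniformly in $\ob \in \R^d$.

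For the ideal-estimator error $F_{n,b}(\ob) - F_h(\ob)$, I would invoke Theorem \ref{thm:KSDE-bias-var} directly. Its part (i) gives $\sup_{\ob}|\Ex F_{n,b}(\ob) - F_h(\ob)|^2 = O(b^4 + |D_n|^{-2/d})$ after squaring, where the $O(b^4)$ term is the standard second-order nonparametric bias driven by $\|\nabla^2 F_h\|_\infty < \infty$ (Assumption \ref{assum:E}(ii)) and the $O(|D_n|^{-2/d})$ term propagates the periodogram's finite-window bias through the kernel convolution. Part (ii) gives the variance bound $\sup_{\ob}\var\{F_{n,b}(\ob)\} = O(|D_n|^{-1} b^{-d})$. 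Summing these contributions produces exactly the three target rates.

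For the feasibility gap $\widehat{F}_{n,b}(\ob) - F_{n,b}(\ob)$, which arises from plugging in $\widehat{\bbeta}_n$ in lieu of $\bbeta_0$, I would appeal to Theorem \ref{thm:KSDE-bound-b} (referenced in the proof of Theorem \ref{thm:KSDE1}). Under Assumption \ref{assum:beta2}(ii), that theorem yields a uniform bound of the form $\sup_{\ob}\Ex|\widehat{F}_{n,b}(\ob) - F_{n,b}(\ob)|^2 = O(|D_n|^{-1} b^{-d}) + O(|D_n|^{1-r/4})$. Taking $r = 8$ as assumed, the second remainder becomes $O(|D_n|^{-1})$, which is dominated by $O(|D_n|^{-1} b^{-d})$ since $b \to 0$ by Assumption \ref{assum:D}. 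Hence the feasibility gap contributes only to the variance-type term, and combining all estimates gives $\mathrm{MSE}(b) = O(b^4 + |D_n|^{-2/d} + |D_n|^{-1} b^{-d})$.

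The main obstacle is not the final assembly, which is routine, but the uniformity in $\ob$ hidden inside Theorem \ref{thm:KSDE-bias-var}: controlling the bias of the periodogram $\sup_{\ob}|\Ex I_{h,n}(\ob) - F_h(\ob)| = O(|D_n|^{-1/d})$ requires the regularity of $F_h$ and the Lipschitz continuity of $h$ together with condition (SL) (so that all side lengths are of order $|D_n|^{1/d}$), and is essentially a Fej\'er-type approximation argument on a finite rectangular domain. Once that uniform bias rate is in hand, the kernel-convolution step passes it through unchanged, and the variance bound follows from the asymptotic uncorrelatedness of the periodogram at distinct frequencies together with a standard partition of the kernel support into near-diagonal and far-diagonal regions, mirroring the argument already carried out in the proof of Theorem \ref{thm:KSDE-th}.
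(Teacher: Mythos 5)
Your proposal is correct and follows essentially the same route as the paper: the same decomposition through the theoretical estimator $F_{n,b}$, the same elementary inequality $\Ex|a+b|^2 \leq 2\Ex|a|^2 + 2\Ex|b|^2$, Theorem \ref{thm:KSDE-bias-var} for the bias-plus-variance of the ideal estimator, and Theorem \ref{thm:KSDE-bound-b} with $r=8$ to absorb the feasibility gap into the $O(|D_n|^{-1}b^{-d})$ term. Your additional commentary on the uniform periodogram bias and the near/far-diagonal variance partition accurately describes what those auxiliary theorems contain.
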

\begin{proof}
Using the inequality $\Ex |a+b|^2 \leq 2\{\Ex |a|^2 + \Ex |b|^2\}$, we have
\begin{equation*}
\text{MSE}(b) \leq 2 \sup_{\ob \in \R^d} \Ex |F_{n,b}(\ob) - F_h(\ob)|^2 + 2 \sup_{\ob \in \R^d} \Ex |\widehat{F}_{n,b}(\ob) - F_{n,b}(\ob)|^2.
\end{equation*}
By Theorem \ref{thm:KSDE-bias-var}, the first term above is $O(b^{4} + |D_n|^{-2/d} + |D_n|^{-1} b^{-d})$.
Moreover, Theorem \ref{thm:KSDE-bound-b} below shows that the second term in $\text{MSE}(b)$ is bounded by 
$O(|D_n|^{-1} b^{-d}) + O(|D_n|^{-1}) = O(|D_n|^{-1} b^{-d})$. Combining these two bounds, we get the desired result.
\end{proof}

Now, we consider the case when the first-order intensity is incorrectly specified. In this situation, as noted in Remark \ref{rmk:near-origin}, $\text{MSE}(b) = \infty$ due to the large peak at the origin. Therefore, we focus on the MSE at frequencies away from the origin: for $\delta > 0$, define
\begin{equation*}
\text{MSE}(b;\delta) = \sup_{\ob: \|\ob\|_{\infty} > \delta} \Ex |\widehat{F}_{n,b}(\ob) - F_h(\ob)|^2.
\end{equation*}
The following corollary addresses the convergence rate of $\text{MSE}(b;\delta)$ to zero.

\begin{corollary} \label{coro:MSEb-rate}
Suppose that Assumptions \ref{assum:A}, \ref{assum:B} (for $k = 4$), \ref{assum:C}, \ref{assum:beta}, \ref{assum:beta2}(ii) (for $r =8$), \ref{assum:D}, \ref{assum:E}, and \ref{assum:smooth} hold. Moreover, assume that side lengths $\aB(n)$ satisfy condition (SL).
Let $d$ equals to one or two. Then, for any fixed $\delta>0$,
\begin{equation*}
\text{MSE}(b;\delta) = O(b^{4} + |D_n|^{-2/d} + |D_n|^{-1}b^{-d}), \quad n\rightarrow \infty.
\end{equation*}
\end{corollary}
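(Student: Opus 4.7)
The plan is to carry over the MSE-decomposition argument of Theorem \ref{thm:MSEb-rate}, replacing only the step that invoked Theorem \ref{thm:KSDE-bound-b} (which required a correctly specified first-order intensity) by the misspecified-model tools used in the proof of Corollary \ref{coro:KSDE2}, namely Theorem \ref{theorem:Fij-L2} (Equation (\ref{eq:E3-1})) and Theorem \ref{thm:KSDE-bound-c} (Equation (\ref{eq:E4-2})).

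Specifically, by the elementary inequality $\Ex|a+b|^2 \leq 2(\Ex|a|^2+\Ex|b|^2)$, one writes
\begin{equation*}
\text{MSE}(b;\delta) \leq 2\sup_{\|\ob\|_\infty>\delta}\Ex|F_{n,b}(\ob)-F_h(\ob)|^2 + 2\sup_{\|\ob\|_\infty>\delta}\Ex|\widehat{F}_{n,b}(\ob)-F_{n,b}(\ob)|^2.
\end{equation*}
The theoretical estimator $F_{n,b}$ centers the DFT by the true mean, so the first supremum is insensitive to misspecification and is controlled uniformly in $\ob\in\R^d$ by Theorem \ref{thm:KSDE-bias-var}, giving $O(b^4 + |D_n|^{-2/d} + |D_n|^{-1}b^{-d})$ exactly as in Theorem \ref{thm:MSEb-rate}. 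The second supremum is where the misspecification matters. The $L^2$ bounds from Theorems \ref{theorem:Fij-L2} and \ref{thm:KSDE-bound-c}, combined with the $r=8$ moment condition in Assumption \ref{assum:beta2}(ii), control $\Ex|\widehat{F}_{n,b}(\ob)-F_{n,b}(\ob)|^2$ at any frequency with $|\ob|_0>d/4$; for $d\in\{1,2\}$ and any $\delta>0$, the condition $\|\ob\|_\infty>\delta$ forces $|\ob|_0\geq 1>d/4$, so these bounds apply uniformly over the entire supremum set and deliver contributions that fit within $O(|D_n|^{-1}b^{-d})+O(|D_n|^{-2/d})$.

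The main obstacle, as anticipated in Remark \ref{rmk:coroKSDE}, is controlling the deterministic residual that the misspecified mean correction injects into $\widehat{F}_{n,b}(\ob)-F_{n,b}(\ob)$: it behaves like a Fej\'er kernel, carrying an $O(|D_n|)$ spike at the origin that must decay fast enough on $\{\|\ob\|_\infty>\delta\}$ to be absorbed into the target rate after convolution with $K_b$. Quantifying this decay is precisely the work done inside Theorem \ref{thm:KSDE-bound-c} via the smoothness conditions of Assumption \ref{assum:smooth} and an integration-by-parts argument, and the resulting bound, which depends jointly on $|\ob|_0$ and $d$, fits within the target MSE rate only when $d\in\{1,2\}$. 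Once the two supremum contributions are combined via the triangle inequality, the stated rate $O(b^4 + |D_n|^{-2/d} + |D_n|^{-1}b^{-d})$ follows and the proof concludes by mirroring the final step of Theorem \ref{thm:MSEb-rate}.
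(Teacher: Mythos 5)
Your overall strategy is exactly the paper's: bound $\text{MSE}(b;\delta)$ by $2\sup_{\|\ob\|_\infty>\delta}\Ex|F_{n,b}(\ob)-F_h(\ob)|^2 + 2\sup_{\|\ob\|_\infty>\delta}\Ex|\widehat{F}_{n,b}(\ob)-F_{n,b}(\ob)|^2$, control the first supremum by Theorem \ref{thm:KSDE-bias-var} (which is insensitive to misspecification since $F_{n,b}$ centers by the true mean), and control the second by the misspecified-model machinery, with $r=8$ and $d\in\{1,2\}$ ensuring the extra terms are absorbed into $O(|D_n|^{-1}b^{-d})$. However, there is a concrete flaw in the step handling the second supremum: you invoke Theorem \ref{theorem:Fij-L2}, Equation (\ref{eq:E3-1}), and Theorem \ref{thm:KSDE-bound-c}, Equation (\ref{eq:E4-2}). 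Those are the \emph{pointwise, rate-free} statements (``$\lim_n \Ex|\cdot|^2=0$ for each fixed $\ob$ with $|\ob|_0>d/4$''). Your inference that, because every $\ob$ with $\|\ob\|_\infty>\delta$ satisfies $|\ob|_0\geq 1>d/4$, ``these bounds apply uniformly over the entire supremum set'' is a non sequitur: a pointwise $o(1)$ at each frequency yields neither uniformity over the set nor any rate, and without a rate you cannot conclude that this contribution is $O(|D_n|^{-1}b^{-d})$.

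What the paper actually uses are the \emph{uniform} statements with explicit rates, namely Equation (\ref{eq:E3-2}), $\sup_{\|\ob\|_\infty>\delta}\Ex|\widetilde{F}_{n,\bb}(\ob)-F_{n,\bb}(\ob)|^2=O(|D_n|^{1-4/d})$, and Equation (\ref{eq:E4-3}), $\sup_{\|\ob\|_\infty>\delta}\Ex|\widehat{F}_{n,\bb}(\ob)-\widetilde{F}_{n,\bb}(\ob)|^2=O(|D_n|^{-1}(b_1\cdots b_d)^{-1})+O(|D_n|^{1-r/4})$; with $r=8$ the latter contributes $O(|D_n|^{-1})$, and with $d\in\{1,2\}$ the former contributes at most $O(|D_n|^{-1})$, both of which are $O(|D_n|^{-1}b^{-d})$. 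Your closing paragraph shows you understand that the quantitative Fej\'er-kernel decay established via Assumption \ref{assum:smooth} inside Theorems \ref{theorem:Fij-L2} and \ref{thm:KSDE-bound-c} is what forces $d\in\{1,2\}$, so the gap is one of citation and justification rather than conception; but as written the key bound on $\sup_{\|\ob\|_\infty>\delta}\Ex|\widehat{F}_{n,b}(\ob)-F_{n,b}(\ob)|^2$ does not follow from the equations you cite.
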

\begin{proof}
By using arguments similar to those in the proof of Theorem \ref{thm:MSEb-rate}, we have
\begin{equation*}
\text{MSE}(b;\delta) \leq O(b^{4} + |D_n|^{-2/d} + |D_n|^{-1} b^{-d}) + 
2 \sup_{\|\ob\|_{\infty} > \delta} \Ex |\widehat{F}_{n,b}(\ob) - F_{n,b}(\ob)|^2.
\end{equation*}
Next, by applying Theorems \ref{theorem:Fij-L2} (Equation (\ref{eq:E3-2})) and \ref{thm:KSDE-bound-c} (Equation (\ref{eq:E4-3})) below, the second term above is $O(|D_n|^{-1} b^{-d} + |D_n|^{1-r/4} + |D_n|^{1-4/d})$ as $n \rightarrow \infty$. 

Therefore, for $r = 8$ and $d \in \{1,2\}$, the second term is $O(|D_n|^{-1} b^{-d})$. Substituting this into the bound for $\text{MSE}(b;\delta)$ yields the desired result.
\end{proof}

\subsection{Proof of Theorem \ref{thm:KSDE-bias-var}} \label{appen:proofKSDE-MSE}
\begin{proof}
We first show the order of the bias. Recall from (\ref{eq:Fnb-bias0}) that the bias term can be decomposed into two components, $A_1$ and $A_2$. To calculate the rate of convergence of $A_1$, we apply a Taylor expansion of $F_h$ to obtain
\begin{equation*}
\begin{aligned}
\bigg| F_h(\vbb) - F_h(\ob) + \sum_{\ell=1}^{d} \frac{\partial F_h(\ob)}{\partial \omega_\ell} (v_\ell - \omega_\ell) \bigg|
&\leq \sup_{\ob \in \R^d, 1 \leq \ell, k \leq d} \left| \frac{\partial^2 F_h(\ob)}{\partial \omega_\ell \partial \omega_k} \right| \|\vbb - \ob\|^2 \
&\leq C \|\vbb - \ob\|^2,
\end{aligned}
\end{equation*}
for any $\vbb = (v_1, \dots, v_d)^\top$ and $\ob = (\omega_1, \dots, \omega_d)^\top$ in $\R^d$.
Here, the second inequality follows from the boundedness of the second-order partial derivatives of $F_h$.
Using this bound, we have
\begin{eqnarray*}
A_1 &=& \left| \int_{\R^d} K_{\bb}(\xb) \left\{ F_h(\ob - \xb) - F_h(\ob) \right\} d\xb \right| \\
&=& \left| \int_{\R^d} K_{\bb}(\xb) \left\{ F_h(\ob - \xb) - F_h(\ob) - \sum_{\ell=1}^{d} \frac{\partial F_h(\ob)}{\partial \omega_\ell} x_\ell \right\} d\xb \right| \\
&\leq& C \int_{\R^d} \|\xb\|^2 K_{\bb}(\xb) d\xb,
\end{eqnarray*}
where the second equality is due to the symmetry of $K_{\bb}$.
Since the support of $K_{\bb}$ is $[-b_1, b_1] \times \cdots \times [-b_d, b_d]$, we have
\begin{equation} \label{eq:A1-order}
A_1 \leq C \int_{\R^d} \|\xb\|^2 K_{\bb}(\xb) d\xb \leq C \|\bb\|^2 \int_{\R^d} K_{\bb}(\xb) d\xb = O(\|\bb\|^2), \quad n \rightarrow \infty.
\end{equation}

Next, we bound $A_2$. To do so, we need a rate of convergence for $|\Ex[I_{h,n}(\ob)] - F_h(\ob)|$. By using Lemma \ref{lemma:DFT-expr}, we have
\begin{equation} \label{eq:Ex-I-1}
\Ex[I_{h,n}^{(i,j)}(\ob)] - F_h^{(i,j)}(\ob) =
(2\pi)^{-d} H_{h,2}^{-1} |D_n|^{-1} \int_{\R^d} e^{-i \ubb^\top \ob} 
\ell_2^{(i,j)}(\ubb)  R_{h\lambda^{(j)}, h\lambda^{(i)}}^{(n)}(\ubb, \textbf{0})  d\ubb.
\end{equation}
Next, Theorem C.1 of YG24 states that there exists $C \in (0,\infty)$ such that
\begin{equation*}
|D_n|^{-1} \big| R_{h\lambda^{(j)}, h\lambda^{(i)}}^{(n)}(\ubb, \textbf{0}) \big| 
\leq C \min\{ \|\ubb \oslash \aB\|, 1\}, \quad \ubb \in \R^d.
\end{equation*}
Therefore, summing (\ref{eq:Ex-I-1}) over $i,j \in \{1, \dots, m\}$, we have
\begin{equation*}
\begin{aligned}
|\Ex[I_{h,n}(\ob)] - F_h(\ob)| 
&\leq C \int_{\R^d} |L_2(\ubb)| \min\{ \|\ubb \oslash \aB\|, 1\} d\ubb \\
&\leq C \int_{\ubb: \|\ubb \oslash \aB\| \geq 1} |L_2(\ubb)| d\ubb 
+ C \int_{\ubb: \|\ubb \oslash \aB\| \leq 1} \|\ubb \oslash \aB\|\, |L_2(\ubb)| d\ubb.
\end{aligned}
\end{equation*}
Now, from the condition (SL) on $\aB = (A_1, \dots, A_d)$, $\|\ubb \oslash \aB\| \geq 1$ implies that there exists $C_1 \in (0,\infty)$ such that $\|\ubb\| \geq  C_1 |D_n|^{1/d}$. Therefore, the first integral term is bounded by
\begin{equation} \label{eq:Ibias-1}
\begin{aligned}
& C \int_{\ubb:  \|\ubb \oslash \aB\| \geq 1} 
\left\{ \frac{\|\ubb\|}{C_1 |D_n|^{1/d}} \right\} |L_2(\ubb)| d\ubb \\
&\leq C |D_n|^{-1/d} \int_{\R^d} \|\ubb\| |L_2(\ubb)| d\ubb 
= O(|D_n|^{-1/d}), \quad n \to \infty.
\end{aligned}
\end{equation}
Here, we use $\|\ubb\| |L_2(\ubb)| \in L_1(\R^d)$, provided that Assumption \ref{assum:E}(ii) holds.

To bound the second term, (SL) also implies that there exists $C_2 \in (0,\infty)$ such that $\|\ubb \oslash \aB\| \leq C_2 |D_n|^{-1/d} \|\ubb\|$. Therefore, the second term is bounded by
\begin{equation} \label{eq:Ibias-2}
C_2 |D_n|^{-1/d} \int_{\ubb: \|\ubb \oslash \aB\| \leq 1} \|\ubb\| |L_2(\ubb)| d\ubb = O(|D_n|^{-1/d}), \quad n \to \infty.
\end{equation}
Combining (\ref{eq:Ibias-1}) and (\ref{eq:Ibias-2}), we get
\begin{equation} \label{eq:Ibias-uniform}
\sup_{\ob \in \R^d} |\Ex[I_{h,n}(\ob)] - F_h(\ob)| = O(|D_n|^{-1/d}), \quad n \to \infty.
\end{equation}
Therefore, $A_2$ is bounded by
\begin{equation} \label{eq:A2-order}
\begin{aligned}
A_2 &= C \int_{\R^d} K_{\bb} (\ob-\xb) |\Ex[I_{h,n}(\xb)] - F_h(\xb)| d\xb \\
&= O(|D_n|^{-1/d}) \int_{\R^d} K_{\bb} (\ob-\xb) d\xb = O(|D_n|^{-1/d}), \quad n \to \infty.
\end{aligned}
\end{equation}
Finally, combining (\ref{eq:A1-order}) and (\ref{eq:A2-order}), we have
\begin{equation*}
\sup_{\ob \in \R^d} |\Ex [F_{n,\bb}(\ob)] - F_h(\ob)| = O(\|\bb\|^2 + |D_n|^{-1/d}), \quad n \to \infty.
\end{equation*}
This proves the bias bound.

\vspace{0.5em}

Now, we prove the order of the variance. By using the cumulant decomposition as in (\ref{eq:cum-Per}), for $i,j \in \{1, \dots, m\}$, we have
\begin{equation*}
\begin{aligned}
\var(F_{n,\bb}^{(i,j)}(\ob)) 
&= \int_{\R^{2d}} K_{\bb}(\ob-\ob_1) K_{\bb}(\ob-\ob_2) \bigg\{ 
\cum( J_{h,n}^{(i)}(\ob_1), J_{h,n}^{(i)}(-\ob_2)) \cum(J_{h,n}^{(j)}(-\ob_1), J_{h,n}^{(j)}(\ob_2)) \\
&\quad + \cum( J_{h,n}^{(i)}(\ob_1), J_{h,n}^{(j)}(\ob_2)) \cum(J_{h,n}^{(j)}(-\ob_1), J_{h,n}^{(i)}(-\ob_2)) + O(|D_n|^{-1}) \bigg\}  d\ob_1 d\ob_2 \\
&= B_1(\ob) + B_2(\ob) + O(|D_n|^{-1}), \quad n \rightarrow \infty.
\end{aligned}
\end{equation*}
We calculate the bound for $B_1(\ob)$, and the bound for $B_2(\ob)$ can be treated similarly. Let 
\begin{equation*}
\widetilde{F}(\ob) =  (\widetilde{f}^{(i,j)}(\ob))_{1\leq i,j \leq m} = \mathcal{F}^{-1}(L_2)(\ob).
\end{equation*}
Then, Assumption \ref{assum:E}(i) implies that  $\widetilde{F} \in L_1^{m \times m}(\R^d)$, and in turn,
\begin{equation*}
L_2(\xb) = \mathcal{F}(\widetilde{F})(\xb) = \int_{\R^d} \widetilde{F}(\ob) e^{i\xb^\top\ob} d\ob.
\end{equation*}
Substituting the above into the cumulant expression in (\ref{eq:DFT-exp00}), we get
\begin{equation*}
\begin{aligned}
(2\pi)^{d}H_{h,2}|D_n| \cum(J_{h,n}^{(i)}(\ob_1) , J_{h,n}^{(j)}(\ob_2)) 
&= \delta_{i,j} H_{h^2\lambda^{(i)}}^{(n)}(\ob_1 + \ob_2) \\
&\quad + \int_{\R^d} \widetilde{f}^{(i,j)}(\bxi) H_{h\lambda^{(i)},1}^{(n)}(\ob_1 - \bxi) H_{h\lambda^{(j)},1}^{(n)}(\ob_2 + \bxi) d\bxi.
\end{aligned}
\end{equation*}
For the notational convenience, we write $H_{h^2\lambda^{(i)},1}^{(n)} = H_{h^2\lambda^{(i)}}$, $H_{h\lambda^{(i)},1}^{(n)} = H_{h \lambda^{(i)}}$, $\cdots$. Then, we have $B_1(\ob) = 
(2\pi)^{-2d} H_{h,2}^{-2} \{
B_{11}(\ob) + B_{12}(\ob) + B_{13}(\ob) + B_{14}(\ob)\}$, where
\begin{equation*}
\begin{aligned}
B_{11}(\ob) &= |D_n|^{-2} \iint_{\R^{2d}} K_{\bb}(\ob - \ob_1) K_{\bb}(\ob - \ob_2) H_{h^2\lambda^{(i)}}(\ob_1 - \ob_2) H_{h^2\lambda^{(j)}}(-\ob_1 + \ob_2) d\ob_1 d\ob_2, \\
B_{12}(\ob) &= |D_n|^{-2} \iint_{\R^{2d}} d\ob_1 d\ob_2 K_{\bb}(\ob - \ob_1) K_{\bb}(\ob - \ob_2) H_{h^2\lambda^{(i)}}(\ob_1 - \ob_2), \\
& \quad \times \int_{\R^d} \widetilde{f}^{(j,j)}(\bxi) H_{h\lambda^{(j)}}(-\ob_1 - \bxi) H_{h\lambda^{(j)}}(\ob_2 + \bxi) d\bxi, \\
B_{13}(\ob) &= |D_n|^{-2} \iint_{\R^{2d}} d\ob_1 d\ob_2 K_{\bb}(\ob - \ob_1) K_{\bb}(\ob - \ob_2) H_{h^2\lambda^{(j)}}(-\ob_1 + \ob_2), \\
& \quad \times \int_{\R^d} \widetilde{f}^{(i,i)}(\bxi) H_{h\lambda^{(i)}}(\ob_1 - \bxi) H_{h\lambda^{(i)}}(-\ob_2 + \bxi) d\bxi, \\
B_{14}(\ob) &= |D_n|^{-2} \iint_{\R^{2d}} K_{\bb}(\ob - \ob_1) K_{\bb}(\ob - \ob_2) \int_{\R^d} \widetilde{f}^{(i,i)}(\bxi_1) H_{h\lambda^{(i)}}(\ob_1 - \bxi_1) H_{h\lambda^{(i)}}(-\ob_2 + \bxi_1) d\bxi_1 \\
& \quad \times \int_{\R^d} \widetilde{f}^{(j,j)}(\bxi_2) H_{h\lambda^{(j)}}(-\ob_1 - \bxi_2) H_{h\lambda^{(j)}}(\ob_2 + \bxi_2) d\bxi_2 d\ob_1 d\ob_2.
\end{aligned}
\end{equation*}
Since $K_{\bb}$ is bounded and has finite support, we can define $\widehat{K}_{\bb}(\ob) = \mathcal{F}^{-1}(K_{\bb})(\ob)$. Then, by substituting the expressions for $H_{h^2\lambda^{(i)}}(\cdot)$ and $H_{h^2\lambda^{(j)}}(\cdot)$ into $B_{11}(\ob)$ and applying Fubini's theorem, we have
\begin{equation} \label{eq:B11}
\begin{aligned}
B_{11}(\ob) 
&= (2\pi)^{2d} |D_n|^{-2} \iint_{D_n^2} |\widehat{K}_{\bb}(\xb - \yb)|^2 \left( h^2\lambda^{(i)}(\xb\oslash\aB) \right) \left( h^2\lambda^{(j)}(\yb\oslash\aB) \right) d\xb d\yb \\
&\leq C |D_n|^{-2} \int_{D_n} \int_{\R^d} |\widehat{K}_{\bb}(\xb - \yb)|^2 d\xb d\yb \\
&\leq C |D_n|^{-1} \int_{\R^d} |\widehat{K}_{\bb}(\xb)|^2 d\xb \\
&= C |D_n|^{-1} \int_{\R^d} |K_{\bb}(\ob)|^2 d\ob = O(|D_n|^{-1} (b_1 \cdots b_d)^{-1}), \quad n \rightarrow \infty.
\end{aligned}
\end{equation}
Here, we use Parseval's theorem in the second-to-last identity.

Next, we bound $B_{12}(\ob)$. By substituting the expression of $H_{h^2\lambda^{(i)}}$, ..., and applying Fubini's theorem, we have
\begin{equation*}
\begin{aligned}
B_{12}(\ob) &= (2\pi)^{2d} |D_n|^{-2} \int_{\R^d} d\bxi \widetilde{f}^{(j,j)}(\bxi) \iiint_{D_n^3} \widehat{K}_{\bb}(\xb-\yb) \widehat{K}_{\bb}(\zb-\xb) (h^2\lambda^{(i)})(\xb\oslash\aB) (h\lambda^{(j)})(\yb\oslash\aB) \\
& \quad \times (h\lambda^{(j)})(\zb\oslash\aB)e^{i(\yb-\zb)^\top\bxi} d\xb d\yb d\zb \\
&= (2\pi)^{2d} |D_n|^{-2} \int_{\R^d} d\bxi \widetilde{f}^{(j,j)}(\bxi) \int_{D_n} d\xb (h^2\lambda^{(i)})(\xb\oslash\aB) \left\{ \int_{D_n} \widehat{K}_{\bb}(\xb-\yb) (h\lambda^{(j)})(\yb\oslash\aB) e^{i\yb^\top\bxi} d\yb \right\} \\
&\times \left\{ \int_{D_n} \widehat{K}_{\bb}(\zb-\xb) (h\lambda^{(j)})(\zb\oslash\aB) e^{-i\zb^\top\bxi} d\zb \right\} d\bxi \\
&= (2\pi)^{2d} |D_n|^{-2} \int_{\R^d} d\bxi \widetilde{f}^{(j,j)}(\bxi) \int_{D_n} (h^2\lambda^{(i)})(\xb\oslash\aB) |g_j(\xb,\bxi)|^2 d\xb,
\end{aligned}
\end{equation*}
where
\begin{equation*}
g_j(\xb,\bxi) := \int_{\R^d} \widehat{K}_{\bb}(\xb-\yb) (h\lambda^{(j)})(\yb\oslash\aB) e^{i\yb^\top\bxi} d\yb.
\end{equation*}
Since $\sup_{\bxi} |f^{(j,j)}(\bxi)| \leq (2\pi)^{-d}\int_{\R^d} |\ell^{(j,j)}_2(\xb)| d\xb < \infty$ and $|h^2 \lambda^{(i)}|$ is bounded, $B_{12}$ is bounded by
\begin{equation} \label{eq:B12}
\begin{aligned}
B_{12}(\ob) &\leq 
O(|D_n|^{-2}) \int_{D_n} d\xb \int_{\R^d} |g_j(\xb,\bxi)|^2 d\bxi \\
&= O(|D_n|^{-2}) \int_{D_n} d\xb \int_{\R^d} \left| \widehat{K}_{\bb}(\xb-\yb) (h\lambda^{(j)}) (\yb\oslash\aB) \right|^2 d\yb \\
&\leq O(|D_n|^{-1}) \int_{\R^d} | \widehat{K}(\ubb)|^2 d\ubb = O(|D_n|^{-1} (b_1 \cdots b_d)^{-1}), \quad n\rightarrow \infty.
\end{aligned}
\end{equation}
Here, the first identity follows from Parseval's theorem for $g_j(\xb,\cdot)$ and the second identity follows from Parseval's theorem for $\widehat{K}_{\bb}$. Using similar techniques, we obtain
\begin{equation} \label{eq:B13}
B_{13}(\ob), B_{14}(\ob) = O(|D_n|^{-1} (b_1 \cdots b_d)^{-1}), \quad n \rightarrow \infty.
\end{equation}
Combining (\ref{eq:B11})--(\ref{eq:B13}), we conclude that $B_{1}(\ob) = O(|D_n|^{-1} (b_1 \cdots b_d)^{-1})$ as $n \rightarrow \infty$. Similarly, $B_{2}(\ob) = O(|D_n|^{-1} (b_1 \cdots b_d)^{-1})$ as $n \rightarrow \infty$. This implies that
\begin{equation*}
\var(F_{n,\bb}^{(i,j)}(\ob)) = B_1(\ob) + B_2(\ob) + O(|D_n|^{-1})
= O(|D_n|^{-1} (b_1 \cdots b_d)^{-1}), \quad n \rightarrow \infty.
\end{equation*}
Summing the above over $i,j \in \{1,\dots,m\}$ shows the variance bound.

Altogether, we obtain the desired result.
\end{proof}

\section{Asymptotic equivalence between the feasible and theoretical quantitites} \label{sec:feasible}

In this section, we show the asymptotic equivalence between the feasible and theoretical quantities. To do so, we first fix terms.

Let $\underline{J}_{h,n}(\ob)$ be the theoretical centered DFT as in (\ref{eq:In-IRS}). To obtain the feasible criterion of $\underline{J}_{h,n}(\ob)$, it is required to estimate the first-order intensity $\underline{\lambda}(\xb) = (\lambda^{(1)}(\xb), \dots, \lambda^{(m)}(\xb))^\top$ in (\ref{eq:J-expectation}). In the framework of Section \ref{sec:feasibleDFT}, the feasible criterion of $\lambda^{(j)}(\xb)$ is $\widehat{\lambda}^{(j)}(\xb) = \lambda^{(j)}(\xb; \widehat{\bbeta}_n)$, thus the corresponding feasible DFT can be defined similarly which we will denote it as $\underline{\widehat{J}}_{h,n}(\ob)$. Since $\widehat{\bbeta}_n$ estimates the pseudo-true value $\bbeta_0$, we also can define the $j$th ``ideal'' first-order intenisty as $\lambda_0^{(j)}(\xb) = \lambda^{(j)}(\xb;\bbeta_0)$ and the corresponding ideal DFT as $\underline{\widetilde{J}}_{h,n}(\ob)$. In case the parametric first-order intensity model is correctly specified, we have $\underline{\widetilde{J}}_{h,n}(\ob) = \underline{J}_{h,n}(\ob)$. However, $\underline{\widetilde{J}}_{h,n}(\ob)$ is, in general, not necessarily equal to the theoretical centered DFT.

Lastly, let $I_{h,n}(\ob) = \underline{J}_{h,n}(\ob) \underline{J}_{h,n}(\ob)^*$, $\widehat{I}_{h,n}(\ob) = \underline{\widehat{J}}_{h,n}(\ob) \underline{\widehat{J}}_{h,n}(\ob)^*$, and $\widetilde{I}_{h,n}(\ob) = \underline{\widetilde{J}}_{h,n}(\ob) \underline{\widetilde{J}}_{h,n}(\ob)^*$ be the corresponding theoretical, feasible, and ideal periodogram, respectively.

\subsection{Bounds for the DFTs and periodograms}

In this section, we show that $|\underline{\widehat{J}}_{h,n}(\ob) - \underline{\widetilde{J}}_{h,n}(\ob)|$ and $|\widehat{I}_{h,n}(\ob) - \widetilde{I}_{h,n}(\ob)|$ converges to zero in two different modes of convergence.

Recall the parametric first-order intensity model $\underline{\lambda}(\cdot;\bbeta) = (\lambda^{(1)}(\cdot;\bbeta), \dots, \lambda^{(m)}(\cdot;\bbeta))^\top$ for $\bbeta \in \Theta \subset \R^{p}$. For $\balpha = (\alpha_1, \dots, \alpha_p) \in \{0,1,\dots\}^p$, let $\partial^{\balpha} \lambda^{(j)}$ denote the $\balpha$th-order partial derivative of the $j$th component of $\underline{\lambda}(\cdot;\bbeta)$ with respect to $\bbeta$. For $t \in \N$, define
\begin{equation*}
\big| H_{h \partial^{t} \underline{\lambda}}^{(n)} (\ob) \big| = \sum_{j=1}^{m} \sum_{|\balpha|=t} \big| H_{h \partial^{\balpha} \lambda^{(j)},1}^{(n)} (\ob) \big|.
\end{equation*}

The following lemma provides an upper bound for $|\underline{\widehat{J}}_{h,n}(\ob) - \underline{\widetilde{J}}_{h,n}(\ob)|$.

\begin{lemma} \label{lemma:Jbound}
Suppose that Assumptions \ref{assum:A} and \ref{assum:beta} hold. Suppose further $\lambda^{(j)}(\xb;\bbeta)$ has a bounded $k$th-order partial deriviatives with respect to $\bbeta \in \Theta$. Then, there exists a constant $C \in (0,\infty)$ such that for any $s \in [1,\infty)$,
\begin{equation}
\begin{aligned}
& | \underline{\widehat{J}}_{h,n}(\ob)- \underline{\widetilde{J}}_{h,n}(\ob)|^s \\
& \leq C
\left\{ |D_n|^{1/2}|\widehat{\bbeta}_n - \bbeta_0|\right\}^s \sum_{t=1}^{k-1} \left\{ |D_n|^{-1} \big|H_{h\partial^t \underline{\lambda}}^{(n)}(\ob)\big| \right\}^s + C \left\{ |D_n|^{1/2} |\widehat{\bbeta}_n - \bbeta_0|^{k}\right\}^s.
\end{aligned}
  \label{eq:H-diff1}
\end{equation}
\end{lemma}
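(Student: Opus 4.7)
The plan is a direct $k$th-order Taylor expansion of the parametric intensity. Starting from the definitions in Section \ref{sec:feasibleDFT}, the common $\underline{\mathcal{J}}_{h,n}(\ob)$ terms cancel, leaving
\begin{equation*}
\underline{\widehat{J}}_{h,n}(\ob) - \underline{\widetilde{J}}_{h,n}(\ob) = -(2\pi)^{-d/2} H_{h,2}^{-1/2} |D_n|^{-1/2} \bigl( \widehat{H}_{h\underline{\lambda}}^{(n)}(\ob) - H_{h\underline{\lambda}_0}^{(n)}(\ob) \bigr),
\end{equation*}
whose $j$th component equals $-(2\pi)^{-d/2} H_{h,2}^{-1/2} |D_n|^{-1/2} \int_{D_n} h(\xb\oslash\aB) [\lambda^{(j)}(\xb;\widehat{\bbeta}_n) - \lambda^{(j)}(\xb;\bbeta_0)] e^{-i\xb^\top\ob}\, d\xb$. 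The problem therefore reduces to Taylor expanding $\lambda^{(j)}(\xb;\cdot)$ at $\bbeta_0$ and integrating against $h(\xb\oslash\aB) e^{-i\xb^\top\ob}$ over $D_n$.

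Next, by Assumption \ref{assum:beta2} the segment $[\bbeta_0, \widehat{\bbeta}_n]$ lies in $\Theta$, so the multivariate Taylor theorem with Lagrange remainder gives $\lambda^{(j)}(\xb;\widehat{\bbeta}_n) - \lambda^{(j)}(\xb;\bbeta_0) = \sum_{t=1}^{k-1}\frac{1}{t!}\sum_{|\balpha|=t} \binom{t}{\balpha} \partial^{\balpha}\lambda^{(j)}(\xb;\bbeta_0)(\widehat{\bbeta}_n - \bbeta_0)^{\balpha} + R_k^{(j)}(\xb)$, and the hypothesis that the $k$th partial derivatives are uniformly bounded yields $\sup_{\xb \in D_n}|R_k^{(j)}(\xb)| \leq C|\widehat{\bbeta}_n - \bbeta_0|^k$. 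Integrating term by term, the order-$t$ polynomial part generates $H_{h\partial^{\balpha}\lambda^{(j)},1}^{(n)}(\ob)(\widehat{\bbeta}_n - \bbeta_0)^{\balpha}$, while the remainder contributes at most $C|\widehat{\bbeta}_n - \bbeta_0|^k |D_n|$. Applying $|(\widehat{\bbeta}_n - \bbeta_0)^{\balpha}| \leq |\widehat{\bbeta}_n - \bbeta_0|^{|\balpha|}$ (with the $\ell^1$-norm fixed in Section \ref{sec:intensity12}), summing over $j$, and regrouping by order $t$, one obtains
\begin{equation*}
\bigl| \underline{\widehat{J}}_{h,n}(\ob) - \underline{\widetilde{J}}_{h,n}(\ob) \bigr| \leq C \sum_{t=1}^{k-1} |\widehat{\bbeta}_n - \bbeta_0|^{t} |D_n|^{-1/2} \bigl|H_{h\partial^t\underline{\lambda}}^{(n)}(\ob)\bigr| + C |D_n|^{1/2} |\widehat{\bbeta}_n - \bbeta_0|^k.
\end{equation*}

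Finally, raising both sides to the $s$th power via the elementary inequality $(a_1+\cdots+a_N)^s \leq N^{s-1}\sum_i a_i^s$ produces factors of $|\widehat{\bbeta}_n - \bbeta_0|^{ts}$ in each summand. Since the target bound \eqref{eq:H-diff1} carries only $|\widehat{\bbeta}_n-\bbeta_0|^s$ inside the sum, the remaining step is to convert $|\widehat{\bbeta}_n - \bbeta_0|^{ts}$ into $|\widehat{\bbeta}_n - \bbeta_0|^{s}$; this is the only subtle point, but it follows immediately from the compactness of $\Theta$ (Assumption \ref{assum:beta}), which keeps $|\widehat{\bbeta}_n - \bbeta_0|$ uniformly bounded and lets the surplus factors $|\widehat{\bbeta}_n - \bbeta_0|^{(t-1)s}$ with $t \geq 1$ be absorbed into the constant. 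Rearranging the resulting expression so as to display the common factor $\{|D_n|^{1/2}|\widehat{\bbeta}_n-\bbeta_0|\}^s$ gives exactly \eqref{eq:H-diff1}. There is no real obstacle beyond careful multi-index bookkeeping; the compactness-driven absorption in the last step is the only place where the stated bound is strictly looser than what the Taylor expansion actually delivers.
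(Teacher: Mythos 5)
Your proposal is correct and follows essentially the same route as the paper's proof: cancel the common uncentered DFT, Taylor-expand $\lambda^{(j)}(\xb;\cdot)$ about $\bbeta_0$ with Lagrange remainder, bound the order-$t$ terms by $|\widehat{\bbeta}_n-\bbeta_0|^t\,|H^{(n)}_{h\partial^t\underline{\lambda}}(\ob)|$ and the remainder by $C|D_n|\,|\widehat{\bbeta}_n-\bbeta_0|^k$, absorb the surplus powers $|\widehat{\bbeta}_n-\bbeta_0|^{t-1}$ into the constant via compactness of $\Theta$, and finish with $(a_1+\cdots+a_N)^s\le N^{s-1}\sum_i a_i^s$. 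The only difference is presentational: the paper writes out the case $m=p=1$, $k=3$ and asserts the general case is similar, whereas you carry out the full multi-index bookkeeping (and, slightly more carefully than the lemma's stated hypotheses, note that Assumption \ref{assum:beta2} is what places the Lagrange intermediate point in $\Theta$).
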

\begin{proof}
For simplicity, we will only show the lemma for $m = p = 1$ and $k = 3$. The general case can be treated similarly. Here, $C \in (0,\infty)$ denotes a generic constant that may vary from line to line. The difference between the feasible and ideal DFT is
\begin{equation} \label{eq:Jdiff00}
|\underline{\widehat{J}}_{h,n}(\ob) - \underline{\widetilde{J}}_{h,n}(\ob)| 
= (2\pi)^{-d/2} H_{h,2}^{-1/2} |D_n|^{-1/2} |\widehat{H}_{h\lambda,1}^{(n)}(\ob) - H_{h\lambda_0,1}^{(n)}(\ob)|.
\end{equation}
By using the Taylor expansion of $\lambda(\xb;\bbeta)$ with respect to $\bbeta$, for each $\xb \in [-1/2,1/2]^d$, there exists $\widetilde{\bbeta}_n = \widetilde{\bbeta}_n(\xb)$, a convex combination of $\widehat{\bbeta}_n$ and $\bbeta_0$, such that 
\begin{equation*}
\widehat{\lambda}(\xb) - \lambda_0(\xb) 
= (\widehat{\bbeta}_n - \bbeta_0) \partial \lambda(\xb;\bbeta_0)
+ \frac{1}{2} (\widehat{\bbeta}_n - \bbeta_0)^2 \partial^{2} \lambda(\xb;\bbeta_0) 
+ \frac{1}{6} (\widehat{\bbeta}_n - \bbeta_0)^3 \partial^{3} \lambda(\xb;\widetilde{\bbeta}_n).
\end{equation*}
Substituting the above into $\widehat{H}_{h\lambda,1}^{(n)}(\ob) - H_{h\lambda_0,1}^{(n)}(\ob)$ and applying the triangle inequality, we have
\begin{equation*}
\begin{aligned}
| \widehat{H}_{h\lambda,1}^{(n)}(\ob) - H_{h\lambda_0,1}^{(n)}(\ob) | 
& \leq |\widehat{\bbeta}_n - \bbeta_0| \, \big| H_{h\partial\lambda,1}^{(n)}(\ob)\big| + \frac{1}{2}|\widehat{\bbeta}_n - \bbeta_0|^2 \, \big| H_{h\partial^{2}\lambda,1}^{(n)}(\ob)\big| \\
& \quad + \frac{1}{6}|\widehat{\bbeta}_n - \bbeta_0|^3 \int_{D_n} h(\xb\oslash\aB) \, \big|\partial^{3} \lambda(\xb\oslash\aB;\widetilde{\bbeta}_n)\big| d\xb.
\end{aligned}
\end{equation*}
Since $\Theta$ is compact, and hence bounded, we have $|\widehat{\bbeta}_n - \bbeta_0| \leq \operatorname{diam}(\Theta) < \infty$. 
Moreover, $|\partial^{3} \lambda(\cdot;\bbeta)|$ is bounded. Therefore, $| \widehat{H}_{h\lambda,1}^{(n)}(\ob) - H_{h\lambda_0,1}^{(n)}(\ob) |$ is bounded by
\begin{equation*}
C |\widehat{\bbeta}_n - \bbeta_0| \Big\{ \big| H_{h\partial\lambda,1}^{(n)}(\ob)\big| + \big| H_{h\partial^{2}\lambda,1}^{(n)}(\ob)\big| \Big\}
+ C |D_n| \, |\widehat{\bbeta}_n - \bbeta_0|^3.
\end{equation*}

Substituting the above into (\ref{eq:Jdiff00}) and using the inequality $(a+b+c)^s \leq 3^{s-1} (a^s + b^s + c^s)$ for all $a,b,c>0$, we obtain the desired result for $m = p = 1$ and $k = 3$.
\end{proof}

Using the above lemma, we show the convergence of the $s$th-moment of $|\underline{\widehat{J}}_{h,n}(\ob) - \underline{\widetilde{J}}_{h,n}(\ob)|$.

\begin{corollary} \label{coro:Jbound}
Suppose that Assumptions \ref{assum:A}, \ref{assum:C}, \ref{assum:beta}, and \ref{assum:beta2}(ii) (for some $r \in (1, \infty)$) hold. 
Then, for a sequence $\{\ob_{n}\}$ in $\R^d$ that is asymptotically distant from $\{\textbf{0}\}$, and for any $s \in (r/2, r)$, we have
\begin{equation*}
\Ex \big| \underline{\widehat{J}}_{h,n}(\ob_{n}) - \underline{\widetilde{J}}_{h,n}(\ob_{n}) \big|^s = o(1) \quad \text{as } n \to \infty.
\end{equation*}
\end{corollary}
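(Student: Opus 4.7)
The plan is to apply Lemma~\ref{lemma:Jbound} with $k=2$, which is the maximal derivative order guaranteed by Assumption~\ref{assum:beta}. The resulting bound decomposes into exactly two pieces: a ``first-order Taylor'' piece $\{|D_n|^{1/2}|\widehat{\bbeta}_n-\bbeta_0|\}^s \{|D_n|^{-1}|H^{(n)}_{h\partial\underline{\lambda}}(\ob_n)|\}^s$ and a ``second-order Taylor remainder'' piece $\{|D_n|^{1/2}|\widehat{\bbeta}_n-\bbeta_0|^2\}^s$. I would take expectations on both sides and bound the two pieces separately, pulling deterministic factors out of the expectation.

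For the first piece, the deterministic factor $\{|D_n|^{-1}|H^{(n)}_{h\partial\underline{\lambda}}(\ob_n)|\}^s$ tends to zero by exactly the same Riemann--Lebesgue-type argument used to establish Equation~(\ref{eq:DFTlim1}) in the proof of Theorem~\ref{thm:asymDFT-IRS} (namely YG24, Lemma~C.2): each function $h\,\partial^{\balpha}\lambda^{(j)}$ is bounded and compactly supported on $[-1/2,1/2]^d$ by Assumptions~\ref{assum:C} and~\ref{assum:beta}, and $\{\ob_n\}$ is asymptotically distant from $\{\mathbf{0}\}$. For the random factor, Lyapunov's inequality gives $\Ex\{|D_n|^{1/2}|\widehat{\bbeta}_n-\bbeta_0|\}^s \leq (\Ex\{|D_n|^{1/2}|\widehat{\bbeta}_n-\bbeta_0|\}^r)^{s/r} = O(1)$ whenever $s\leq r$, which holds since $s<r$. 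Thus the expectation of the first piece is $o(1)$.

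For the second piece, I would exploit the compactness of $\Theta$ in Assumption~\ref{assum:beta}, which implies $|\widehat{\bbeta}_n-\bbeta_0| \leq \mathrm{diam}(\Theta) < \infty$ almost surely. Since $2s>r$ (thanks to $s>r/2$), we have $|\widehat{\bbeta}_n-\bbeta_0|^{2s} \leq C|\widehat{\bbeta}_n-\bbeta_0|^r$, and therefore
\begin{equation*}
|D_n|^{s/2}\,\Ex|\widehat{\bbeta}_n-\bbeta_0|^{2s} \;\leq\; C|D_n|^{(s-r)/2}\,\Ex\{|D_n|^{1/2}|\widehat{\bbeta}_n-\bbeta_0|\}^{r} \;=\; O\bigl(|D_n|^{(s-r)/2}\bigr),
\end{equation*}
which vanishes as $n\to\infty$ because $s<r$. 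Summing the two bounds yields the claimed $o(1)$ convergence.

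The main (and only) subtlety is the two-sided constraint $s\in(r/2,r)$, which is dictated by moment balancing: the upper bound $s<r$ is needed both for Lyapunov's inequality to control the first piece and for the rate $|D_n|^{(s-r)/2}\to 0$ in the remainder, while the lower bound $s>r/2$ is precisely what allows the compactness-based bound $|\widehat{\bbeta}_n-\bbeta_0|^{2s}\leq C|\widehat{\bbeta}_n-\bbeta_0|^r$ to convert the otherwise uncontrolled $2s$-moment into the available $r$-moment supplied by Assumption~\ref{assum:beta2}(ii). Beyond this bookkeeping, no substantive analytic obstacle arises; the proof is essentially routine once Lemma~\ref{lemma:Jbound} and the Riemann--Lebesgue decay from YG24 are in hand.
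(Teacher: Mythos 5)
Your proposal is correct and follows essentially the same route as the paper's proof: apply Lemma \ref{lemma:Jbound} with $k=2$, kill the first term via the Riemann--Lebesgue decay of $|D_n|^{-1}|H^{(n)}_{h\partial\underline{\lambda}}(\ob_n)|$ (YG24, Lemma C.2) together with the $O(1)$ bound on the $s$th moment implied by Assumption \ref{assum:beta2}(ii) with $s<r$, and control the remainder by trading the $2s$-moment for the available $r$-moment using $2s>r$ and boundedness of $\Theta$, yielding the rate $|D_n|^{(s-r)/2}\to 0$. Your explicit invocations of Lyapunov's inequality and of $\mathrm{diam}(\Theta)<\infty$ simply spell out steps the paper leaves implicit.
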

\begin{proof}
By applying Lemma \ref{lemma:Jbound} with $k=2$, we have
\begin{equation} \label{eq:EJ-bound}
\begin{aligned}
\Ex \big| \underline{\widehat{J}}_{h,n}(\ob_{n}) - \underline{\widetilde{J}}_{h,n}(\ob_{n}) \big|^s 
&\leq C \, \Ex \Big\{ |D_n|^{1/2} |\widehat{\bbeta}_n - \bbeta_0| \Big\}^s 
\Big\{ |D_n|^{-1} \big| H_{h \partial \underline{\lambda}}^{(n)}(\ob_{n}) \big| \Big\}^s \\
&\quad + C \, |D_n|^{s/2} \, \Ex |\widehat{\bbeta}_n - \bbeta_0|^{2s}.
\end{aligned}
\end{equation}
Since $|D_n|^{-1} \big| H_{h \partial \underline{\lambda}}^{(n)}(\ob_{n}) \big| = o(1)$ by YG24, Lemma C.2, and $\Ex \big\{ |D_n|^{1/2} |\widehat{\bbeta}_n - \bbeta_0| \big\}^s = O(1)$ by Assumption \ref{assum:beta2}(ii) (for $r > s$), the first term in \eqref{eq:EJ-bound} is $o(1)$. Moreover, since $2s > r$, the second term is bounded by
$C |D_n|^{s/2} \, \Ex |\widehat{\bbeta}_n - \bbeta_0|^r \leq O(1) \, |D_n|^{s/2 - r/2} = o(1)$.

Altogether, we obtain the desired result.
\end{proof}

Using the above results, in next two theoresm, we show that $|\underline{\widehat{J}}_{h,n}(\ob) - \underline{\widetilde{J}}_{h,n}(\ob)|$ and $|\widehat{I}_{h,n}(\ob) - \widetilde{I}_{h,n}(\ob)|$ converges to zero in probability and in $L_2$.

\begin{theorem} \label{thm:asymp-IRS-fea1}
Suppose that Assumptions \ref{assum:A}, \ref{assum:C}, \ref{assum:beta}, and \ref{assum:beta2}(i) hold. Then, for a sequence $\{\ob_{n}\}$ on $\R^d$ that is asymptotically distant from $\mathbf{0}$, we have
\begin{equation*}
\big| \underline{\widehat{J}}_{h,n}(\ob_n) - \underline{\widetilde{J}}_{h,n}(\ob_n) \big| \Pcon 0 \quad \text{and} \quad
\big| \widehat{I}_{h,n}(\ob_n) - \widetilde{I}_{h,n}(\ob_n) \big| \Pcon 0.
\end{equation*}
\end{theorem}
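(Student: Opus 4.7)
The plan is to apply Lemma \ref{lemma:Jbound} with $k=2$ and $s=1$ to drive the DFT convergence, and then derive the periodogram convergence via a standard factorization identity coupled with tightness of $\underline{\widetilde{J}}_{h,n}(\ob_n)$. For the DFT part, Lemma \ref{lemma:Jbound} bounds $|\underline{\widehat{J}}_{h,n}(\ob) - \underline{\widetilde{J}}_{h,n}(\ob)|$ by
\[
C|D_n|^{1/2}|\widehat{\bbeta}_n - \bbeta_0| \cdot |D_n|^{-1}\bigl|H_{h\partial\underline{\lambda}}^{(n)}(\ob)\bigr| + C|D_n|^{1/2}|\widehat{\bbeta}_n - \bbeta_0|^2.
\]
Assumption \ref{assum:beta2}(i) provides $|D_n|^{1/2}|\widehat{\bbeta}_n - \bbeta_0| = O_p(1)$, while YG24 Lemma C.2 (invoked in the same manner as in the proof of Theorem \ref{thm:asymDFT-IRS}) yields $|D_n|^{-1}\bigl|H_{h\partial\underline{\lambda}}^{(n)}(\ob_n)\bigr| = o(1)$ along sequences asymptotically distant from the origin, since $h\partial\underline{\lambda}$ is continuous with support in $[-1/2,1/2]^d$ by Assumptions \ref{assum:C} and \ref{assum:beta}. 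The first summand is therefore $o_p(1)$, and the second factors as $|D_n|^{-1/2}\bigl(|D_n|^{1/2}|\widehat{\bbeta}_n - \bbeta_0|\bigr)^2 = o(1)\cdot O_p(1) = o_p(1)$. This establishes the first claim.

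For the periodogram part, applying the algebraic identity $AA^* - BB^* = (A-B)A^* + B(A-B)^*$ with $A = \underline{\widehat{J}}_{h,n}(\ob_n)$ and $B = \underline{\widetilde{J}}_{h,n}(\ob_n)$, then taking entrywise magnitudes, gives
\[
|\widehat{I}_{h,n}(\ob_n) - \widetilde{I}_{h,n}(\ob_n)| \leq 2|\Delta_n|\cdot |\underline{\widetilde{J}}_{h,n}(\ob_n)| + |\Delta_n|^2,
\]
where $\Delta_n = \underline{\widehat{J}}_{h,n}(\ob_n) - \underline{\widetilde{J}}_{h,n}(\ob_n)$. The first claim immediately yields $|\Delta_n|^2 = o_p(1)$, and the cross term vanishes in probability once we establish $|\underline{\widetilde{J}}_{h,n}(\ob_n)| = O_p(1)$.

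The main technical obstacle is this last tightness claim. I would decompose $\underline{\widetilde{J}}_{h,n}(\ob_n) = \underline{J}_{h,n}(\ob_n) + \bxi_n$, where $\underline{J}_{h,n}$ is centered at the true expectation and $\bxi_n = (2\pi)^{-d/2}H_{h,2}^{-1/2}|D_n|^{-1/2}\{H_{h\underline{\lambda}}^{(n)}(\ob_n) - H_{h\underline{\lambda}_0}^{(n)}(\ob_n)\}$ is the deterministic offset arising from the mismatch between the true intensity and its pseudo-true parametric counterpart. The variance of $\underline{J}_{h,n}(\ob_n)$ converges to $F_h(\ob) = O(1)$ by \eqref{eq:Jconv-to-F}, so Markov's inequality handles the random piece. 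In the correctly specified regime, which is precisely the setting in which Theorem \ref{thm:asymp-IRS-fea1} is later invoked (e.g., in the proof of Theorem \ref{thm:asym-Normal}), $\bxi_n = \mathbf{0}$ and tightness is immediate; the general case requires a refined application of YG24 Lemma C.2 to bound $|\bxi_n|$ using the continuity of $h(\underline{\lambda} - \underline{\lambda}_0)$. Combining these ingredients yields $|\widehat{I}_{h,n}(\ob_n) - \widetilde{I}_{h,n}(\ob_n)| \Pcon 0$.
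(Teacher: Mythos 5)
Your DFT step coincides with the paper's proof: both apply Lemma \ref{lemma:Jbound} with $(s,k)=(1,2)$, use Assumption \ref{assum:beta2}(i) to make the factor $|D_n|^{1/2}|\widehat{\bbeta}_n-\bbeta_0|$ an $O_p(1)$ term, and invoke YG24 Lemma C.2 to get $|D_n|^{-1}\big|H^{(n)}_{h\partial\underline{\lambda}}(\ob_n)\big|=o(1)$ along sequences asymptotically distant from the origin, with the remainder $|D_n|^{1/2}|\widehat{\bbeta}_n-\bbeta_0|^2=|D_n|^{-1/2}O_p(1)=o_p(1)$. Where you diverge is the periodogram step: the paper dispatches it in one line via ``the continuous mapping theorem,'' whereas you write out $AA^*-BB^*=(A-B)A^*+B(A-B)^*$ and correctly observe that the cross term requires $\big|\underline{\widetilde{J}}_{h,n}(\ob_n)\big|=O_p(1)$, which is not a consequence of $\Delta_n\Pcon 0$ alone. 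This is a legitimate gap in the paper's one-liner, and your decomposition $\underline{\widetilde{J}}_{h,n}=\underline{J}_{h,n}+\bxi_n$ with a Chebyshev bound on the mean-zero piece is the right way to close it --- though note that the variance bound $\var(\underline{J}_{h,n}(\ob_n))=O(1)$ from (\ref{eq:Jconv-to-F}) uses Assumption \ref{assum:B} (for $k=2$), which is not among the theorem's stated hypotheses (it is present wherever the theorem is later applied).

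One sub-claim in your tightness argument is not correct as stated: in the misspecified case you propose to bound $\bxi_n$ by ``a refined application of YG24 Lemma C.2'' using continuity of $h(\underline{\lambda}-\underline{\lambda}_0)$. That lemma only yields $|D_n|^{-1}\big|H^{(n)}_{h(\underline{\lambda}-\underline{\lambda}_0)}(\ob_n)\big|=o(1)$, while $\bxi_n$ carries the prefactor $|D_n|^{-1/2}$, so $|\bxi_n|=|D_n|^{1/2}\cdot o(1)$, which need not be bounded; controlling it requires the polynomial Fourier decay of Lemma \ref{lemma:Cn}, hence the smoothness of Assumption \ref{assum:smooth} and a condition such as $|\ob|_0>d/4$ --- precisely why the paper defers the misspecified case to Theorems \ref{theorem:Fij-L2} and \ref{thm:KSDE-bound-c}. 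Since you also note that $\bxi_n=\mathbf{0}$ under correct specification, which is the only regime in which this theorem is subsequently invoked, your argument is sound there; just drop or substantially strengthen the claim about the general case.
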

\begin{proof}
By applying Lemma \ref{lemma:Jbound} with $s=1$ and $k=2$, together with Assumption \ref{assum:beta2}(i), we have
\begin{equation*}
\big| \underline{\widehat{J}}_{h,n}(\ob_n) - \underline{\widetilde{J}}_{h,n}(\ob_n) \big| 
\leq O_p\Big( |D_n|^{-1} \big| H_{h \partial \underline{\lambda}}^{(n)}(\ob_n) \big| \Big) 
+ O_p(|D_n|^{-1/2}).
\end{equation*}
To bound the first term, using YG24, Lemma C.2, we have $|D_n|^{-1} \big| H_{h \partial \underline{\lambda}}^{(n)}(\ob_n) \big| = o(1)$ as $n \rightarrow \infty$. Thus, the first term is $o_p(1)$. Combining this result, we obtain
$\big| \underline{\widehat{J}}_{h,n}(\ob_n) - \underline{\widetilde{J}}_{h,n}(\ob_n) \big| \Pcon 0$.
In turn, $\big| \widehat{I}_{h,n}(\ob_n) - \widetilde{I}_{h,n}(\ob_n) \big| \Pcon 0$ due to the continuous mapping theorem. Thus, we obtain the desired results.
\end{proof}

\begin{theorem} \label{thm:asymp-IRS-fea2}
Suppose that Assumptions \ref{assum:A}, \ref{assum:C}, \ref{assum:beta}, and \ref{assum:beta2}(ii) (for $r > 2$) hold. Then, for a sequence $\{\ob_{n}\}$ on $\R^d$ that is asymptotically distant from $\{\textbf{0}\}$, we have
\begin{equation} \label{eq:Jdiff-MSE}
\lim_{n \to \infty} \Ex| \underline{\widehat{J}}_{h,n}(\ob_n) - \underline{\widetilde{J}}_{h,n}(\ob_n) \big|^2 = 0.
\end{equation}
If we further assume Assumptions \ref{assum:B} (for $k = 4$) and \ref{assum:beta2}(ii) (for $r > 4$) hold and assume the true first-order intensity is $\underline{\lambda}(\xb;\bbeta_0$), then
\begin{equation} \label{eq:Perdiff-MSE}
\lim_{n \to \infty} \Ex \big| \widehat{I}_{h,n}(\ob_n) - \widetilde{I}_{h,n}(\ob_n) \big|^2 = 0.
\end{equation} 
\end{theorem}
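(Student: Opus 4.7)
The plan is to treat \eqref{eq:Jdiff-MSE} and \eqref{eq:Perdiff-MSE} separately. The first assertion is essentially a direct application of Lemma \ref{lemma:Jbound} with $s=k=2$, combined with moment controls on $\widehat{\bbeta}_n-\bbeta_0$ from Assumption \ref{assum:beta2}(ii). The second assertion is reduced to a fourth-moment version of the first by writing $\widehat{I}_{h,n}-\widetilde{I}_{h,n}$ as a bilinear combination of $\underline{\widehat{J}}_{h,n}-\underline{\widetilde{J}}_{h,n}$ and $\underline{\widehat{J}}_{h,n}$, applying Cauchy--Schwarz, and using a uniform $L_4$-bound on the theoretical DFT obtained from Theorem \ref{thm:asymDFT-IRS}.

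For \eqref{eq:Jdiff-MSE}, Lemma \ref{lemma:Jbound} with $s=k=2$ yields, after taking expectations,
\begin{equation*}
\Ex \big|\underline{\widehat{J}}_{h,n}(\ob_n) - \underline{\widetilde{J}}_{h,n}(\ob_n)\big|^2 \leq C\,\Ex\big\{|D_n|^{1/2}|\widehat{\bbeta}_n - \bbeta_0|\big\}^2 \cdot \big\{|D_n|^{-1}\big|H_{h\partial\underline{\lambda}}^{(n)}(\ob_n)\big|\big\}^2 + C\,|D_n|\,\Ex|\widehat{\bbeta}_n - \bbeta_0|^4.
\end{equation*}
Since $r>2$, Jensen combined with Assumption \ref{assum:beta2}(ii) gives $\Ex\{|D_n|^{1/2}|\widehat{\bbeta}_n - \bbeta_0|\}^2=O(1)$, while YG24, Lemma C.2, together with the asymptotic distance of $\{\ob_n\}$ from $\mathbf{0}$, gives $|D_n|^{-1}|H_{h\partial\underline{\lambda}}^{(n)}(\ob_n)|=o(1)$, so the first term vanishes. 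For the second term I split on $r$: if $r\in(2,4]$, compactness of $\Theta$ gives $|\widehat{\bbeta}_n-\bbeta_0|^4\leq\operatorname{diam}(\Theta)^{4-r}|\widehat{\bbeta}_n-\bbeta_0|^r$, so the term is $O(|D_n|^{1-r/2})=o(1)$; if $r>4$, Jensen gives $\Ex|\widehat{\bbeta}_n-\bbeta_0|^4=O(|D_n|^{-2})$, making the term $O(|D_n|^{-1})$.

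For \eqref{eq:Perdiff-MSE}, I start from the identity
\begin{equation*}
\widehat{I}_{h,n} - \widetilde{I}_{h,n} = (\underline{\widehat{J}}_{h,n}-\underline{\widetilde{J}}_{h,n})\,\underline{\widehat{J}}_{h,n}^* + \underline{\widetilde{J}}_{h,n}\,(\underline{\widehat{J}}_{h,n}-\underline{\widetilde{J}}_{h,n})^*.
\end{equation*}
The equality $|\ubb\vbb^*|=|\ubb|\,|\vbb|$ for the entrywise $\ell_1$-norm of rank-one matrices, together with $|\underline{\widehat{J}}_{h,n}|\leq|\underline{\widetilde{J}}_{h,n}|+|\underline{\widehat{J}}_{h,n}-\underline{\widetilde{J}}_{h,n}|$ and Cauchy--Schwarz, yields
\begin{equation*}
\Ex\big|\widehat{I}_{h,n}-\widetilde{I}_{h,n}\big|^2 \leq C\,\Ex\big|\underline{\widehat{J}}_{h,n}-\underline{\widetilde{J}}_{h,n}\big|^4 + C\,\big(\Ex|\underline{\widehat{J}}_{h,n}-\underline{\widetilde{J}}_{h,n}|^4\big)^{1/2}\big(\Ex|\underline{\widetilde{J}}_{h,n}|^4\big)^{1/2}.
\end{equation*}
The bound $\Ex|\underline{\widehat{J}}_{h,n}(\ob_n)-\underline{\widetilde{J}}_{h,n}(\ob_n)|^4=o(1)$ follows by rerunning the argument for \eqref{eq:Jdiff-MSE} with $s=4$; the remainder in Lemma \ref{lemma:Jbound} now carries $|D_n|^2\,\Ex|\widehat{\bbeta}_n-\bbeta_0|^8$, which is $o(1)$ as soon as $r>4$ by the analogous case split (compactness for $r\in(4,8]$, Jensen for $r>8$). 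Under the correctly specified model $\underline{\widetilde{J}}_{h,n}=\underline{J}_{h,n}$; expanding $|\underline{J}_{h,n}|^4=\sum_{i,j}I_{h,n}^{(i,i)}(\ob_n)\,I_{h,n}^{(j,j)}(\ob_n)$ and decomposing each summand as $\cov(I_{h,n}^{(i,i)},I_{h,n}^{(j,j)})+\Ex I_{h,n}^{(i,i)}\cdot\Ex I_{h,n}^{(j,j)}$, Theorem \ref{thm:asymDFT-IRS} (which is exactly why Assumption \ref{assum:B} is imposed at order $k=4$) delivers $\Ex|\underline{\widetilde{J}}_{h,n}(\ob_n)|^4=O(1)$.

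The main obstacle is the moment bookkeeping rather than any new analytic input: squaring or quadrupling the decomposition in Lemma \ref{lemma:Jbound} produces remainder terms of the form $|D_n|^p\,\Ex|\widehat{\bbeta}_n-\bbeta_0|^{2pk}$ in which the exponent $2pk$ may exceed the available moment order $r$, so Jensen alone is insufficient and the compactness of $\Theta$ must be leveraged to trade excess powers of $|\widehat{\bbeta}_n-\bbeta_0|$ for constants. This is what pins down the sharp moment conditions ($r>2$ for the DFT and $r>4$ for the periodogram) and forces the use of Assumption \ref{assum:B} at order $k=4$ when bounding $\Ex|\underline{\widetilde{J}}_{h,n}|^4$.
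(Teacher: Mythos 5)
Your proposal is correct and follows essentially the same route as the paper: the DFT bound is Lemma \ref{lemma:Jbound} with $(s,k)=(2,2)$ plus the moment bookkeeping that the paper packages as Corollary \ref{coro:Jbound} (your explicit case split on $r$, trading excess powers of $|\widehat{\bbeta}_n-\bbeta_0|$ for constants via compactness of $\Theta$, is exactly the device used there), and the periodogram bound uses the same bilinear decomposition, the same Cauchy--Schwarz step, and the same reduction of $\Ex|\underline{\widetilde{J}}_{h,n}(\ob_n)|^4$ to $O(1)$ under correct specification. One caveat: to avoid circularity you should attribute the bound $\Ex|\underline{\widetilde{J}}_{h,n}(\ob_n)|^4=O(1)$ to the moment computations for the \emph{theoretical} periodogram (i.e., $\Ex|I_{h,n}^{(j,j)}(\ob_n)|^2=O(1)$ via the cumulant decomposition under Assumption \ref{assum:B} with $k=4$), not to the statement of Theorem \ref{thm:asymDFT-IRS} for the feasible quantities, since that statement is itself deduced from the present theorem; also note that $|\underline{J}_{h,n}|^4$ equals $\sum_{i,j}I_{h,n}^{(i,i)}I_{h,n}^{(j,j)}$ only up to a multiplicative constant because $|\cdot|$ is the entrywise $\ell_1$ norm, which is harmless here.
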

\begin{proof}
(\ref{eq:Jdiff-MSE}) follows immediately from Corollary \ref{coro:Jbound} with $s = 2$ and $r > 2$. To show (\ref{eq:Perdiff-MSE}), we first note that
\begin{equation}
\begin{aligned}
\big| \widehat{I}_{h,n}(\ob_n) - \widetilde{I}_{h,n}(\ob_n) \big| &= \big|  \underline{\widehat{J}}_{h,n}(\ob_n)\underline{\widehat{J}}_{h,n}^{*}(\ob_n) - 
\underline{\widetilde{J}}_{h,n}(\ob_n)\underline{\widetilde{J}}_{h,n}^{*}(\ob_n) \big|\\
&\leq \big|  \underline{\widehat{J}}_{h,n}(\ob_n) \{ \underline{\widehat{J}}_{h,n}^{*}(\ob_n) - \underline{\widetilde{J}}_{h,n}^{*}(\ob_n) \} \big| +
\big|  
\{ \underline{\widehat{J}}_{h,n}(\ob_n) - \underline{\widetilde{J}}_{h,n}(\ob_n) \} \underline{\widetilde{J}}_{h,n}^{*}(\ob_n) \big| \\
&\leq  \big| \underline{\widehat{J}}_{h,n}(\ob_n) - \underline{\widetilde{J}}_{h,n}(\ob_n)  \big| \big\{ \big|  \underline{\widehat{J}}_{h,n}(\ob_n) \big| 
+\big|  \underline{\widetilde{J}}_{h,n}(\ob_n) \big| 
\big\} \\
&\leq  \big| \underline{\widehat{J}}_{h,n}(\ob_n) - \underline{\widetilde{J}}_{h,n}(\ob_n)  \big| \big\{ \big|  \underline{\widehat{J}}_{h,n}(\ob_n) - \underline{\widetilde{J}}_{h,n}(\ob_n)   \big| 
+2 \big|  \underline{\widetilde{J}}_{h,n}(\ob_n) \big| 
\big\}
\end{aligned}
\label{eq:Idiff-00}
\end{equation} 
Here, we use the triangle inequality in the first and third inequalities and $|\xb \yb^*| = |\xb|\, |\yb^*|$ and $|\xb^*| = |\xb|$ in the second inequality. 
Taking the square on both sides above and then iteratively applying the Cauchy-Schwarz inequality, we obtain
\begin{equation} \label{eq:Idiff-bound00}
\begin{aligned}
& \Ex \big| \widehat{I}_{h,n}(\ob_n) - \widetilde{I}_{h,n}(\ob_n) \big|^2 \\
&~~\leq C \left\{ \Ex \big| \underline{\widehat{J}}_{h,n}(\ob_n) - \underline{\widetilde{J}}_{h,n}(\ob_n) \big|^4 \right\}^{1/2}
\left\{ \Ex \big| \underline{\widehat{J}}_{h,n}(\ob_n) - \underline{\widetilde{J}}_{h,n}(\ob_n) \big|^4 + \Ex \big| \underline{\widetilde{J}}_{h,n}(\ob_n) \big|^4 \right\}^{1/2}.
\end{aligned}
\end{equation}
The first term above is $o(1)$ due to Corollary \ref{coro:Jbound}. To bound the second term, applying the triangle inequality iteratively yields
\begin{equation} \label{eq:J4}
\Ex \big| \underline{\widetilde{J}}_{h,n}(\ob_n) \big|^4 \leq C \sum_{j=1}^{m} \Ex |\widetilde{J}^{(j)}_{h,n}(\ob_n)|^4 
\leq C \sum_{j=1}^{m} \Big\{ \Ex \big| \widetilde{J}^{(j)}_{h,n}(\ob_n) - {J}^{(j)}_{h,n}(\ob_n) \big|^4 + \Ex \big| {J}^{(j)}_{h,n}(\ob_n) \big|^4 \Big\}.
\end{equation}
Under the correctly specified first-order intensity model, the first term on the right-hand side above is zero. 
Moreover, $\Ex \big| {J}^{(j)}_{h,n}(\ob_n) \big|^4 = \Ex \big| I^{(j,j)}_{h,n}(\ob_n) \big|^2 = O(1)$, provided Assumption \ref{assum:B} holds for $k=4$. 
Therefore, we conclude that $\Ex \big| \underline{J}_{h,n}(\ob_n) \big|^4 = O(1)$ as $n \to \infty$, and in turn, the second term in (\ref{eq:Idiff-bound00}) is $O(1)$. 

Altogether, we have $\Ex \big| \widehat{I}_{h,n}(\ob_n) - \widetilde{I}_{h,n}(\ob_n) \big|^2 = o(1)$ as $n \to \infty$, thus obtain the desired result.
\end{proof}

\subsection{Bounds for the kernel spectral density estimators} \label{sec:KSDE-bound}
Recall $\widehat{F}_{n,\bb}(\ob)$ in (\ref{eq:KSDE}) and $F_{n,\bb}(\ob)$ in (\ref{eq:KSDE-th}).
In this section, we derive bounds between the feasible and theoretical kernel spectral density estimators when the true first-order intensity is $\underline{\lambda}(\xb;\bbeta_0)$. 
In this case, $F_{n,\bb}(\ob)$ coincides with the ``ideal'' kernel spectral density estimator
\begin{equation*}
\widetilde{F}_{n,\bb}(\ob) = \int_{\R^{d}} K_{\bb}(\ob - \xb) \widetilde{I}_{h,n}(\xb) \, d\xb.
\end{equation*}
The case of misspecified first-order intensity will be investigated in the following section. 

In the two theorems below, we first show that $\widehat{F}_{n,\bb}(\ob) - F_{n,\bb}(\ob)$ is asymptotically negligible, and then calculate the $L_2$ convergence rate.

\begin{theorem} \label{thm:KSDE-bound-a}
Suppose that Assumptions \ref{assum:A}, \ref{assum:B} (for $k=4$), \ref{assum:C}, \ref{assum:beta}, \ref{assum:beta2}(i), and \ref{assum:D} hold. Moreover, assume that the true first-order intensity is $\underline{\lambda}(\xb;\bbeta_0)$. Then,
\begin{equation*}
\sup_{\ob \in \R^d}|\widehat{F}_{n,\bb}(\ob) - F_{n,\bb}(\ob)| = o_p(1), \quad n\rightarrow \infty.
\end{equation*}
\end{theorem}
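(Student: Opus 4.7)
The plan is to start from the matrix identity
\[
\widehat{I}_{h,n}(\xb) - I_{h,n}(\xb) = \Delta(\xb)\Delta(\xb)^{*} + \Delta(\xb)\underline{J}_{h,n}(\xb)^{*} + \underline{J}_{h,n}(\xb)\Delta(\xb)^{*}, \quad \Delta(\xb) := \underline{\widehat{J}}_{h,n}(\xb) - \underline{J}_{h,n}(\xb),
\]
which is the matrix counterpart of the manipulation behind (\ref{eq:Idiff-00}). Integrating against $K_{\bb}(\ob-\cdot)$, taking entrywise absolute values, and then $\sup_{\ob}$, the claim reduces to establishing
\[
A_n := \sup_{\ob \in \R^d} \int_{\R^d} K_{\bb}(\ob-\xb)\,|\Delta(\xb)|^{2}\, d\xb = o_p(1), \quad B_n := \sup_{\ob \in \R^d} \int_{\R^d} K_{\bb}(\ob-\xb)\,|\Delta(\xb)|\,|\underline{J}_{h,n}(\xb)|\, d\xb = o_p(1).
\]

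For $A_n$, I apply Lemma \ref{lemma:Jbound} with $s = 2$ and $k = 2$: under Assumption \ref{assum:beta2}(i) one has $|D_n||\widehat{\bbeta}_n - \bbeta_0|^{2} = O_p(1)$ and $|D_n||\widehat{\bbeta}_n - \bbeta_0|^{4} = O_p(|D_n|^{-1})$, which gives
\[
|\Delta(\xb)|^{2} \leq O_p(1)\bigl(|D_n|^{-1}\,|H_{h\partial\underline{\lambda}}^{(n)}(\xb)|\bigr)^{2} + O_p(|D_n|^{-1})
\]
uniformly in $\xb$. The change of variables $\zb = \xb \oslash \aB$ identifies $|D_n|^{-1} H_{h\partial^{\alpha}\lambda^{(j)},1}^{(n)}(\xb)$ with $\mathcal{F}(h\partial^{\alpha}\lambda^{(j)})(-\aB \odot \xb)$, and since $h\partial^{\alpha}\lambda^{(j)}$ is bounded with support in $[-1/2,1/2]^d$, Parseval's identity yields $\int_{\R^d} |\mathcal{F}(h\partial^{\alpha}\lambda^{(j)})(-\aB \odot \xb)|^{2}\, d\xb = (2\pi)^d |D_n|^{-1} \int |h\partial^{\alpha}\lambda^{(j)}|^{2} = O(|D_n|^{-1})$. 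Combining this with the crude estimate $\|K_{\bb}\|_{\infty} = O((b_1 \cdots b_d)^{-1})$ gives
\[
\sup_{\ob} \int K_{\bb}(\ob-\xb)\bigl(|D_n|^{-1}\,|H_{h\partial\underline{\lambda}}^{(n)}(\xb)|\bigr)^{2}\, d\xb = O\bigl((b_1 \cdots b_d)^{-1}|D_n|^{-1}\bigr),
\]
which is $o(1)$ by Assumption \ref{assum:D}. Hence $A_n = o_p(1)$.

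For $B_n$, the Cauchy--Schwarz inequality with respect to the probability measure $K_{\bb}(\ob-\cdot)\, d\xb$ gives $B_n \leq \sqrt{A_n}\,\sqrt{\sup_{\ob}\Tr(F_{n,\bb}(\ob))}$, so it remains to show the tightness $\sup_{\ob}\Tr(F_{n,\bb}(\ob)) = O_p(1)$. This is the main technical obstacle: Theorem \ref{thm:KSDE-bias-var} together with the boundedness of $F_h$ (Assumption \ref{assum:E}) yields only the pointwise facts $\sup_{\ob}\Ex\Tr(F_{n,\bb}(\ob)) = O(1)$ and $\sup_{\ob}\var(F_{n,\bb}^{(j,j)}(\ob)) = o(1)$. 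I plan to upgrade these to uniform tightness by splitting the sup into a compact part, handled by a covering-plus-union-bound argument that exploits the Lipschitz modulus in $\ob$ of the mollified $F_{n,\bb}$, and a tail part $\|\ob\|_{\infty}\to\infty$, where the uniform bias bound (\ref{eq:Ibias-uniform}) and the decay of $F_h(\ob)$ to its constant diagonal asymptote $(2\pi)^{-d}H_{h,2}^{-1}\diag(H_{h^{2}\underline{\lambda}})$ deliver uniform control. An alternative that sidesteps this step is to replace Cauchy--Schwarz by the Young-type inequality $|ab| \leq \epsilon a^{2} + \epsilon^{-1}b^{2}$; then it suffices to show $\sup_{\ob}\Tr(F_{n,\bb}(\ob)) = O_p(1)$ (rather than $o_p$) at a fixed $\epsilon > 0$, after which the argument closes by letting $\epsilon \to 0$ following $n \to \infty$.
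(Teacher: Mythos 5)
Your decomposition of $\widehat{I}_{h,n}-I_{h,n}$ into $\Delta\Delta^{*}+\Delta \underline{J}_{h,n}^{*}+\underline{J}_{h,n}\Delta^{*}$ and your treatment of the quadratic term $A_n$ reproduce the paper's argument in substance: the paper likewise applies Lemma \ref{lemma:Jbound} with $(s,k)=(2,2)$, uses Assumption \ref{assum:beta2}(i) to extract the $O_p(1)$ and $O_p(|D_n|^{-1})$ factors, and bounds $\sup_{\ob}\int K_{\bb}(\ob-\xb)\{|D_n|^{-1}|H^{(n)}_{h\partial\underline{\lambda}}(\xb)|\}^2\,d\xb$ by $\|K_{\bb}\|_{\infty}=O((b_1\cdots b_d)^{-1})$ times a Parseval integral of order $O(|D_n|^{-1})$; this is exactly (\ref{eq:KHbound}) (the paper quotes YG24, Lemma C.3(a) for the Parseval step that you compute directly). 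So the conclusion $A_n=o_p(1)$ under Assumption \ref{assum:D} is correct and identical to the paper's.

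The cross term is where your proposal stops short of a proof. After Cauchy--Schwarz you correctly reduce matters to $\sup_{\ob}\Tr F_{n,\bb}(\ob)=O_p(1)$, but the route you sketch for this tightness step leans on Theorem \ref{thm:KSDE-bias-var} and Assumption \ref{assum:E} (and implicitly on (SL) and Lipschitz continuity of $h$), none of which are hypotheses of Theorem \ref{thm:KSDE-bound-a}; as written your plan imports unavailable assumptions. The ingredient that \emph{is} available under the stated assumptions is the uniform moment bound $\sup_{\xb}\Ex[I^{(j,j)}_{h,n}(\xb)]=O(1)$ coming from (\ref{eq:covJ00}), and this is how the paper closes: it bounds the square of the second Cauchy--Schwarz factor by $o_p(1)+C\sum_{j}\int K_{\bb}(\ob-\xb)I^{(j,j)}_{h,n}(\xb)\,d\xb$ and invokes uniform boundedness in probability of the periodogram together with $\int K_{\bb}=1$. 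Your covering-plus-union-bound upgrade is also underspecified: with only $\sup_{\ob}\var(F^{(j,j)}_{n,\bb}(\ob))=o(1)$ and no rate, a union bound over a net whose cardinality must track the (blowing-up) Lipschitz constant of $K_{\bb}$ does not close. Finally, the Young-inequality ``alternative'' buys nothing: since $A_n=o_p(1)$, the Cauchy--Schwarz route already requires only $O_p(1)$ (not $o_p(1)$) for the second factor, so both variants impose the identical requirement.
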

\begin{proof}
By using (\ref{eq:Idiff-00}) together with triangular inequality and Cauchy-Schwarz inequality, we have
\begin{equation}
\begin{aligned}
& |\widehat{F}_{n,\bb}(\ob) -F_{n,\bb}(\ob)| \leq \int_{\R^d} K_{\bb}(\ob - \xb) \big| \widehat{I}_{h,n}(\xb) - I_{h,n}(\xb) \big| \, d\xb \\
& \quad \leq \int_{\R^d} K_{\bb}(\ob - \xb) \big| \underline{\widehat{J}}_{h,n}(\xb) - \underline{J}_{h,n}(\xb) \big| \big\{ 
 \big|  \underline{\widehat{J}}_{h,n}(\xb) - \underline{J}_{h,n}(\xb)   \big| 
+2 \big|  \underline{J}_{h,n}(\xb) \big| 
 \big\} \, d\xb \\
& \quad \leq C \left\{ \int_{\R^d} K_{\bb}(\ob - \xb) \big| \underline{\widehat{J}}_{h,n}(\xb) - \underline{J}_{h,n}(\xb) \big|^2 \, d\xb \right\}^{1/2} \\
& \quad \quad \times \left\{ \int_{\R^d} K_{\bb}(\ob - \xb) \big\{  \big|  \underline{\widehat{J}}_{h,n}(\xb) - \underline{J}_{h,n}(\xb)   \big|^2 
+\big|  \underline{J}_{h,n}(\xb) \big|^2  \big\} \, d\xb \right\}^{1/2}.
\end{aligned}
\label{eq:Fbound11}
\end{equation}
To bound the first term above, since $\underline{J}_{h,n}(\xb) = \underline{\widetilde{J}}_{h,n}(\xb)$, applying Lemma \ref{lemma:Jbound} for $(s,k) = (2,2)$ and obtain
\begin{eqnarray*}
&& \int_{\R^d} K_{\bb}(\ob - \xb) \big| \underline{\widehat{J}}_{h,n}(\xb) - \underline{J}_{h,n}(\xb) \big|^2 \, d\xb \\
&& \leq C \big\{ |D_n|^{1/2}|\widehat{\bbeta}_n - \bbeta_0|\big\}^2 \int_{\R^d} K_{\bb}(\ob - \xb) \left\{ |D_n|^{-1} \big| H_{h\partial \underline{\lambda}}^{(n)}(\xb) \big| \right\}^2 \, d\xb \\
&& \quad + C \left\{ |D_n|^{1/2} \cdot |\widehat{\bbeta}_n - \bbeta_0|^2 \right\}^2 \int_{\R^d} K_{\bb}(\ob - \xb) \, d\xb \\
&& \leq O_p(1) \int_{\R^d} K_{\bb}(\ob - \xb) \left\{ |D_n|^{-1} \big| H_{h\partial \underline{\lambda}}^{(n)}(\xb) \big| \right\}^2 \, d\xb + O_p(|D_n|^{-1}).
\end{eqnarray*}
Here, the second inequality follows from Assumption \ref{assum:beta2}(i) and the identity $\int_{\R^d} K_{\bb}(\ob - \xb) \, d\xb = 1$.

Moreover, since $\sup_{\ob} K_{\bb}(\ob) < C (b_1 \cdots b_d)^{-1}$, this, together with YG24, Lemma C.3(a), yields
\begin{equation}
\begin{aligned}
& \int_{\R^d} K_{\bb}(\ob - \xb) \left\{ |D_n|^{-1} \big| H_{h\partial \underline{\lambda}}^{(n)}(\xb) \big| \right\}^2 \, d\xb \\
& \leq C |D_n|^{-1} (b_1 \cdots b_d)^{-1} \int_{\R^d} \left\{ |D_n|^{-1/2} \big| H_{h\partial \underline{\lambda}}^{(n)}(\xb) \big| \right\}^2 \, d\xb = O(|D_n|^{-1} (b_1 \cdots b_d)^{-1}),
\end{aligned}
\label{eq:KHbound}
\end{equation} where the bound above is uniform over $\ob \in \R^d$.
Therefore, the first term in (\ref{eq:Fbound11}) is \\ $O_p(|D_n|^{-1/2} (b_1 \cdots b_d)^{-1/2}) + O_p(|D_n|^{-1/2})$, which is $o_p(1)$ due to Assumption \ref{assum:D} on the bandwidth.

Next, by using the above result together with the triangle inequality, the square of the second term is bounded by
\begin{equation}
o_p(1) + C \sum_{j=1}^{m} \int_{\R^{d}} K_{\bb}(\ob - \xb) I_{h,n}^{(j,j)}(\xb) d\xb.
\label{eq:KKbound}
\end{equation}
Since $I_{h,n}^{(j,j)}(\xb) = O_p(1)$ uniformly over $\xb \in \R^d$ and $\int_{\R^{d}} K_{\bb}(\ob-\xb) d\xb = 1$, the expression in (\ref{eq:KKbound}) is $O_p(1)$ uniformly over $\ob \in \R^d$. 

Altogether, we conclude that $|\widehat{F}_{n,\bb}(\ob) - F_{n,\bb}(\ob)| = o_p(1)$, and the $o_p(1)$ bound is uniform over $\ob \in \R^d$. Thus we get the desired result.
\end{proof}

\begin{theorem} \label{thm:KSDE-bound-b}
Suppose that Assumptions \ref{assum:A}, \ref{assum:B}(for $k=4$), \ref{assum:C}, \ref{assum:beta}, \ref{assum:beta2}(ii) (for $r>4$), and \ref{assum:D} hold. Moreover, assume that the true first-order intensity is $\underline{\lambda}(\xb;\bbeta_0)$. Then,
\begin{equation*}
\sup_{\ob \in \R^d} \Ex |\widehat{F}_{n,\bb}(\ob) - F_{n,\bb}(\ob)|^2 = O(|D_n|^{-1}(b_1 \cdots b_d)^{-1})+ O(|D_n|^{1-r/4}), \quad n\rightarrow \infty.
\end{equation*}
\end{theorem}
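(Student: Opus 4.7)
The plan is to upgrade the in-probability argument of Theorem \ref{thm:KSDE-bound-a} to an $L_2$ bound, which reduces the problem to a uniform second-moment bound on $|\widehat{I}_{h,n}(\xb) - I_{h,n}(\xb)|$. Since $\int_{\R^d} K_\bb(\ob-\xb)\, d\xb = 1$, Jensen's inequality (equivalently, Cauchy--Schwarz with $K_\bb$ as a probability density) yields
\begin{equation*}
|\widehat{F}_{n,\bb}(\ob) - F_{n,\bb}(\ob)|^2 \leq \int_{\R^d} K_\bb(\ob-\xb)\, |\widehat{I}_{h,n}(\xb) - I_{h,n}(\xb)|^2\, d\xb,
\end{equation*}
so by Fubini it suffices to bound $\Ex|\widehat{I}_{h,n}(\xb) - I_{h,n}(\xb)|^2$ uniformly in $\xb$ by a quantity of the form $C\bigl\{|D_n|^{-1}|H_{h\partial\underline{\lambda}}^{(n)}(\xb)|\bigr\}^2 + C|D_n|^{1-r/4}$. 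The first piece then yields the $|D_n|^{-1}(b_1\cdots b_d)^{-1}$ rate via (\ref{eq:KHbound}), while the second survives integration against $K_\bb$ because $\int K_\bb = 1$.

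Next, I reduce the periodogram difference to a DFT difference. Squaring (\ref{eq:Idiff-00}) and applying $(a+b)^2 \leq 2(a^2 + b^2)$ followed by Cauchy--Schwarz on the cross term gives
\begin{equation*}
\Ex|\widehat{I}_{h,n} - I_{h,n}|^2 \leq C\,\Ex|\underline{\widehat{J}}_{h,n} - \underline{J}_{h,n}|^4 + C\,\bigl\{\Ex|\underline{\widehat{J}}_{h,n} - \underline{J}_{h,n}|^4\bigr\}^{1/2}\bigl\{\Ex|\underline{J}_{h,n}|^4\bigr\}^{1/2}.
\end{equation*}
Under correctly specified first-order intensity and Assumption \ref{assum:B} with $k=4$, the argument at the end of the proof of Theorem \ref{thm:asymp-IRS-fea2} (which reduces $\Ex|J_{h,n}^{(j)}(\xb)|^4$ to $\Ex|I_{h,n}^{(j,j)}(\xb)|^2 = O(1)$) gives $\sup_\xb \Ex|\underline{J}_{h,n}(\xb)|^4 = O(1)$, so the task reduces to controlling $\Ex|\underline{\widehat{J}}_{h,n}(\xb) - \underline{J}_{h,n}(\xb)|^4$ uniformly in $\xb$.

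For this I invoke Lemma \ref{lemma:Jbound} with $s=4$, $k=2$ to obtain
\begin{equation*}
\Ex|\underline{\widehat{J}}_{h,n}(\xb) - \underline{J}_{h,n}(\xb)|^4 \leq C\bigl\{|D_n|^{-1}|H_{h\partial\underline{\lambda}}^{(n)}(\xb)|\bigr\}^4 \Ex\{|D_n|^{1/2}|\widehat{\bbeta}_n - \bbeta_0|\}^4 + C|D_n|^2\,\Ex|\widehat{\bbeta}_n - \bbeta_0|^8.
\end{equation*}
Lyapunov's inequality and Assumption \ref{assum:beta2}(ii) (with $r>4$) give $\Ex\{|D_n|^{1/2}|\widehat{\bbeta}_n - \bbeta_0|\}^4 \leq \bigl(\Ex\{|D_n|^{1/2}|\widehat{\bbeta}_n - \bbeta_0|\}^r\bigr)^{4/r} = O(1)$. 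Since $|D_n|^{-1}|H_{h\partial\underline{\lambda}}^{(n)}(\xb)|$ is uniformly bounded, the fourth power can be replaced by the square up to a constant, and integration against $K_\bb$ via (\ref{eq:KHbound}) yields $O(|D_n|^{-1}(b_1\cdots b_d)^{-1})$. The same rate is inherited by the cross term after passing to the square root and integrating.

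The main obstacle is the remaining term $|D_n|^2\, \Ex|\widehat{\bbeta}_n - \bbeta_0|^8$: Assumption \ref{assum:beta2}(ii) only controls the $r$th moment for some $r > 4$, which need not reach $r=8$. The resolution is compactness of $\Theta$: since $|\widehat{\bbeta}_n - \bbeta_0| \leq \operatorname{diam}(\Theta) < \infty$, one has $|\widehat{\bbeta}_n - \bbeta_0|^8 \leq C\,|\widehat{\bbeta}_n - \bbeta_0|^r$, hence $\Ex|\widehat{\bbeta}_n - \bbeta_0|^8 = O(|D_n|^{-r/2})$ and $|D_n|^2\,\Ex|\widehat{\bbeta}_n - \bbeta_0|^8 = O(|D_n|^{2-r/2})$. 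Taking the square root in the dominant cross term produces $O(|D_n|^{1-r/4})$, which after integration against $K_\bb$ (whose total mass is $1$) is the second rate stated in the theorem.
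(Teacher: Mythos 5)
Your proposal is correct and follows essentially the same route as the paper's proof: Jensen's inequality against the kernel, the reduction of the periodogram difference to the DFT difference via (\ref{eq:Idiff-00}) and Cauchy--Schwarz, Lemma \ref{lemma:Jbound} with $s=4$ together with the uniform boundedness of $|D_n|^{-1}|H^{(n)}_{h\partial\underline{\lambda}}(\xb)|$ and (\ref{eq:KHbound}) for the $O(|D_n|^{-1}(b_1\cdots b_d)^{-1})$ rate, and the compactness of $\Theta$ to control $\Ex|\widehat{\bbeta}_n-\bbeta_0|^{8}$ by the $r$th moment, yielding the $O(|D_n|^{1-r/4})$ term from the cross term. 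No gaps.
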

\begin{proof}
By using triangular inequality and Jensen's inequality, we have
\begin{eqnarray*}
|\widehat{F}_{n,\bb}(\ob) - F_{n,\bb}(\ob)|^2 &\leq&
\left( \int_{\R^d} K_{\bb}(\ob - \xb) \big| \widehat{I}_{h,n}(\xb)- I_{h,n}(\xb) \big| d\xb \right)^2 \\
&\leq& \int_{\R^d} K_{\bb}(\ob - \xb) \big| \widehat{I}_{h,n}(\xb)- I_{h,n}(\xb) \big|^2 d\xb.
\end{eqnarray*}
Next, by using the last inequality of (\ref{eq:Idiff-00}) together with Cauchy-Scwarz inequlity, the expectation of the above is bounded by
\begin{equation*}
C \int_{\R^{d}} K_{\bb}(\ob - \xb) \Ex \big| \underline{\widehat{J}}_{h,n}(\xb) - \underline{J}_{h,n}(\xb)  \big|^4 d\xb   \\
+ C
\int_{\R^{d}} K_{\bb}(\ob - \xb) \Ex \big| \underline{\widehat{J}}_{h,n}(\xb) - \underline{J}_{h,n}(\xb)  \big|^2 \big|\underline{J}_{h,n}(\xb)  \big|^2 d\xb.
\end{equation*}
To bound the first term above, we use (\ref{eq:EJ-bound}) for $s=4$ and get
\begin{eqnarray*}
&& \int_{\R^{d}} K_{\bb}(\ob - \xb) \Ex \big| \underline{\widehat{J}}_{h,n}(\xb) - \underline{J}_{h,n}(\xb)  \big|^4 d\xb \\
&& ~\leq C  \int_{\R^{d}} K_{\bb}(\ob - \xb) \big\{ |D_n|^{-1} \big|H_{h\partial \underline{\lambda}}^{(n)}(\xb)\big| \big\}^4 d\xb
+ O(|D_n|^{2-r/2}) \int_{\R^{d}}  K_{\bb}(\ob - \xb) d\xb \\
&&~\leq C \int_{\R^{d}} K_{\bb}(\ob - \xb) \big\{ |D_n|^{-1} \big|H_{h\partial \underline{\lambda}}^{(n)}(\xb)\big| \big\}^2 d\xb
+ O(|D_n|^{2-r/2}) \\
&& ~ = O(|D_n|^{-1}(b_1 \cdots b_d)^{-1}) + O(|D_n|^{2-r/2}).
\end{eqnarray*} 
Here, the second inequality is due to $\sup_{\xb} |D_n|^{-1} \big|H_{h\partial \underline{\lambda}}^{(n)}(\xb)| < \infty$ and the last inequality is due to (\ref{eq:KHbound}).

To bound the second term, using the Cauchy-Schwarz inequality together with $\sup_{\xb} \mathbb{E}\big|  \underline{J}_{h,n}(\xb) \big|^4 = O(1)$, the second term is bounded by
\begin{equation*}
C \int_{\R^d} K_{\bb}(\ob - \xb)  \left\{ \mathbb{E} \big| \underline{\widehat{J}}_{h,n}(\xb) - \underline{J}_{h,n}(\xb)  \big|^4 \right\}^{1/2}  d\xb.
\end{equation*}
Using similar techniques, the above term is bounded by
\begin{eqnarray*}
&& C \int_{\R^d} K_{\bb}(\ob - \xb) \left\{ |D_n|^{-1} \big|H_{h\partial \underline{\lambda}}^{(n)}(\xb)\big| \right\}^2 d\xb 
+ C |D_n| \left\{ \mathbb{E} |\widehat{\bbeta}_n - \bbeta_0|^{r} \right\}^{1/2} \int_{\R^d} K_{\bb}(\ob - \xb) d\xb \\
&& = O(|D_n|^{-1} (b_1 \cdots b_d)^{-1}) + O(|D_n|^{1-r/4}).
\end{eqnarray*}
Combining the above two bounds, and noting that these bounds are uniform over $\ob \in \R^d$, we obtain
\begin{equation*}
\sup_{\ob \in \R^d} \mathbb{E} \big| \widehat{F}_{n,\bb}(\ob) - F_{n,\bb}(\ob) \big|^2
= O(|D_n|^{-1} (b_1 \cdots b_d)^{-1}) + O(|D_n|^{1-r/4}), \quad n \to \infty.
\end{equation*}
Thus, we get the desired results.
\end{proof}

\subsection{Bounds for the quantities under the misspecified intensity model} \label{sec:KSDE-bound2}

In this section, we prove the analogous convergence results in Theorems \ref{thm:KSDE-bound-a} and \ref{thm:KSDE-bound-b} for the case when the first-order intensity function is misspecified. In this setting, the ideal kernel spectral density estimator $\widetilde{F}_{n,\bb}(\ob)$ is no longer equal to its theoretical counterpart $F_{n,\bb}(\ob)$. To obtain the bound, we require additional smoothness assumptions on the data taper $h$ and the first-order intensities $\underline{\lambda}$ and $\underline{\lambda}_0$.

\begin{assumption} \label{assum:smooth}
For any $\balpha = (\alpha_1, \dots, \alpha_d)^\top \in \{0,1,2\}^{d}$, the data taper $h$ has continuous $\balpha$th-order partial derivatives in $\R^{d}$, and for $j \in \{1, \dots, m\}$, both $\lambda^{(j)}(\cdot)$ and $\lambda^{(j)}(\cdot;\bbeta_0)$ have continuous $\balpha$th-order partial derivatives in $\xb \in [-1/2,1/2]^d$.
\end{assumption}
We note that the data taper function $h_{a}(\xb)$ in (\ref{eq:ha-25}) below satisfies Assumption \ref{assum:smooth}. 

The following lemma is useful for obtaining bounds for $|\widehat{F}_{n,\bb}(\ob) - F_{n,\bb}(\ob)|$. For a nonnegative function $g$ with support in $[-1/2,1/2]^d$, recall $H_{g,1}^{(n)}(\ob)$ as defined in (\ref{eq:Hkn}).

\begin{lemma} \label{lemma:Cn}
Let $g(\xb)$ be a nonnegative function on $\R^{d}$ with support $[-1/2,1/2]^d$, and let $\aB(n)$ be the vector of side lengths of $D_n$. Suppose that, for any $\balpha = (\alpha_1, \dots, \alpha_d)^\top \in \{0,1,2\}^{d}$, $g$ has continuous $\balpha$th-order partial derivatives on $\R^{d}$, and that $\aB(n)$ satisfies (SL). Let $i(\ob) = \{j: \omega_j \neq 0\}$ and let $|\ob|_0$ denote the number of nonzero elements of $\ob \in \R^{d}$. Then, there exists a constant $C>0$ such that for any $\ob = (\omega_1, \dots, \omega_d)^\top \in \R^{d} \setminus \{\textbf{0}\}$,
\begin{equation*}
|H_{g,1}^{(n)}(\ob)| \leq C |D_n|^{1-2|\ob|_0/d} \left(\prod_{j \in i(\ob)} \omega_{j}\right)^{-2}.
\end{equation*}
\end{lemma}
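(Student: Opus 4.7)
The plan is to rewrite $H_{g,1}^{(n)}(\ob)$ as $|D_n|$ times a Fourier-type integral on the unit cube, and then to repeatedly integrate by parts in each coordinate direction $j$ where $\omega_j \neq 0$, gaining two factors of $\xi_j^{-1}$ each time. Finally, the assumption (SL) will be used to convert the $A_j$ factors arising from the rescaling into powers of $|D_n|^{1/d}$.

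First, I would apply the change of variables $\yb = \xb \oslash \aB$ in the definition of $H_{g,1}^{(n)}(\ob)$, which yields
\begin{equation*}
H_{g,1}^{(n)}(\ob) \;=\; |D_n| \int_{[-1/2,1/2]^d} g(\yb) \exp(-i\yb^\top \bxi) \, d\yb, \qquad \bxi := \aB \odot \ob.
\end{equation*}
Since $g$ is supported in $[-1/2,1/2]^d$ and is assumed to possess continuous partial derivatives of every multi-order $\balpha \in \{0,1,2\}^d$ on all of $\R^d$, it follows that $g$ together with all such partial derivatives of order $\leq 1$ in any single coordinate must vanish on $\partial[-1/2,1/2]^d$ (otherwise the continuity across the boundary would fail, as $g$ and its derivatives are identically zero outside the cube). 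This will make the boundary terms in the upcoming integration by parts disappear.

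Next, for each $j \in i(\ob)$ we have $\xi_j = A_j \omega_j \neq 0$, so I can apply integration by parts twice in the $y_j$ variable (using Fubini to single out one coordinate at a time). Each pair of integrations yields
\begin{equation*}
\int_{-1/2}^{1/2} g(\yb) e^{-i y_j \xi_j} \, dy_j \;=\; -\,\xi_j^{-2} \int_{-1/2}^{1/2} \partial_{y_j}^2 g(\yb) \, e^{-i y_j \xi_j} \, dy_j,
\end{equation*}
with no boundary contributions by the preceding remark. Iterating over all $j \in i(\ob)$ and taking absolute values gives
\begin{equation*}
\left| \int_{[-1/2,1/2]^d} g(\yb) e^{-i \yb^\top \bxi} d\yb \right| \;\leq\; \prod_{j \in i(\ob)} \xi_j^{-2} \cdot \int_{[-1/2,1/2]^d} \left| \Big(\textstyle\prod_{j \in i(\ob)} \partial_{y_j}^2\Big) g(\yb) \right| d\yb \;\leq\; C \prod_{j \in i(\ob)} \xi_j^{-2},
\end{equation*}
where the mixed partial derivative, being continuous on the compact cube, is bounded.

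Finally, substituting $\xi_j = A_j \omega_j$ and combining with the $|D_n|$ prefactor yields
\begin{equation*}
|H_{g,1}^{(n)}(\ob)| \;\leq\; C \, |D_n| \prod_{j \in i(\ob)} (A_j \omega_j)^{-2}.
\end{equation*}
Under (SL), every $A_j$ is of the same order, so $A_j \asymp |D_n|^{1/d}$; hence $\prod_{j\in i(\ob)} A_j^{-2} \leq C |D_n|^{-2|\ob|_0/d}$, which yields the claimed bound. The main subtlety is the verification of the vanishing boundary terms, which rests on the observation that global $C^2$-smoothness together with compact support inside the closed unit cube forces $g$ and its derivatives up to the relevant order to vanish on the boundary; once this is in place, the rest of the argument is a direct iteration of integration by parts.
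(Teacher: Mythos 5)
Your proof is correct and follows essentially the same route as the paper: both rescale to write $H_{g,1}^{(n)}(\ob) = |D_n|\,\mathcal{F}(g)(-\aB \odot \ob)$ and then invoke the standard decay of the Fourier transform of a compactly supported function with continuous mixed partials of order two in each coordinate, finishing with (SL) to convert $A_j$ into $|D_n|^{1/d}$. The only difference is that the paper cites Folland's Theorem 8.22(e) for the decay estimate, whereas you prove it directly by iterated integration by parts (with the correct observation that continuity plus compact support in the closed cube forces the boundary terms to vanish), which is precisely the content of the cited result.
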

\begin{proof}
By a change of variables, we can write $H_{g,1}^{(n)}(\ob) = |D_n| \mathcal{F}(g)(-\aB \odot \ob)$. Since $g$ has continuous $\balpha$th-order partial derivatives on $\R^{d}$ and $\partial^{\balpha} g$ also has support in $[-1/2,1/2]^d$, we can apply \cite{b:fol-99}, Theorem 8.22(e). That is, there exists a constant $C>0$ such that for any $\ob = (\omega_1, \dots, \omega_d)^\top \in \R^{d} \setminus \{\textbf{0}\}$,
\begin{equation*}
|H_{g,1}^{(n)}(\ob)| = |D_n| |\mathcal{F}(g)(-\aB \odot \ob)| \leq C |D_n| \bigg(\prod_{j \in i(\ob)} A_j \omega_{j}\bigg)^{-2}
= C |D_n|^{1-2|\ob|_0/d} \bigg(\prod_{j \in i(\ob)} \omega_{j}\bigg)^{-2}.
\end{equation*}
Here, we use condition (SL) in the last identity. Thus, we get the desired result.
\end{proof}

Using the above bound, we first bound the differences between the ideal and theoretical estimators, $|\widetilde{F}_{n,\bb}(\ob) - F_{n,\bb}(\ob)|$. To this end, for $j \in \{1, \dots, m\}$, let
$g_j(\xb) = \lambda_0^{(j)}(\xb)  - \lambda^{(j)}(\xb)$. Then, by simple algebra, for $i,j \in \{1, \dots, m\}$, we have
\begin{equation}
\begin{aligned}
\widetilde{J}_{h,n}^{(j)}(\ob) - J_{h,n}^{(j)}(\ob) &= (2\pi)^{-d/2}H_{h,2}^{-1/2} |D_n|^{-1/2} H_{hg_j,1}^{(n)}(\ob) := C_{n}^{(j)}(\ob) \\
\widetilde{I}^{(i,j)}(\ob) - I^{(i,j)}(\ob) &= J_{h,n}^{(i)}(\ob) C^{(j)}_{n}(-\ob) + C^{(i)}_n(\ob) J_{h,n}^{(j)}(-\ob) + C^{(i)}_n(\ob) C^{(j)}_{n}(-\ob).
\end{aligned} \label{eq:IJ-diff}
\end{equation}
The following theorem provides the $L_2$-convergence of $|\widetilde{F}_{n,\bb}(\ob) - F_{n,\bb}(\ob)|$ to zero, with rate of convergence.
 
\begin{theorem} \label{theorem:Fij-L2}
Suppose that Assumptions \ref{assum:A}, \ref{assum:B}(for $k=2$), \ref{assum:D}, and \ref{assum:smooth} hold. Furthermore, we asume that the side lengths $\aB(n)$ satisfy (SL). Then, for any $\ob \in \R^{d}$ with $|\ob|_0 > d/4$, 
\begin{equation} \label{eq:E3-1}
\lim_{n\rightarrow \infty} \Ex |\widetilde{F}_{n,\bb}(\ob) - F_{n,\bb}(\ob)|^2 = 0.
\end{equation}
Now, let $d \in \{1,2,3\}$. Then, for any fixed $\delta >0$, we have
\begin{equation} \label{eq:E3-2}
\sup_{\|\ob\|_{\infty} > \delta} \Ex |\widetilde{F}_{n,\bb}(\ob) - F_{n,\bb}(\ob)|^2 = O(|D_n|^{1-4/d}).
\end{equation}
\end{theorem}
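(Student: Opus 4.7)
My plan is to use the algebraic identity (\ref{eq:IJ-diff}) to decompose the entrywise error into three pieces: for each $i,j \in \{1,\dots,m\}$,
\[
\widetilde{F}_{n,\bb}^{(i,j)}(\ob) - F_{n,\bb}^{(i,j)}(\ob) = T_1^{(i,j)}(\ob) + T_2^{(i,j)}(\ob) + T_3^{(i,j)}(\ob),
\]
with $T_1^{(i,j)}(\ob) = \int_{\R^d} K_{\bb}(\ob-\xb) J_{h,n}^{(i)}(\xb) C_n^{(j)}(-\xb)\, d\xb$, $T_2^{(i,j)}$ defined symmetrically by swapping $i$ and $j$, and $T_3^{(i,j)}(\ob) = \int_{\R^d} K_{\bb}(\ob-\xb) C_n^{(i)}(\xb) C_n^{(j)}(-\xb)\, d\xb$ deterministic. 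Because $\Ex J_{h,n}^{(i)}(\xb) = 0$, the first two terms have zero mean, and since $|\cdot|$ is the entrywise sum norm, it will suffice to bound $\Ex|T_m^{(i,j)}(\ob)|^2$ separately for $m \in \{1,2,3\}$ and each pair $(i,j)$.

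Two ingredients will be essential. First, under Assumption \ref{assum:smooth}, $h g_j = h(\lambda_0^{(j)} - \lambda^{(j)})$ has continuous $\balpha$th-order partial derivatives for every $\balpha \in \{0,1,2\}^d$ and support in $[-1/2,1/2]^d$, so Lemma \ref{lemma:Cn} applies. Inspecting its proof lets me choose $\balpha$ with $\alpha_k = 2$ on an arbitrary index set $S \subseteq \{1,\dots,d\}$ and $\alpha_k = 0$ elsewhere, which together with (SL) gives
\[
|C_n^{(j)}(\xb)| \leq C\, |D_n|^{1/2 - 2|S|/d} \prod_{k \in S} |\xi_k|^{-2} \qquad \text{whenever } \xi_k \neq 0 \text{ for all } k \in S.
\]
Second, (\ref{eq:covJ00}) together with Assumption \ref{assum:B} (for $k=2$) will supply the uniform bound $\sup_{\xb \in \R^d} \var(J_{h,n}^{(i)}(\xb)) = O(1)$, and Cauchy--Schwarz will upgrade this to $\sup_{\xb,\yb \in \R^d} |\Ex[J_{h,n}^{(i)}(\xb)\overline{J_{h,n}^{(i)}(\yb)}]| = O(1)$.

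For (\ref{eq:E3-1}) I will fix $\ob$ with $|\ob|_0 > d/4$ and choose $S = i(\ob)$. Because $\bb \to \mathbf{0}$, for $n$ sufficiently large and every $\xb$ in the support of $K_{\bb}(\ob - \cdot)$ I will have $|\xi_k| \geq |\omega_k|/2$ for all $k \in i(\ob)$, so the bound above becomes $|C_n^{(j)}(\xb)| \leq C(\ob)\, |D_n|^{1/2 - 2|\ob|_0/d}$ uniformly. Factoring this out, using $\int K_{\bb} = 1$ together with the uniform variance estimate will yield $\Ex|T_1^{(i,j)}(\ob)|^2, \Ex|T_2^{(i,j)}(\ob)|^2 = O(|D_n|^{1-4|\ob|_0/d})$ and $|T_3^{(i,j)}(\ob)|^2 = O(|D_n|^{2(1-4|\ob|_0/d)})$; both vanish since $1 - 4|\ob|_0/d < 0$. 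For (\ref{eq:E3-2}) I will fix $\delta > 0$ and, for each $\ob$ with $\|\ob\|_\infty > \delta$, select an index $k_0 = k_0(\ob)$ with $|\omega_{k_0}| \geq \delta$ and apply the decay bound with the singleton $S = \{k_0\}$; for $n$ large enough that $b_{k_0} < \delta/2$ this produces $\sup_{\xb \in \mathrm{supp}\, K_{\bb}(\ob-\cdot)} |C_n^{(j)}(\xb)| \leq C \delta^{-2} |D_n|^{1/2 - 2/d}$ uniformly in $\ob$. The same variance argument will then give $\Ex|T_1^{(i,j)}|^2, \Ex|T_2^{(i,j)}|^2 = O(|D_n|^{1-4/d})$ and $|T_3^{(i,j)}|^2 = O(|D_n|^{2(1-4/d)}) = o(|D_n|^{1-4/d})$ when $d \in \{1,2,3\}$, and summing over $i,j$ will conclude (\ref{eq:E3-2}). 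The hard part will be controlling $|C_n^{(j)}(\xb)|$ uniformly in a neighbourhood of $\ob$ when some $\omega_k = 0$---whose neighbourhood contains $\xb$ with arbitrarily small $\xi_k$, so that the naive application of Lemma \ref{lemma:Cn} with $S = i(\xb)$ would generate non-integrable $|\xi_k|^{-2}$ singularities; the resolution is precisely to shrink $S$ to only those directions on which an a priori lower bound on $|\omega_k|$ is available, sacrificing the unused Fourier decay in the remaining directions for integrability against the unit-mass kernel $K_{\bb}$.
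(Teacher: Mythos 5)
Your proposal is correct and follows essentially the same route as the paper's proof: the identity (\ref{eq:IJ-diff}) yields the same three-term decomposition (the paper's $Q_1,Q_2,Q_3$), the deterministic term is controlled via the Fourier-decay bound of Lemma \ref{lemma:Cn} applied only in the coordinates where a lower bound on $|\omega_k|$ is available, and the two mean-zero stochastic terms are handled with the uniform $O(1)$ bound on the DFT covariances, giving $O(|D_n|^{1-4|\ob|_0/d})$ and $O(|D_n|^{1-4/d})$ respectively. Your explicit shrinking of the derivative index set $S$ to $i(\ob)$ (resp.\ to a singleton $\{k_0\}$ with $|\omega_{k_0}|\geq\delta$) is precisely the ``modification of Lemma \ref{lemma:Cn}'' the paper invokes for (\ref{eq:E3-2}), and it is stated more carefully than the paper's corresponding step for (\ref{eq:E3-1}), which loosely asserts $\prod_{j\in i(\xb)} x_j > C$ on the kernel's support.
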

\begin{proof}
By using (\ref{eq:IJ-diff}), for $i,j \in \{1, \dots, m\}$, we can write $\widetilde{F}_{n,\bb}^{(i,j)}(\ob) - F_{n,\bb}^{(i,j)}(\ob) = Q_{1} + Q_{2} + Q_{3}$, where
\begin{equation*}
\begin{aligned}
Q_{1} &= \int_{\R^{d}} K_{\bb} (\ob - \xb) J_{h,n}^{(i)}(\xb) C^{(j)}_{n}(-\xb) d\xb, \quad
Q_{2} = \int_{\R^{d}} K_{\bb} (\ob - \xb) C^{(i)}_n(\xb) J_{h,n}^{(j)}(-\xb) d\xb, \\
\text{and} \quad Q_{3} &= \int_{\R^{d}} K_{\bb} (\ob - \xb)  C^{(i)}_n(\xb) C^{(j)}_{n}(-\xb) d\xb.
\end{aligned}
\end{equation*}
The terms $Q_1$ and $Q_2$ are stochastic with mean zero, whereas $Q_3$ is deterministic. We first bound $Q_3$. Note that $K_{\bb}(\ob - \cdot)$
has support $\ob + [-b_1,b_1] \times \cdots \times [-b_d, b_d]$, and since $b_1, \dots, b_d \rightarrow 0$, for large $n \in \N$ we have $|\xb|_0 \geq |\ob|_0$ on the support of $K_{\bb}(\ob - \cdot)$. Moreover, there exists $C>0$, depending only on $\ob$, such that for large $n \in \N$, $\prod_{j \in i(\xb)} x_{j} > C$ for all $\xb \in \ob + [-b_1,b_1] \times \cdots \times [-b_d, b_d]$. Therefore, by using Lemma \ref{lemma:Cn}, we obtain
\begin{equation} \label{eq:Q3}
\begin{aligned}
|Q_{3}| &\leq \int_{\R^{d}} K_{\bb} (\ob- \xb) |C^{(i)}_n(\xb) C^{(j)}_{n}(-\xb)| d\xb \\
&\leq C |D_n|^{1-4|\ob|_0/d} \int_{\R^{d}} K_{\bb} (\ob- \xb) d\xb = O(|D_n|^{1-4|\ob|_0/d}).
\end{aligned}
\end{equation}
Next, to bound the $L_2$-norm of $Q_1$, we have
\begin{equation} \label{eq:Q1}
\begin{aligned}
\Ex |Q_{1}|^2 &\leq \int_{\R^{2d}} K_{\bb} (\ob - \xb) K_{\bb} (\ob - \yb) |C^{(i)}_n(-\xb)| |C^{(j)}_{n}(\yb)| 
|\cov (J_{h,n}^{(j)}(-\xb), J_{h,n}^{(j)}(-\yb))| d\xb d\yb \\
&\leq C |D_n|^{1-4|\ob|_0/d} \int_{\R^{2d}} K_{\bb} (\ob - \xb) K_{\bb} (\ob - \yb) d\xb d\yb
= O(|D_n|^{1-4|\ob|_0/d}).
\end{aligned}
\end{equation}
Here, the second inequality follows from $\sup_{\xb, \yb} |\cov (J_{h,n}^{(j)}(-\xb), J_{h,n}^{(j)}(-\yb))|= O(1)$. Similarly, we have $\Ex |Q_2|^2 = O(|D_n|^{1-4|\ob|_0/d})$. Therefore, combining (\ref{eq:Q3}) and (\ref{eq:Q1}) and summing over $i,j$, we obtain (\ref{eq:E3-1}).

\vspace{0.5em}

To show (\ref{eq:E3-2}), without loss of generality, assume $|\omega_1| > \delta$. Then, by a modification of Lemma \ref{lemma:Cn}, one can show that for large $n \in \N$,
\begin{equation} \label{eq:Cbounds}
|C_{n}^{(j)}(\pm \xb)| = C |D_n|^{-1/2} |H_{hg_j,1}^{(n)}(\pm \xb)| \leq C |D_n|^{1/2} (A_1 x_1)^{-2} \leq C |D_n|^{1/2-2/d} \delta^{-2},
\end{equation}
for $\xb \in \ob + [-b_1,b_1] \times \cdots \times [-b_d, b_d]$.
Therefore, the bounds in (\ref{eq:Q3}) and (\ref{eq:Q1}) are respectively replaced by
\begin{equation*}
\sup_{\|\ob\|_{\infty} > \delta} |Q_3| \leq C \delta^{-2} |D_n|^{1-4/d}, \quad
\sup_{\|\ob\|_{\infty} > \delta} \Ex |Q_j|^2 \leq C \delta^{-2} |D_n|^{1-4/d}, \quad j \in \{1,2\}.
\end{equation*}
Thus, we obtain (\ref{eq:E3-2}). Altogether, we get the desired results.
\end{proof}

Lastly, we derive the convergence of the differences between the feasible and ideal statistics,
 $|\widehat{F}_{n,\bb}(\ob) - \widetilde{F}_{n,\bb}(\ob)|$ to zero, with rates.

\begin{theorem} \label{thm:KSDE-bound-c}
Suppose that Assumptions \ref{assum:A}, \ref{assum:B} (for $k=4$), \ref{assum:beta}, \ref{assum:beta2}(i), \ref{assum:D}, and \ref{assum:smooth} hold.
 Furthermore, we asume that the side lengths $\aB(n)$ satisfy (SL). Then, for any $\ob \in \R^{d}$ with $|\ob|_0 > d/4$, 
\begin{equation} \label{eq:E4-1}
|\widehat{F}_{n,\bb}(\ob) - \widetilde{F}_{n,\bb}(\ob)|= o_p(1), \quad n\rightarrow \infty.
\end{equation} If we further assume Assumption \ref{assum:beta2}(ii) for $r>4$ holds. Then, 
\begin{equation} \label{eq:E4-2}
\lim_{n\rightarrow \infty} \Ex |\widehat{F}_{n,\bb}(\ob) - \widetilde{F}_{n,\bb}(\ob)|^2 =0.
\end{equation}
Lastly, let $d \in \{1,2,3,4\}$. Then, for any fixed $\delta>0$,
\begin{equation} \label{eq:E4-3} 
\sup_{\|\ob\|_{\infty} > \delta} \Ex |\widehat{F}_{n,\bb}(\ob) - \widetilde{F}_{n,\bb}(\ob)|^2 
= O(|D_n|^{-1}(b_1 \cdots b_d)^{-1})+ O(|D_n|^{1-r/4}).
\end{equation}
\end{theorem}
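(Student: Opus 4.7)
The plan is to mirror the proofs of Theorems \ref{thm:KSDE-bound-a} and \ref{thm:KSDE-bound-b} almost verbatim, with the theoretical DFT $\underline{J}_{h,n}$ replaced by the ideal DFT $\underline{\widetilde{J}}_{h,n}$. Starting from $|\widehat{F}_{n,\bb}(\ob) - \widetilde{F}_{n,\bb}(\ob)| \leq \int_{\R^d} K_{\bb}(\ob - \xb)\,|\widehat{I}_{h,n}(\xb) - \widetilde{I}_{h,n}(\xb)|\,d\xb$ and applying the same telescoping decomposition as in (\ref{eq:Idiff-00}), the key observation is that Lemma \ref{lemma:Jbound} bounds $|\underline{\widehat{J}}_{h,n} - \underline{\widetilde{J}}_{h,n}|$ in exactly the same form as before, because both DFTs are built by plugging $\widehat{\bbeta}_n$ and $\bbeta_0$ into the same parametric family $\underline{\lambda}(\cdot;\bbeta)$. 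Consequently every estimate attached to the $|\underline{\widehat{J}}_{h,n} - \underline{\widetilde{J}}_{h,n}|$-factor, in particular the Parseval-type bound (\ref{eq:KHbound}) and the moment estimate (\ref{eq:EJ-bound}), carries over unchanged from the correctly specified case.

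The only new work concerns the $|\underline{\widetilde{J}}_{h,n}|$-factor generated by the telescoping. Using (\ref{eq:IJ-diff}) I would decompose $\widetilde{J}_{h,n}^{(j)} = J_{h,n}^{(j)} + C_n^{(j)}$ and split $|\underline{\widetilde{J}}_{h,n}|^k \leq C\bigl(|\underline{J}_{h,n}|^k + \sum_j |C_n^{(j)}|^k\bigr)$. Under Assumption \ref{assum:smooth} each $h g_j$ has continuous second partial derivatives supported in $[-1/2,1/2]^d$, so Lemma \ref{lemma:Cn} yields $|C_n^{(j)}(\xb)| \leq C |D_n|^{1/2 - 2|\xb|_0/d}\bigl(\prod_{j \in i(\xb)} x_j\bigr)^{-2}$. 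For $\xb$ in the shrinking support of $K_{\bb}(\ob - \cdot)$ and large $n$, one has $|\xb|_0 \geq |\ob|_0$ and the product in the denominator is bounded away from zero; hence the integrand picks up a uniform pointwise factor of $|D_n|^{1/2 - 2|\ob|_0/d}$. When $|\ob|_0 > d/4$ this factor is $o(1)$, so the proofs of (\ref{eq:E4-1}) and (\ref{eq:E4-2}) will then follow the arguments of Theorems \ref{thm:KSDE-bound-a} and \ref{thm:KSDE-bound-b} line-by-line.

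For the uniform bound (\ref{eq:E4-3}) on $\{\ob : \|\ob\|_\infty > \delta\}$ one can no longer rely on $|\ob|_0$ being large; instead I would use the cruder pointwise estimate (\ref{eq:Cbounds}), namely $|C_n^{(j)}(\xb)| \leq C \delta^{-2} |D_n|^{1/2 - 2/d}$ for $\xb$ in the support of $K_{\bb}(\ob - \cdot)$. When $d \leq 4$ this is $O(1)$, hence $\sup_{\|\ob\|_\infty > \delta}\Ex|\underline{\widetilde{J}}_{h,n}(\ob)|^4 = O(1)$. Substituting this into the Cauchy--Schwarz step of the proof of Theorem \ref{thm:KSDE-bound-b}, the remaining estimates are unchanged and the stated rate $O(|D_n|^{-1}(b_1 \cdots b_d)^{-1}) + O(|D_n|^{1 - r/4})$ is recovered directly.

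The main obstacle is purely combinatorial exponent bookkeeping: one must verify that every occurrence of the $|C_n^{(j)}|^k$-factor, after integration against $K_{\bb}$ and against the stochastic factor $|\underline{\widehat{J}}_{h,n} - \underline{\widetilde{J}}_{h,n}|^k$, contributes a term of order no larger than the dominant $|D_n|^{-1}(b_1 \cdots b_d)^{-1}$ and $|D_n|^{1 - r/4}$ pieces. This balance is what forces the dimension restriction $d \leq 4$ in (\ref{eq:E4-3}) and the support restriction $|\ob|_0 > d/4$ in (\ref{eq:E4-1})--(\ref{eq:E4-2}); both conditions translate the integration-by-parts decay in Lemma \ref{lemma:Cn} into the smallness needed to absorb the misspecification bias carried by $\underline{C}_n$.
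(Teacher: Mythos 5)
Your proposal is correct and follows essentially the same route as the paper: the paper likewise telescopes via (\ref{eq:Idiff-00}), reuses Lemma \ref{lemma:Jbound} and the bound (\ref{eq:KHbound}) for the $|\underline{\widehat{J}}_{h,n}-\underline{\widetilde{J}}_{h,n}|$ factor, and controls the extra misspecification term through the decomposition $\underline{\widetilde{J}}_{h,n}=\underline{J}_{h,n}+\underline{C}_n$ with Lemma \ref{lemma:Cn} and (\ref{eq:Cbounds}) (packaged there as Theorem \ref{theorem:Fij-L2}), which is exactly where the restrictions $|\ob|_0>d/4$ and $d\le 4$ enter. The only cosmetic difference is that you absorb $\underline{C}_n$ into a uniform bound on $\Ex|\underline{\widetilde{J}}_{h,n}|^4$ before the Cauchy--Schwarz step, whereas the paper keeps the cross term $\Ex|\underline{\widehat{J}}_{h,n}-\underline{\widetilde{J}}_{h,n}|^2\,|\underline{C}_n|^2$ explicit; both yield the same rates.
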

\begin{proof}
First, we show (\ref{eq:E4-1}). By using (\ref{eq:Fbound11})--(\ref{eq:KKbound}), we have
\begin{equation*}
\begin{aligned}
|\widehat{F}_{n,\bb}(\ob) - \widetilde{F}_{n,\bb}(\ob)| 
&\leq o_p(1) \left( o_p(1) + C \sum_{j=1}^{m}\int_{\R^{d}} K_{\bb}(\ob - \xb) \widetilde{I}_{h,n}^{(j,j)}(\xb) d\xb \right) \\
&\leq o_p(1) \left( O_p(1) + C \sum_{j=1}^{m} \int_{\R^{d}} K_{\bb}(\ob - \xb) \{\widetilde{I}_{h,n}^{(j,j)}(\xb) - I_{h,n}^{(j,j)}(\xb)\} d\xb \right).
\end{aligned}
\end{equation*}
Here, the second inequality follows from adding and subtracting the term $\int_{\R^{d}} K_{\bb}(\ob-\xb) I_{h,n}^{(j,j)}(\xb) d\xb$, which is $O_p(1)$. Since the second term in the parentheses is $o_p(1)$ due to Theorem \ref{theorem:Fij-L2}, we obtain (\ref{eq:E4-1}).

Next, to show (\ref{eq:E4-2}), from the proof of Theorem \ref{thm:KSDE-bound-b}, we have
\begin{equation*}
\begin{aligned}
\Ex |\widehat{F}_{n,\bb}(\ob) - \widetilde{F}_{n,\bb}(\ob)|^2
&\leq O(|D_n|^{-1}(b_1 \cdots b_d)^{-1}) + O(|D_n|^{1-r/4}) \\
&\quad + C \int_{\R^{d}} K_{\bb}(\ob - \xb) 
\Ex \big| \underline{\widehat{J}}_{h,n}(\xb) - \underline{\widetilde{J}}_{h,n}(\xb) \big|^2
\big|\underline{\widetilde{J}}_{h,n}(\xb) - J_{h,n}(\xb) \big|^2 d\xb,
\end{aligned}
\end{equation*}
where the bound $O(|D_n|^{-1}(b_1 \cdots b_d)^{-1}) + O(|D_n|^{1-r/4})$ is uniform over $\ob \in \R^d$. Now, let $\|\ob\|_{\infty} > \delta$. By using (\ref{eq:Cbounds}), for large enough $n \in \N$,
\begin{equation*}
\begin{aligned}
& \int_{\R^{d}} K_{\bb}(\ob - \xb) 
\Ex \big| \underline{\widehat{J}}_{h,n}(\xb) - \underline{\widetilde{J}}_{h,n}(\xb) \big|^2
\big|\underline{\widetilde{J}}_{h,n}(\xb) - J_{h,n}(\xb) \big|^2 d\xb \\
&\quad \leq C |D_n|^{1-4/d} \int_{\R^{d}} K_{\bb}(\ob - \xb) 
\Ex \big| \underline{\widehat{J}}_{h,n}(\xb) - \underline{\widetilde{J}}_{h,n}(\xb) \big|^2 \\
&\quad = O(|D_n|^{1-4/d} \times |D_n|^{-1} (b_1 \cdots b_d)^{-1}),
\end{aligned}
\end{equation*}
where the bound is uniform over $\ob$ with $\|\ob\|_{\infty}>\delta$, and the second identity is due to (\ref{eq:KHbound}).

Substituting this into the bound for $\Ex |\widehat{F}_{n,\bb}(\ob) - \widetilde{F}_{n,\bb}(\ob)|^2$, we obtain
\begin{equation*}
\sup_{\|\ob\|_{\infty} > \delta} \Ex |\widehat{F}_{n,\bb}(\ob) - \widetilde{F}_{n,\bb}(\ob)|^2
= O(|D_n|^{-1}(b_1 \cdots b_d)^{-1}) + O(|D_n|^{1-r/4}).
\end{equation*}
This proves (\ref{eq:E4-3}). Altogether, we get the desired results.
\end{proof}

\section{Supplement on the simulations} \label{sec:sim-add}

In this section, we provide greater details on the simulation settings in Section \ref{sec:sim} and conduct additional simulations.

\subsection{Details on the simulation settings} \label{sec:sim-detail}

Recall the bivariate Cox process $\underline{X} = (X_1, X_2)$ on the domain $D = [-A/2,A/2]^2$ in Section \ref{sec:dgp}. The corresponding latent intensity field $(\Lambda_1(\xb), \Lambda_2(\xb))^\top$ of $\underline{X}$ is given by
\begin{equation*}
\Lambda_i(\xb) = \lambda^{(i)}(\xb/A) S_i(\xb) Y_i(\xb), \quad \xb \in D.
\end{equation*}
Here, $\lambda^{(i)}(\xb/A) = \Ex[\Lambda_i(\xb)]$ denotes the first-order intensity of $X_i$, in alignment with the asymptotic framework in Definition \ref{def:infill}(ii).

Let $\Phi_1, \Phi_2, \Phi_3$ be independent homogeneous Poisson point processes on $\R^2$ with intensities $\kappa_1, \kappa_2, \kappa_3>0$. The shot-noise field is defined as
\begin{equation*}
S_i(\cdot) = \kappa_i^{-1}\sum_{\vbb \in \Phi_i} \phi_i(\|\cdot - \vbb\|), \quad i \in \{1,2\},
\end{equation*}
where $\phi_i(u)$ denotes the isotropic Gaussian kernel with variance parameter $\sigma_i^2 \in (0,\infty)$ given by
\begin{equation} \label{eq:Gauss-ker}
\phi_i(r) = (2\pi \sigma_i^2)^{-1/2} \exp \left\{-r^2/ (2\sigma_i^2) \right\},  \quad r \in (0, \infty).
\end{equation}
Here, $S_i$ captures the intra-specific clustering induced by $\Phi_i$.

Next, the compound field is defined as
\begin{equation*}
Y_i(\cdot) = \exp{\left(-\sum_{j=1}^{3} \frac{\kappa_{j} \xi_{j \rightarrow i}}{\phi_{j}(0)}\right)}
\prod_{j=1}^{3} \prod_{\vbb \in \Phi_j} \left\{ 1 + \xi_{j \rightarrow i} \frac{\phi_j(\|\cdot - \vbb\|)}{\phi_j(0)}\right\}, \quad i \in \{1,2\}.
\end{equation*}
Here, $\phi_{3}(\cdot)$ is a Gaussian kernel as in (\ref{eq:Gauss-ker}) with variance parameter $\sigma_3^2 \in (0,\infty)$ differs from $\sigma_1^2$ and $\sigma_2^2$. The interaction parameter $\xi_{j \rightarrow i} \in (-1,\infty)$ indicates whether the offspring process $X_i$ is repelled by ($\xi_{j \rightarrow i}<0$) or clustered around ($\xi_{j \rightarrow i}>0$) the parent process $\Phi_{j}$. We set $\xi_{i \rightarrow i} = 0$ since $\Phi_i$ affects $X_i$ through the shot-noise field. Therefore, $Y_i(\cdot)$ combines the effects of all other parent processes on the $i$th offspring.

For simulations, we consider the following three parameter combinations, namely M1--M3. 
\begin{description}
\item
[M1: Homogeneous isotropic process exhibiting inter-species clustering.] 
We set the constant first-order intensities $(\lambda^{(1)}, \lambda^{(2)}) = (0.5, 1.5)$, thus the model is SOS. For the shot-noise fields, we set $(\kappa_1, \kappa_2, \kappa_3) = (0.25, 0.75, 0.2)$ and $(\sigma_1, \sigma_2, \sigma_3) = (0.6, 0.3, 1)$. For the compound fields, we set $(\xi_{1 \rightarrow 2}, \xi_{2 \rightarrow 1}, \xi_{3 \rightarrow 1}, \xi_{3 \rightarrow 2}) = (0.7, 0.9, 0.3, 0.1)$. Since both $\xi_{1 \rightarrow 2}$ and $\xi_{2 \rightarrow 1}$ are positive, the model exhibits clustering between $X_1$ and $X_2$.
	
\item[M2: Inhomogeneous isotropic process exhibiting inter-species clustering.] 
We use the same parameters for the shot-noise and compound fields as in M1, but the first-order intensities of $X_1$ and $X_2$ are given by
\begin{equation*}
\lambda^{(1)}(\xb) = 3\exp \left\{ -2(x_1^2 + x_2^2) \right\}, \quad
\lambda^{(2)}(\xb) = 2\exp\left\{-2(x_1^2 - x_2^2)\right\},
\end{equation*}
$\xb = (x_1, x_2)^\top \in [-1/2, 1/2]^2$. Therefore, the model is inhomogeneous and exhibits clustering between the two processes.
	
\item[M3: Inhomogeneous isotropic process exhibiting inter-species repulsion.]  
The first-order intensities and shot-noise fields are the same as in M2, but the interaction parameters of the compound fields are set to $(\xi_{1 \rightarrow 2}, \xi_{2 \rightarrow 1}, \xi_{3 \rightarrow 1}, \xi_{3 \rightarrow 2}) = (-0.7, -0.9, 0.3, 0.1)$. Since $\xi_{1 \rightarrow 2}, \xi_{2 \rightarrow 1} < 0$, this model exhibits repulsive behavior between $X_1$ and $X_2$.
\end{description}
 A single realization of M1--M3 is shown in Figure \ref{fig:simdata} below.

\begin{figure}[ht!]
\centering
\includegraphics[width=\textwidth]{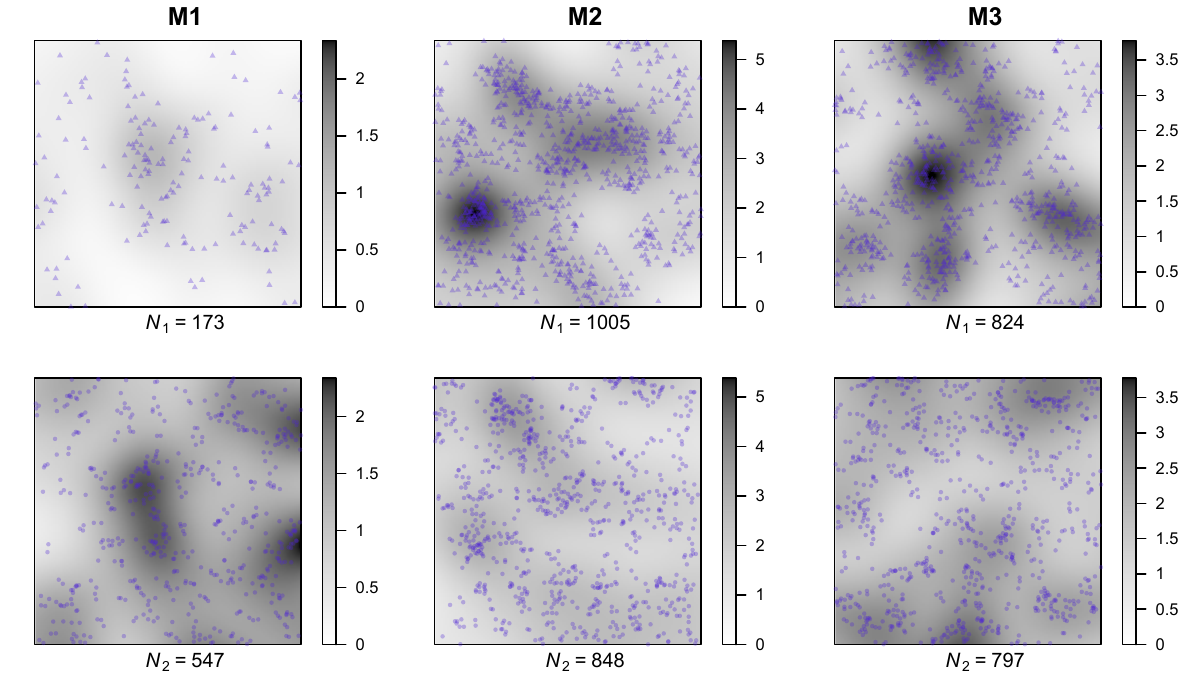}
\caption{
A single realization of M1 (left)–M3 (right) on the domain $[-10, 10]^2$. The triangles (top) indicate the first process, and the crosses (bottom) indicate the second process. $N_1$ and $N_2$ denote the number of points in the first and second processes, respectively, and the grayscale surface shows the kernel estimate of the intensity function for each process.
}
\label{fig:simdata}
\end{figure}

The pseudo-spectrum of each model above is calculated numerically using the closed-form expressions of the marginal and cross PCF (which is equal to $L_2^{(i,j)}(\xb) + 1$) given by \cite{p:jal-15}, Equations (9) and (10): for $i \neq j$,
\begin{equation*}
\begin{aligned}
L_2^{(i,i)}(\xb) &= \bigg\{ 1 + \frac{\phi_i(0)^2}{\kappa_i} c_{\ell}(\xb) \bigg\} \exp \bigg\{ \sum_{\ell} \kappa_{\ell} \xi_{\ell \rightarrow i}^2 c_{\ell}(\xb) \bigg\} - 1 \\
L_2^{(i,j)}(\xb) &=  
\bigg\{ 1 + \xi_{i \rightarrow j} \phi_i(0) c_{i}(\xb) \bigg\} 
\bigg\{ 1 + \xi_{j \rightarrow i} \phi_j(0) c_{j}(\xb) \bigg\}
\exp \bigg\{ \sum_{\ell} \kappa_{\ell} \xi_{\ell \rightarrow i} \xi_{\ell \rightarrow j} c_{\ell}(\xb) \bigg\} - 1,
\end{aligned}
\end{equation*}
where $c_i(\xb) = \phi_i(0)^{-2} \int_{\R^{2}} \phi_i(\xb + \ubb) \phi_i(\ubb) d\ubb$.

Next, we provide greater details on the practical issues for evaluating the periodogram and kernel spectral density as discussed in Section \ref{sec:sim-est}. When computing the DFT $\underline{\mathcal{J}}_{h,n}(\ob)$, we use the separable data taper $h_a(\xb) = h_a(x_1) h_a(x_2)$ with $a = 0.025$, where for $a \in (0, 1/2)$,
\begin{equation} \label{eq:ha-25} 
h_a(x) = 
\begin{cases}
(x+\frac{1}{2})/a - \frac{1}{2\pi} \sin\left(2\pi (x+\frac{1}{2}) / a \right), & -\frac{1}{2} \leq x < -\frac{1}{2}+a, \\
1, & -\frac{1}{2}+a \leq x < \frac{1}{2}-a, \\
h_a(-x), & \frac{1}{2}-a \leq x \leq \frac{1}{2}.
\end{cases}
\end{equation} 
The above data taper satisfies certain smoothness conditions, which also have been used in the simulation studies of YG24.
When calculating the feasible periodogram, the first-order intensities $\lambda^{(i)}$ ($i \in \{1,2\}$) of M2 and M3 are estimated by fitting the log-linear regression model $\lambda(\xb; \bbeta) = \exp(\beta_0 + \beta_1 x_1^2 + \beta_2 x_2^2)$, which corresponds to the correctly specified model. Here, the parameter $\bbeta$ is estimated using the method in \cite{p:waa-07}. For M1, since the model is homogeneous, we fit an intercept-only model by setting $\beta_1 = \beta_2 = 0$.

\subsection{Simulations under misspecified bias} \label{sec:sim-mis}

According to Corollary \ref{coro:KSDE2}, even though the first-order intensity is incorrectly estimated when demeaning the DFT, our kernel spectrum estimator $\widehat{F}_{n,\bb}(\ob)$ still consistently estimates the pseudo-spectrum for frequencies away from the origin. In this section, we demonstrate this result through a set of simulations.

For our data-generating process, we consider models M2 and M3 in Section \ref{sec:sim}, where both models exhibit inhomogeneous first-order intensities ((see, Appendix \ref{sec:sim-detail} for the detailed parameter settings). In contrast, we use the constant first-order intensity model $\underline{\lambda}(\xb;\bbeta) = \bbeta = (\beta_1, \beta_2)^\top \in [0,\infty)^2$ to estimate the bias of the DFT. Therefore, the resulting feasible periodogram is calculated based on the incorrect intensity model. Specifically, we consider two variants of the periodogram estimator: one based on $\widehat{\bbeta}_n = |D_n|^{-1} \underline{N}(D_n)$ (denoted $\widehat{I}_{\text{SOS}}$) and another based on $\widehat{\bbeta}_n = (0,0)^\top$ (denoted $\widehat{I}_{0}$). The former fits the SOS model (although the true model is inhomogeneous), while the latter ignores the debiasing step. Let $\widehat{F}_{\text{SOS}}$ and $\widehat{F}_{0}$ be the kernel smoothed spectral density estimators corresponding to $\widehat{I}_{\text{SOS}}$ and $\widehat{I}_{0}$, respectively. Lastly, we use the same set of 500 replications of point pattern data used in Section \ref{sec:sim}.

\subsection{Results} \label{sec:sim-mis-res}

\noindent \textit{Bandwidth selection results.} \hspace{0.1em}
First, we study how the misspecified model affects the selected bandwidth of its kernel estimator. Here, we use the cross-validation method described in Section \ref{sec:CV} to select the bandwidths of $\widehat{F}_{\text{SOS}}$ and $\widehat{F}_{0}$. However, according to Remark \ref{rmk:near-origin}, both $\widehat{F}_{\text{SOS}}$ and $\widehat{F}_{0}$ may show a large peak at frequencies near the origin. Therefore, to avoid artifacts caused by model misspecification, we evaluate the cross-validated spectral divergence $L(b)$ on $W_{o} = \{\ob: 0.1\pi \leq \|\ob\|_{\infty} \leq 1.5\pi \}$, which omits some low frequency values.

Table \ref{table:simbandapp} below displays the summary statistics of the selected bandwidths from 500 replications. For comparison, we also provide the bandwidth results for the kernel spectral density estimator based on the correctly specified intensity model (denoted $\widehat{F}_{\text{correct}}$) used in Section \ref{sec:sim}. For convenience, we denote the bandwidth for $\widehat{F}_{\text{correct}}$, $\widehat{F}_{\text{SOS}}$, and $\widehat{F}_{0}$ by $b_{\text{correct}}$, $b_{\text{SOS}}$, and $b_{0}$, respectively.

\begin{table}[h]	
	\centering
	\begin{tabular}[t]{ccccccccccc}
		\multicolumn{2}{c}{ } & \multicolumn{3}{c}{$D = [-5,5]^2$} & \multicolumn{3}{c}{$D = [-10,10]^2$} & \multicolumn{3}{c}{$D = [-20,20]^2$} \\
		\cmidrule(l{3pt}r{3pt}){3-5} \cmidrule(l{3pt}r{3pt}){6-8} \cmidrule(l{3pt}r{3pt}){9-11}
		\multirow{-2}{*}{Model} & \multirow{-2}{*}{Statistics} & $b_\text{correct}$ & $b_\text{SOS}$ & $b_0$ & $b_\text{correct}$ & $b_\text{SOS}$ & $b_0$ & $b_\text{correct}$ & $b_\text{SOS}$ & $b_0$\\
		\hline \hline
		& Q1 & 1.06 & 1.06 & 1.06 & 0.54 & 0.55 & 0.55 & 0.28 & 0.28 & 0.28\\
		
		& Q2 & 1.18 & 1.19 & 1.16 & 0.60 & 0.60 & 0.58 & 0.30 & 0.30 & 0.30\\
		
		& Q3 & 1.55 & 1.55 & 1.51 & 0.77 & 0.78 & 0.73 & 0.32 & 0.38 & 0.30\\
		
		\multirow{-4}{*}{M2} & Mean & 1.26 & 1.30 & 1.22 & 0.66 & 0.66 & 0.63 & 0.31 & 0.32 & 0.31\\
		\cmidrule{1-11}
		& Q1 & 1.08 & 1.08 & 1.06 & 0.55 & 0.55 & 0.55 & 0.28 & 0.28 & 0.28\\
		
		& Q2 & 1.18 & 1.21 & 1.13 & 0.60 & 0.60 & 0.58 & 0.30 & 0.30 & 0.30\\
		
		& Q3 & 1.53 & 1.55 & 1.28 & 0.75 & 0.78 & 0.60 & 0.30 & 0.30 & 0.30\\
		
		\multirow{-4}{*}{M3} & Mean & 1.29 & 1.31 & 1.16 & 0.65 & 0.65 & 0.59 & 0.30 & 0.31 & 0.30\\
		\hline
	\end{tabular}
	\caption{The three quartiles and average of cross-validated bandwith for the various kernel smooth estimators.}
	\label{table:simbandapp}
\end{table}
Notice that as $|D|$ increases, both $b_{\text{SOS}}$ and $b_{0}$ tend to decrease and resemble the distribution of $b_{\text{correct}}$. Since the simulation results in Section \ref{sec:sim-results} suggest that $b_{\text{correct}}$ may attain the optimal convergence rate of $|D_n|^{-1/(d+4)}$, the same holds for $b_{\text{SOS}}$ and $b_{0}$. This provides empirical evidence for the theoretical findings in Corollary \ref{coro:MSEb-rate} (see also Remark \ref{rmk:sub-opt}). Moreover, it is intriguing that for all models and observation windows, the value of $b_{\text{SOS}}$ is larger than that of $b_{0}$, while $b_{\text{correct}}$ lies between $b_{\text{SOS}}$ and $b_{0}$. However, theoretical justification of these phenomena remains an open question.

\vspace{0.5em}

\noindent \textit{Estimation accuracy.} \hspace{0.1em}
Moving on, we evaluate the global performance of $\widehat{F}_{\text{SOS}}$ and $\widehat{F}_{0}$ using two metrics, namely IBIAS$^2$ and IMSE, defined as in Section \ref{sec:sim-results}. Table \ref{table:simsummaryapp} summarizes the estimation results. As a benchmark, we also include the accuracy results for $\widehat{F}_{\text{correct}}$. 

{\footnotesize
	\begin{table}[h]
		\centering
		\begin{tabular}[t]{ccccccccc}
			\multicolumn{3}{c}{} & \multicolumn{2}{c}{$\widehat{F}_\text{correct}$} & \multicolumn{2}{c}{$\widehat{F}_\text{SOS}$} & \multicolumn{2}{c}{$\widehat{F}_0$} \\
			\cmidrule(l{3pt}r{3pt}){4-5} \cmidrule(l{3pt}r{3pt}){6-7} \cmidrule(l{3pt}r{3pt}){8-9}
			\multirow{-2}{*}{Model} & \multirow{-2}{*}{\shortstack{Pseudo-\\spectrum}} & \multirow{-2}{*}{Window} & IBIAS$^2$ & IMSE & IBIAS$^2$ & IMSE & IBIAS$^2$ & IMSE\\
			\hline \hline
			&  & $[-5,5]^2$ & 0.01 & 0.38 & 0.01 & 0.37 & 0.10 & 0.55\\
			
			&  & $[-10,10]^2$ & 0.00 & 0.16 & 0.00 & 0.17 & 0.07 & 0.25\\
			
			& \multirow{-3}{*}{$F^{(1,1)}_h$} & $[-20,20]^2$ & 0.00 & 0.13 & 0.00 & 0.13 & 0.01 & 0.14\\
			\cmidrule{2-9}
			&  & $[-5,5]^2$ & 0.00 & 0.24 & 0.00 & 0.23 & 0.29 & 0.64\\
			
			&  & $[-10,10]^2$ & 0.00 & 0.12 & 0.00 & 0.13 & 0.28 & 0.45\\
			
			& \multirow{-3}{*}{$F^{(2,2)}_h$} & $[-20,20]^2$ & 0.00 & 0.11 & 0.00 & 0.11 & 0.08 & 0.20\\
			\cmidrule{2-9}
			&  & $[-5,5]^2$ & 0.02 & 5.48 & 0.02 & 5.19 & 0.48 & 6.16\\
			
			&  & $[-10,10]^2$ & 0.01 & 2.55 & 0.01 & 2.67 & 0.36 & 3.23\\
			
			\multirow{-9}{*}{M2} & \multirow{-3}{*}{$F^{(1,2)}_h$} & $[-20,20]^2$ & 0.00 & 2.14 & 0.01 & 2.13 & 0.07 & 2.27\\
			\cmidrule{1-9}
			&  & $[-5,5]^2$ & 0.01 & 0.31 & 0.01 & 0.32 & 0.11 & 0.51\\
			
			&  & $[-10,10]^2$ & 0.00 & 0.15 & 0.00 & 0.15 & 0.07 & 0.25\\
			
			& \multirow{-3}{*}{$F^{(1,1)}_h$} & $[-20,20]^2$ & 0.00 & 0.13 & 0.00 & 0.13 & 0.01 & 0.14\\
			\cmidrule{2-9}
			&  & $[-5,5]^2$ & 0.01 & 0.19 & 0.00 & 0.20 & 0.36 & 0.66\\
			
			&  & $[-10,10]^2$ & 0.00 & 0.12 & 0.00 & 0.12 & 0.28 & 0.43\\
			
			& \multirow{-3}{*}{$F^{(2,2)}_h$} & $[-20,20]^2$ & 0.00 & 0.11 & 0.00 & 0.11 & 0.07 & 0.19\\
			\cmidrule{2-9}
			&  & $[-5,5]^2$ & 0.10 & 34.83 & 0.10 & 33.57 & 1.25 & 41.72\\
			
			&  & $[-10,10]^2$ & 0.05 & 20.46 & 0.05 & 20.25 & 0.78 & 23.58\\
			
			\multirow{-9}{*}{M3} & \multirow{-3}{*}{$F^{(1,2)}_h$} & $[-20,20]^2$ & 0.03 & 18.50 & 0.03 & 18.38 & 0.13 & 19.40\\
			\bottomrule
		\end{tabular}
		\caption{IBIAS$^2$ and IMSE for the various kernel smooth estimators.}
		\label{table:simsummaryapp}
	\end{table}
}

Analogous to the results in Table \ref{table:simsummary}, the IBIAS$^2$ and IMSE for both $\widehat{F}_{\text{SOS}}$ and $\widehat{F}_{0}$ tend to converge to zero as $|D|$ increases. This verifies the robustness results of our estimator as in Corollary \ref{coro:KSDE2}. Moreover, the IBIAS$^2$ and IMSE for $\widehat{F}_{\text{SOS}}$ are uniformly smaller than the corresponding IBIAS$^2$ and IMSE of $\widehat{F}_{0}$, with the former being a close contender to the benchmark estimator $\widehat{F}_{\text{correct}}$. By definition, $\widehat{I}_{\text{SOS}}$ estimates the bias of the DFT using the constant intensity $\widehat{\lambda}(\xb) = |D_n|^{-1} (N_1(D_n), N_2(D_n))^\top$, which is a $|D_n|^{1/2}$-consistent estimator of $(\int_{[-1/2,1/2]^2} \lambda^{(1)}(\xb) d\xb , \int_{[-1/2,1/2]^2} \lambda^{(2)}(\xb) d\xb )^{\top}$. Therefore, under the SOS scheme, the mean effect of the true spatially varying intensity function $\underline{\lambda}(\cdot)$ is removed when calculating the centered DFT. In contrast, there is no bias correction when calculating $\widehat{I}_{0}$. Consequently, the simulation results in Table \ref{table:simsummaryapp} suggest that even a simple model for $\underline{\lambda}(\cdot)$ yields an improvement in the pseudo-spectrum estimator. The theoretical justification of this observations will be a good avenue for future research.

\section{Supplement on the real data applications} \label{sec:bci-add}

\subsection{Details on the real data analysis} \label{sec:DA}

In Section \ref{sec:data}, we consider the point patterns of five tree species in the tropical forest of Barro Colorado Island (BCI), namely \textit{Capparis frondosa} ($X_1$, $n=3{,}112$); \textit{Hirtella triandra} ($X_2$, $n=4{,}552$); \textit{Protium panamense} ($X_3$, $n=3{,}119$); \textit{Protium tenuifolium} ($X_4$, $n=3{,}091$); and \textit{Tetragastris panamensis} ($X_5$, $n=4{,}961$). The point patterns of these species are displayed in the top panels of Figure \ref{fig:bci_lambda}.

\begin{figure}[ht!]
	\centering
	\includegraphics[width=1.02\textwidth]{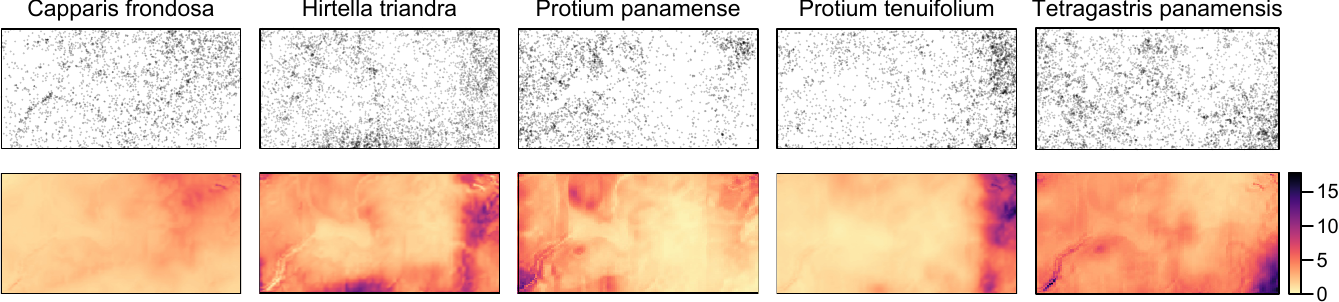}
	\caption{Top: Point patterns of five tree species. Bottom: Heatmap of the estimated first-order intensity functions using the log-linear regression model.}
	\label{fig:bci_lambda}
\end{figure}

When calculating the periodogram, we fit the log-linear first-order intensity model using ten covariates: three geolocational covariates ($x$ coordinates, $y$ coordinates, and their interaction term) and seven environmental covariates (including elevation, gradient, and soil mineral contents) provided in \cite{p:jal-15}, Supporting Information. The estimated first-order intensity functions are displayed in the bottom panels of Figure \ref{fig:bci_lambda}. A comparison between the actual point patterns (top panels) and the fitted intensities (bottom panels) indicates that the use of the log-linear model with covariate information effectively captures the first-order behavior of the point patterns. 

To consolidate this argument, we conduct a (rather informal) frequency domain goodness of fit check of the first-order intensity model. As discussed in Remark \ref{rmk:near-origin}, the behavior of the feasible periodogram provides information on whether the fitted intensity reflects the true intensity or not. Specifically, if the fitted intensity $\underline{\lambda}(\cdot; \widehat{\bbeta}_n)$ consistently estimates the true intensity $\underline{\lambda}(\cdot)$, then $\widehat{I}_{h,n}(\ob) = O_p(1)$ for any $\ob \in \R^d$. In contrast, if $\underline{\lambda}(\cdot; \widehat{\bbeta}_n)$ differs substantially from the true intensity, then $\widehat{I}_{h,n}(\ob)$ increases steeply as $\ob$ approaches to $\mathbf{0}$. 

Figure \ref{fig:per-BCI} below plots the radial averages of the feasible periodograms calculated from two different first-order intensity models: the log-linear model with ten covariates (top panels; denotes $\widehat{I}_{\text{LL}}$) and the constant intensity model (bottom panels; denotes $\widehat{I}_{\text{SOS}}$) as detailed in Appendix \ref{sec:sim-mis}.
\begin{figure}[ht!]
	\centering
	\includegraphics[width=\textwidth]{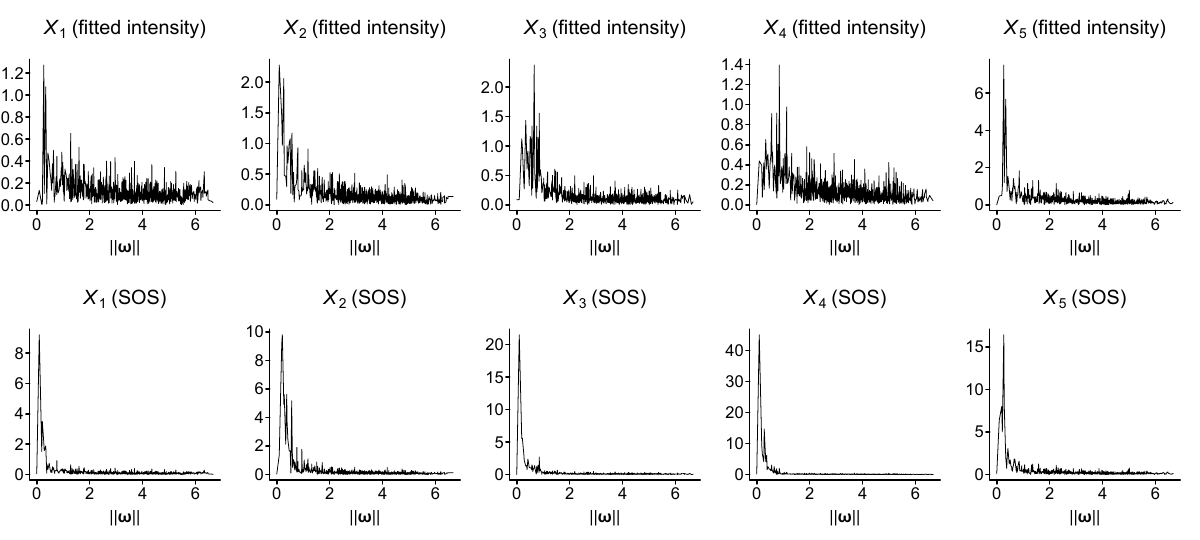}
	\caption{The radial average of the two peridograms of five spices in the BCI dataset. Top: periodogram calculated using the log-linear first-order intensity model with ten covariates. Bottom: periodogram calculated the constant intensity model.}
	\label{fig:per-BCI}
\end{figure}
For species $X_1$ through $X_4$, $\widehat{I}_{\text{LL}}$ appear to be bounded above, whereas $\widehat{I}_{\text{SOS}}$ exhibit pronounced peaks near the origin. These observations imply that the point patterns $X_1$ through $X_4$ are inhomogeneous and can be well explained by the log-linear regression model. In contrast, for $X_5$, both periodograms show a large peak near the origin, but the magnitude of the peak for $\widehat{I}_{\text{SOS}}$ is much larger than that of $\widehat{I}_{\text{LL}}$. This indicates that although the ten covariates may not be sufficient to fully explain the underlying first-order intensity of $X_5$, the inhomogeneous log-linear model still provides a more adequate description than the homogeneous model.


\subsection{Coherence and partial coherence analysis for SOIRS process} \label{sec:coh}
One of the important applications of the spectrum of a multivariate stationary stochastic process is coherence and partial coherence analysis. Suppose that $\underline{X} = (X_1, \dots, X_m)$ ($m \geq 3$) is an SOS spatial point process on $\R^d$. For $a \neq b$, we say that $X_a$ and $X_b$ are uncorrelated if $\cov(N_a(A), N_b(B)) = 0$ for any $A, B \in \mathcal{B}(\R^d)$. To define the partial uncorrelatedness, for $a \neq b \in \{1, \dots, m\}$, let  
\begin{equation*}
N_{a|-(a,b)}(A) = \int_{A} \left( \mu_{a} + \sum_{i \neq a,b} \int_{\R^{d}} \psi_{a}^{(i)}(\yb-\xb) \, N_{i}(d\xb) \right) d\yb
\end{equation*}
be the optimal linear predictor of $N_{a}(A)$ based on $N_{-(a,b)}(\cdot) = \{N_{i}(\cdot) : i \neq a,b\}$. Here, the coefficient $\mu_{a}$ and the smooth functions $\psi_{a}^{(i)}(\cdot)$ are determined by minimizing the mean squared prediction error $\Ex \left|N_{a}(A) - N_{a|-(a,b)}(A) \right|^2$ over all possible $\mu_{a}$ and $\psi_{a}^{(i)}(\cdot)$ (cf. \cite{b:dal-03}, Section 8.5). 
Using this notation, we say $X_{a}$ and $X_b$ are partially uncorrelated if  $\cov(N_{a}(A) - N_{a|-(a,b)}(A),  N_b(B) - N_{b|-(a,b)}(B)) = 0$
for any $A, B \in \mathcal{B}(\R^d)$. See \cite{p:dah-00b, p:eic-12} for the time series case and \cite{p:eck-16} for an extension to spatial point processes.  

\cite{p:dah-00b} showed that the spectrum and its inverse contain information about the uncorrelatedness and partial uncorrelatedness relationships of multivariate time series. Specifically, let  
\begin{equation} \label{eq:RD}
R^{(a,b)}(\ob) = \frac{|F^{(a,b)}(\ob)|}{\sqrt{F^{(a,a)}(\ob) F^{(b,b)}(\ob)}} 
\quad \text{and} \quad
D^{(a,b)}(\ob) = \frac{|F^{-(a,b)}(\ob)|}{\sqrt{F^{-(a,a)}(\ob) F^{-(b,b)}(\ob)}}
\end{equation} 
be the (magnitude) coherence and squared partial coherence of the two point processes $X_a$ and $X_b$, respectively, where $F^{-(a,b)}$ is the $(a,b)$-th element of the inverse spectrum $F(\ob)^{-1}$. Then, by extending the results of \cite{p:dah-00b} to the SOS spatial point process framework, it follows that both $R^{(a,b)}(\ob)$ and $D^{(a,b)}(\ob)$ take values in $[0,1]$, and that $X_a$ and $X_b$ are uncorrelated (resp., partially uncorrelated) if and only if $R^{(a,b)}(\ob) = 0$ (resp., $D^{(a,b)}(\ob) = 0$) for all $\ob \in \R^d$.

Now, we extend the coherence and partial coherence analysis to inhomogeneous spatial point processes. However, for the SOIRS process, it is unwieldy to represent the uncorrelatedness and partial uncorrelatedness relationships in the frequency domain. Therefore, we define new types of correlation structures for multivariate inhomogeneous processes. Recall the intensity reweighted process $\underline{\widetilde{X}} = (\widetilde{X}_1, \dots, \widetilde{X}_m)$ of $\underline{X}$ as in Definition \ref{def:k-IRS}.  

\begin{definition} \label{defin:uncor}
Let $\underline{X} = (X_1, \dots, X_m)$ be an $m$-variate SOIRS process. Then, for $a \neq b$, we say that $X_a$ and $X_b$ are uncorrelated after reweighting (u.a.r.) if the two SOS processes $\widetilde{X}_a$ and $\widetilde{X}_b$ are uncorrelated. Moreover, we say that $X_a$ and $X_b$ are partially uncorrelated after reweighting (partial u.a.r.) if $\widetilde{X}_a$ and $\widetilde{X}_b$ are partially uncorrelated.
\end{definition}

To represent the u.a.r. and partial u.a.r. relationships in the frequency domain, recall from Appendix \ref{sec:K-function} that the spectrum of $\underline{\widetilde{X}}$ is given by  
\begin{equation} \label{eq:Ftilde}
\widetilde{F}(\ob) = (2\pi)^{-d} I_{m} + \mathcal{F}^{-1}(L_2)(\ob).
\end{equation}
Let $\widetilde{R}^{(a,b)}(\ob)$ and $\widetilde{D}^{(a,b)}(\ob)$ be defined analogously to (\ref{eq:RD}), but with $F$ replaced by $\widetilde{F}$. Then, it follows immediately that  
\begin{equation}\label{eq:UAR} 
\begin{aligned}
& X_a \text{ and } X_b \text{ are u.a.r. if and only if } \widetilde{R}^{(a,b)}(\ob) \equiv 0, \\ 
& X_a \text{ and } X_b \text{ are partial u.a.r. if and only if } \widetilde{D}^{(a,b)}(\ob) \equiv 0.
\end{aligned}
\end{equation}

Next, we estimate $\widetilde{R}^{(a,b)}(\ob)$ and $\widetilde{D}^{(a,b)}(\ob)$ based on the observed point pattern on $D_n$, which follows the asymptotic framework in Definition \ref{def:infill}. To this end, we first estimate the inverse Fourier transform of $L_2$. Recall the pseudo-spectrum $F_h$ in (\ref{eq:f-IRS-mat}). By simple algebra, we have
\begin{equation*}
\mathcal{F}^{-1}(L_2)(\ob) = H_{h,2} 
\left( F_{h}(\ob) - (2\pi)^{-d} H_{h,2}^{-1} \diag (H_{h^2 \underline{\lambda}}) \right)
\oslash H_{h^2 \underline{\lambda}\cdot\underline{\lambda}^\top}.
\end{equation*}
Therefore, a natural estimator of $\mathcal{F}^{-1}(L_2)(\ob)$ is  
\begin{equation*}
\widehat{\mathcal{F}^{-1}(L_2)}(\ob) = H_{h,2} 
\left( \widehat{F}_{n,\bb}(\ob) - (2\pi)^{-d} H_{h,2}^{-1} \diag (H_{h^2 \underline{\widehat{\lambda}}}) \right)
\oslash H_{h^2 \underline{\widehat{\lambda}}\cdot\underline{\widehat{\lambda}}^\top},
\end{equation*}
where $\widehat{F}_{n,\bb}(\ob)$ is the kernel spectral density estimator and $\widehat{\underline{\lambda}}$ is the parametric estimator of the true intensity $\underline{\lambda}$. By substituting $\widehat{\mathcal{F}^{-1}(L_2)}(\ob)$ into (\ref{eq:Ftilde}) and applying the relationship in (\ref{eq:RD}), we obtain the estimators of $\widetilde{R}^{(a,b)}(\ob)$ and $\widetilde{D}^{(a,b)}(\ob)$, denoted by $\widehat{R}^{(a,b)}(\ob)$ and $\widehat{D}^{(a,b)}(\ob)$, respectively.

For the BCI dataset considered in Section \ref{sec:data}, we evaluate $\widehat{R}^{(a,b)}(\ob)$ and $\widehat{D}^{(a,b)}(\ob)$ on a uniform grid in $[-1.5\pi, 1.5\pi]^2$, where the grid size is greater than twice the bandwidth, $b_{\text{CV}} = 0.62$. Consequently, $\widehat{R}^{(a,b)}(\ob)$ and $\widehat{D}^{(a,b)}(\ob)$ are evaluated at 49 different frequencies. For each pair $(a,b)$, define  
\begin{equation} \label{eq:RD-max}
\widehat{R}^{(a,b)} = \max \{\widehat{R}^{(a,b)}(\ob)\}^2 \quad \text{and} \quad
\widehat{D}^{(a,b)} = \max \{\widehat{D}^{(a,b)}(\ob)\}^2,
\end{equation}
where the maximum is taken over the 49 frequencies described above. 

In Figure \ref{fig:bci_coherence} below, we plot the squared norm of the coherence estimator (upper diagonal) and partial coherence estimator (lower diagonal) against $\|\ob\|$, where the numbers in plots represent the corresponding maximum values (either $\widehat{R}^{(a,b)}$ or $\widehat{D}^{(a,b)}$).


\begin{figure}[ht!]
	\centering
	\includegraphics[width=\textwidth]{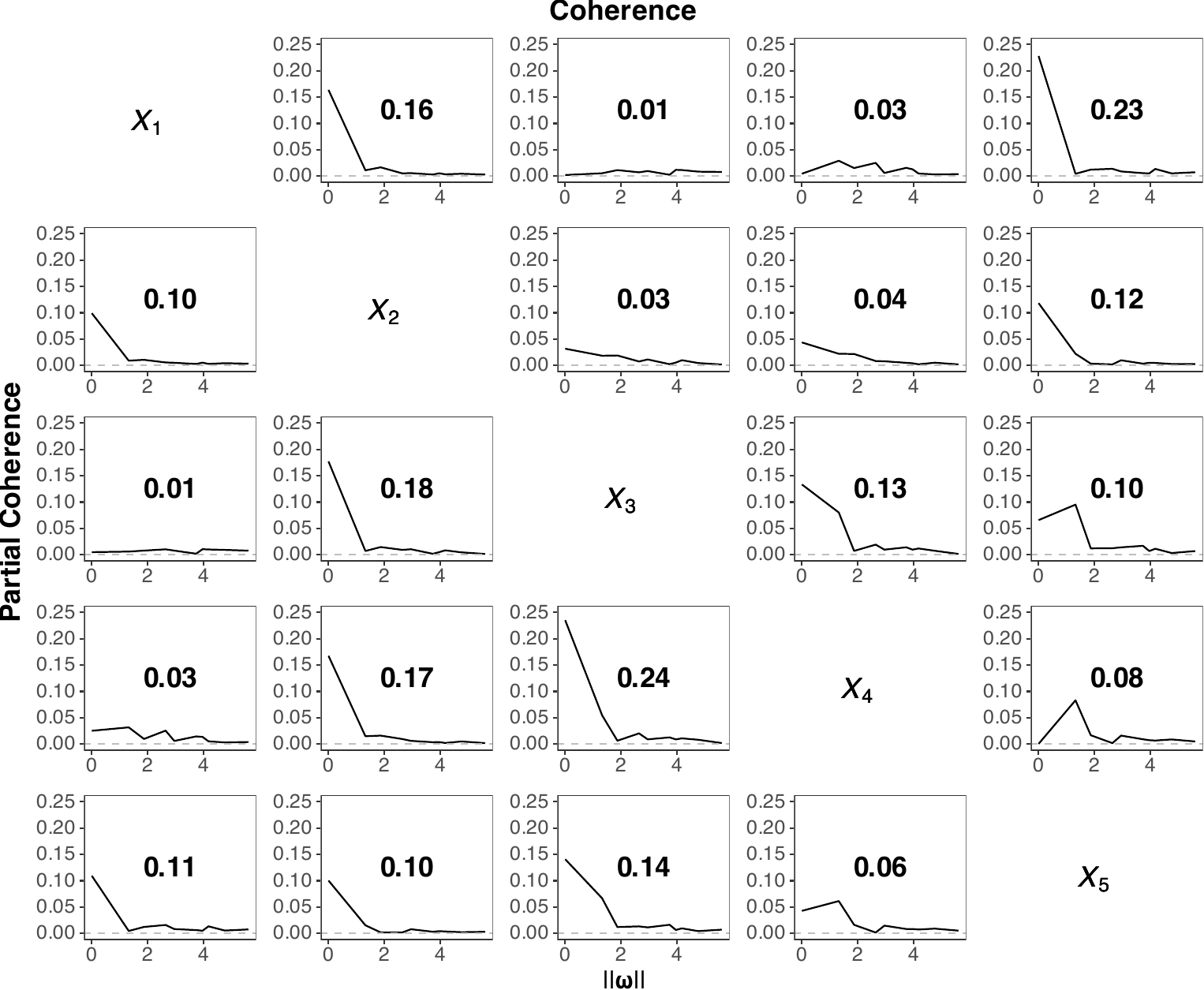}
	\caption{The squared coherence (upper diagonal) and partial coherence (lower diagonal) estimator of the BCI dataset plotted against $\|\ob\|$. The numer in plots represent the corresponding maximum values.}
	\label{fig:bci_coherence}
\end{figure}

From (\ref{eq:UAR}), the closer $\widehat{R}^{(a,b)}$ (resp., $\widehat{D}^{(a,b)}$) is to zero, the stronger the evidence that $X_a$ and $X_b$ are u.a.r. (resp., partial u.a.r.). However, the asymptotic behavior of $\widehat{R}^{(a,b)}(\ob)$ and $\widehat{D}^{(a,b)}(\ob)$ under the SOIRS framework is not yet available. Therefore, computing the $\alpha$-level confidence regions for $\widehat{R}^{(a,b)}$ and $\widehat{D}^{(a,b)}$ in (\ref{eq:RD-max}) under the null hypotheses of u.a.r. and partial u.a.r. is beyond the scope of this article (also be a promising direction for future research).

\end{document}